\DeclareFontFamily{U}{mathb}{\hyphenchar\font45}
\DeclareFontShape{U}{mathb}{m}{n}{
      <5> <6> <7> <8> <9> <10> gen * mathb
      <10.95> mathb10 <12> <14.4> <17.28> <20.74> <24.88> mathb12
      }{}
\DeclareSymbolFont{mathb}{U}{mathb}{m}{n}
\newtheorem{theorem}{Theorem}
\newtheorem{lemma}[theorem]{Lemma}
\newtheorem{proposition}[theorem]{Proposition}
\newtheorem{corollary}[theorem]{Corollary}
\newtheorem*{conjecture*}{Conjecture}
\theoremstyle{definition}
\newtheorem{definition}[theorem]{Definition}
\newcounter{example}
\newenvironment{example}{\refstepcounter{example}\par\medskip\noindent\textbf{Example~\theexample.}}{\hfill$\triangle$\par\medskip}
\newenvironment{example*}{\par\medskip\noindent\textbf{Example.}}{\hfill$\triangle$\par\medskip}
\newcounter{pfcounter}
\renewcommand{\thepfcounter}{\arabic{pfcounter}}
\newcounter{prooftree}
\newcommand{\pfref}[1]{[{\ref{#1}}]}
\newcommand{\id}{\ensuremath \mathrm{id}}%
\newcommand{\system}[1]{\ensuremath \mathbb{#1}}
\newcommand{\infers}{\vdash}
\newcommand{\forces}{\Vdash}
\newcommand{\manifests}{\vDash}
\newcommand{\frames}[1]{\ensuremath{\left\llbracket{#1}\right\rrbracket}}
\newcommand{\rulename}[1]{\RightLabel{\scriptsize{#1}}}
\newcommand{\pfline}{\; \refstepcounter{prooftree}[\alph{prooftree}]} 
\newcommand{\scpf}{\setcounter{prooftree}{0}}
\newcommand{\ruleId}[1]{\AxiomC{}\rulename{Id}\UnaryInfC{#1 $\infers$ #1}}
\newcommand{\ruleid}[1]{\AxiomC{}\rulename{Id}\UnaryInfC{#1 $\infers$ #1}}
\newcommand{\multrue}{\ensuremath{\mathbb{1}}}
\newcommand{\mt}{\multrue}
\newcommand{\RI}{R-\mt}
\newcommand{\LI}{L-\mt}
\newcommand{\muland}{\ensuremath{\otimes}}
\newcommand{\ma}{\muland}
\newcommand{\RO}{R-$\ma$}
\newcommand{\LO}{L-$\ma$}
\newcommand{\addand}{\ensuremath \with}
\newcommand{\sysT}{\system{T}}
\newcommand{\monoid}{\ensuremath\cdot}
\newcommand{\algM}{\ensuremath \mathcal{M}}
\newcommand{\mon}{\monoid}
\newcommand{\monM}{\system{M}}
\newcommand{\category}[1]{\ensuremath{\mathsf{#1}}}
\renewcommand{\hom}{\mathrm{Hom}}
\newcommand{\monoidal}{\ensuremath\boxtimes}
\newcommand{\catC}{\category{C}}
\DeclareMathSymbol{\boxcirc}{3}{mathb}{"65}
\definecolor{greenn}{rgb}{0,0.8,0.2}
\definecolor{bluue}{rgb}{0.3,0,0.7}
\newcommand{\bnote}[1]{}
\definecolor{edcolor}{rgb}{0.2,0.5,0.3}
\begin{document}

\title{Mathematical methods for resource-based type theories}
%\date{}
%\date{\today}

\author{Aarthi~Sundaram}\email{aarthims@umd.edu}
\affiliation{Joint Center for Quantum Information and Computer Science, University of Maryland, College Park}

\author{Brad~Lackey}\email{bclackey@umd.edu}
\affiliation{Joint Center for Quantum Information and Computer Science, University of Maryland, College Park}
\affiliation{Institute for Advanced Computer Studies, University of Maryland, College Park}
\affiliation{Departments of Computer Science and Mathematics, University of Maryland, College Park}
\affiliation{Quantum Architectures and Computation Group (QuArC), Microsoft, Redmond}

\begin{abstract}
With the wide range of quantum programming languages on offer now, efficient program verification and type checking for these languages presents a challenge -- especially when classical debugging techniques may affect the states in a quantum program. In this work, we make progress towards a program verification approach using the formalism of operational quantum mechanics and resource theories. We present a logical framework that captures two mathematical approaches to resource theory based on monoids (algebraic) and monoidal categories (categorical). We develop the syntax of this framework as an intuitionistic sequent calculus, and prove soundness and completeness of an algebraic and categorical semantics that recover these approaches. We also provide a cut-elimination theorem, normal form, and analogue of Lambek's lifting theorem for polynomial systems over the logics. Using these approaches along with the Curry-Howard-Lambek correspondence for programs, proofs and categories, this work lays the mathematical groundwork for a type checker for some resource theory based frameworks, with the possibility of extending it other quantum programming languages. 
\end{abstract}

\maketitle

\tableofcontents

\section{Introduction}

In recent years we have seen an increasing number of practical quantum programming languages, from the high-level ones like Quipper \cite{green2013quipper} and QWIRE \cite{paykin2017qwire} to those that are hardware/simulator specific like Q\# \cite{svore2018q}, Quil \cite{smith2016practical} and QISKit \cite{qiskit2018}. These afford users the ability to implement ever larger quantum circuits, at which point the question of checking that such quantum programs behave as intended takes prominence. Classical debugging techniques generally involve observing the system state during program execution to help deduce the programming error; however, characterizing the state of a quantum register is generally intractable for all but the smallest programs. The other option is to verify the program by ascertaining its correctness in a formal model. For instance QWIRE uses density matrix and circuit formalisms \cite{feng2015qpmc} to verify a program, but this leads to similar scalability issue. In this work we do not attempt to provide methods for full program verification, but focus on the more limited task of type checking. Our approach uses the formalism of operational quantum mechanics via a resource theory framework \cite{coecke2016mathematical} and categorical quantum mechanics \cite{abramsky2004categorical}.

Resource-based reasoning is not new, the most popular being separation logic/bunched implications \cite{burstall1972some, ohearn1999logic, reynolds2002separation, pym2004possible} and linear logic/geometry of interactions \cite{girard1987linear, seely1987linear, girard1989geometry, abramsky1992new}. In computer science, the former has been used extensively for concurrence \cite{ohearn2007resources} while the later has proven more popular in quantum information \cite{selinger2006lambda}. Quantum thermodynamics has a long history as a resource theory, see for example \cite{brandao2013resource, horodecki2013fundamental, brandao2015second, faist2015minimal, gour2015resource, lostaglio2015quantum, lostaglio2015description, narasimhachar2015low, faist2018fundamental}. Similarly, many ideas from quantum foundations have been recast as resource theories including purity \cite{horodecki2003reversible, streltsov2018maximal}, entanglement \cite{groisman2005quantum, brandao2008entanglement, horodecki2009quantum, chitambar2016relating, streltsov2017colloquium}, coherence \cite{baumgratz2014quantifying, chitambar2016relating, chitambar2016comparison, marvian2016quantify, napoli2016robustness, winter2016operational, streltsov2017colloquium}, contextuality \cite{grudka2014quantifying, horodecki2015axiomatic, ahmadi2018quantification, amaral2018noncontextual}, and nonlocality \cite{devincente2014nonlocality, horodecki2015axiomatic}. Resource theory frameworks have even been developed for asymmetry \cite{gour2008resource, toloui2012simulating, marvian2014asymmetry, wakakuwa2017symmetrizing} and magic states \cite{veitch2014resource, howard2017application, ahmadi2018quantification}. Foundational works on quantum information as a general resource theory such as \cite{devetak2008resource, horodecki2013fundamental, brandao2015reversible, gour2017quantum, chitambar2018quantum} have led to mathematical frameworks for abstract resource theories as monoids \cite{fritz2015resource} or as monoidal categories \cite{coecke2013resource, coecke2016mathematical, marsden2018quantitative}.

Beyond resource theory, the ``categorification'' of quantum foundations \cite{abramsky2009categorical, baez2006quantum, chiribella2011informational, coecke2016generalised, tull2016operational}, semantics \cite{coecke2016categories, dixon2009graphical, hines2013types, selinger2010survey}, protocols \cite{abramsky2004categorical, vicary2012higher}, and computation \cite{lee2015computation, lee2017oracles} has led to quantum information being treated as a form of generalized probability theory \cite{chiribella2016operational}. From this perspective graphical languages for quantum computation \cite{selinger2010survey, coecke2017picturing} have arisen where the monoidal category \cite{maclane1963natural} forms the centerpiece. An intriguing aspect of this framework is its flexibility for designing programs and proofs. For example, a diagrammatic approach was recently used to show the self-testing property of a multipartite quantum state~\cite{breiner2018parallel}\bnote{check this reference}. 

To validate a program at the level of its types, there are two broad tasks to be accomplished: (i) check that the processes take in the correct types/resources and all type conversions are valid, and (ii) ensure that the processes compose in a way that forms a monoidal category. In this work, we present a type theory for program verification by building the minimal logical fragment that captures axiomatic resource theories. We refer to this logic as $\sysT$ for \emph{tensor}, which contains only a nullary tautology $\multrue$ and a multiplicative conjunction $\muland$. The term ``multiplicative'' refers to the position of $\sysT$ within linear logic or the logic of bunched implications. Opposed to this is ``additive'' logics, where propositions represent properties rather than resources; these are better suited for classical systems. We do not propose an implication, which would recover closed categories in our semantics. Rather we follow a stricter Curry-Howard-Lambek correspondence: objects correspond to terms while morphisms correspond to proofs in our logic. This is our first step towards developing strong quantum type checkers for quantum programming languages using categorical quantum mechanics.

\paragraph*{\bf The deductive system.}  The system $\sysT$, containing only the multiplicative conjunction $\otimes$ as a connective, is a fragment of multiplicative intuitionistic linear logic \cite{girard1987linear}. Intuitively, this connective is analogous to the tensor product in a monoidal category. In Section \ref{section:syntax} we formally develop the syntax of $\sysT$ as an intuitionistic logic using the sequent calculus (see Table \ref{system-T:rules-table}). The various transformations that define the equivalence classes for logical proofs is discussed in Section \ref{section:transformations}. In Section \ref{section:cut-elimination} we prove a cut-elimination theorem~\cite{gentzen1964investigations}: if an inference in system $\sysT$ has a proof, then there is an \emph{equivalent} proof that does not use the cut rule.

A critical feature of resource theories is that some resources as deemed to be freely available or freely disposable. Logically, this translates to a term that can be arbitrarily added to the consequent or antecedent of an inference $\Gamma \infers A$. Here, $A$ denotes an atomic proposition and $\Gamma$ is a sequent i.e., a (possibly infinite) set of propositions. To capture a resource being \emph{free}, we introduce polynomial systems in Section \ref{section:polynomials} which are created by adding rules to $\sysT$. Namely, when the resource represented by term $X$ is freely available, we get the system $\sysT[X]$ that has the additional rule: $\infers X$. When $X$ is freely disposable, we get the polynomial system $\sysT[\bar{X}]$ by having the additional rule $X \infers \multrue$.

\begin{example}\label{example:introduction}
Perhaps the simplest illustrative example is the system $\sysT$ with two atomic propositions $C,Q$ representing the types of a classical and quantum bit respectively. The term $\mt$ is the unit type. One can form types of registers of classical or quantum bits as $C^{\ma n}$ or $Q^{\ma m}$. In the system $\sysT[C]$ classical information can be freely introduced; we will see in Section \ref{section:polynomials} below that ``cloning'' $C \infers C\ma C$ is a theorem of $\sysT[C]$. Opposed to this, in the system $\sysT[\overline{C}]$ classical information can be freely disposed. One lifting theorem proved below, Proposition \ref{systemT:disposable-lifting-theorem}, states that each typing relation $\Gamma \infers A$ provable in $\sysT[\overline{C}]$ corresponds to a relation $\Gamma \infers A\otimes C^{\ma n}$ provable in $\sysT$ where any erased classical information is retained in an ancillary register. In the logical system $\sysT[C,\overline{C}]$, classical information is free in the usual resource-theoretic sense.

Additionally, we can form a toy model of coherence by adding the rule $Q \infers C$, capturing the ability to freely decohere quantum information into classical information in the basis implicitly represented by $C$. In Example \ref{example:coherence} below we will consider a more detailed example of this form were we have a continuous resource monotone.
\end{example}

We provide two general notions of semantics for $\sysT$ in this work.

\paragraph*{\bf 1. Algebraic semantics.} The algebraic semantics of a logic involves mapping each term into an algebraic structure so that the connectives of the logic are realized by the operations of the algebra.
A proof of an inference in the logic then corresponds to a relation between elements in the algebra. 
An algebraic model $\algM$ for $\sysT$ is a commutative ordered monoid\footnote{The monoid operation $\mon$ and order $\leq$ are compatible in that whenever $r \leq s$ and $x \leq y$ then $r \mon x \leq s \mon y$.} $\monM$ along with a valuation function $v: \Phi \mapsto \monM$ which maps the conjunction of terms to their product in the monoid. Intuitively, this valuation encodes the resources associated to a term. We use a forcing relation $m \forces A$ (read, $m \in \monM$ \emph{forces} $A \in \sysT$) to represent instances of type $A$ having sufficient resources to instantiate $m$. We require forcing to respect the $\muland$ connective by imposing the following rule: $m \forces A \muland B$ if and only if there exists $n_1, n_2 \in \monM$ with $m = n_1 \mon n_2$ such that $n_1 \forces A$ and $n_2 \forces B$.

In Section \ref{section:algebra}, we use this forcing relation to define semantic entailment over a requirement $\Gamma \manifests_m A$ and prove the soundness and completeness of the algebraic semantics of $\sysT$. In other words, we show that $\Gamma \infers A$ has a proof in $\sysT$ if and only if $\Gamma \manifests_m A$ for all $m \in \monM$ and models $\algM$. Note that by developing semantics through ordered monoids we recover the formalism of \cite{fritz2015resource}.

\paragraph*{\bf 2. Categorical Semantics.} Given a language, one constructs a \emph{syntactic category} whose objects are the types in the language, and morphisms are the functions defined between the various types that satisfy the relations provable in the language. The syntactic category provided by $\sysT$ is a symmetric monoidal category\footnote{A symmetric monoidal category is one equipped with a symmetric bifunctor that is associative up to a natural transformation and an object $I$ that is both a left and right identity for the bifunctor.} $\catC$. Every term becomes an object of $\catC$ where the identity object $I = \multrue$. A proof of an inference $\Pi: A \infers B$ becomes a morphism $\frames{\Pi} \in \hom(A, B)$, and the cut rule is used to define the composition of morphisms. The logical connective $\muland$ becomes the symmetric bifunctor $\monoidal$ where $\frames{\Pi_1} \monoidal \frames{\Pi_2}$ is given by combining proofs $\Pi_1$ and $\Pi_2$ using the logical rules for the $\otimes$ connective (see \RO\ and \LO\ rules in Table \ref{system-T:rules-table}). %\anote{If space permits, write out this proof transformation.} 

The bifunctor $\monoidal$, and related natural transformations, satisfy certain commutative diagrams (like the hexagon rule and triangle rule~\cite{maclane1963natural}) that induce an equivalence relation on proofs: $\Pi \sim \Pi'$ when $\frames{\Pi_1} = \frames{\Pi_2}$. Using proof transformations that maintain the equivalence of inferences in $\sysT$ we show in Section \ref{section:category} all these diagrams in $\catC$ can be satisfied. Finally, we prove coherence in that the unique functor from the syntactic category is fully faithful, thereby recovering the formalism of \cite{coecke2016mathematical}.

\paragraph*{\bf Related Work \& Future Directions.} A Curry-Howard-Lambek correspondence for quantum logic has a significant literature; some of the most relevant constructions for our work involve dagger-closed categories \cite{coecke2010categories} and the quantum typed $\lambda$-calculus developed in \cite{selinger2006lambda}. The latter is a quantum analogue of Lambek's $\lambda$-calculus for classical computing. Certainly complicated categorical semantic constructions and strongly typed quantum $\lambda$-calculus translate to more expressive logics and get closer to the complete power of linear logic. Equally certain is that implementing type checkers for them is a correspondingly bigger challenge~\cite{selinger2018challenges}.

QWIRE~\cite{paykin2017qwire} is an embedded language in the proof assistant Coq and is enhanced with density operator denotational semantics that works well for small circuits. QPMC~\cite{feng2015qpmc} is a model checker using density operator and quantum channel denotational semantics that has been used for some Quipper programs~\cite{anticoli2016towards}. Proto-quipper~\cite{ross2015algebraic, rios2017categorical} tries to bridge the theoretical-practical gap in quantum type theory by formalizing some aspects of Quipper and builds on categorical semantics. 

In contrast, our approach takes the simplest possible logic that can still capture non-trivial aspects of resource theories and study its capabilities. We aim to extend the type theory to more expressive logics with more functionality in the future. In one direction, we look to build a practical type checking tool with strong theoretical foundations that could also be independently incorporated as a separate tool into hardware-specific languages. In another direction, we would like to understand how to enrich the logic to add classical control imperatives like if-then-else which system $\sysT$ cannot express. Currently, some classical functionality akin to the additive conjunction ``$\addand$'' can be recovered by setting a classical bit $C$ to be \emph{free} -- freely available and freely disposable -- and considering the polynomial system $\sysT[C, \bar{C}]$.

\section{The syntax of logic $\sysT$}\label{section:syntax}

Formally, a logic will consist of a countable set of atomic propositions $\Phi$ (generically denoted as $P,Q,R,\dots$) where terms are generated by taking a product, say $\muland$, with each other or with $\multrue$ along with a suitable use of parentheses. Rewrite rules for terms arise from inference, whose syntax we study next. We will use the intuitionistic sequent calculus and so write $\Gamma,\Delta, \dots$ for (not necessarily finite but possibly empty) sequences of terms. In the logic system $\system{T}$ we can freely commute terms in a sequent, formally called the ``exchange'' rule, and so we may take a sequent to be a multiset of terms. Some of our results will be sufficiently broad to include general sequents; we will denote as $\system{T}'$ the logic system with the same rules as $\system{T}$ except without the exchange rule. In particular when dealing with categorical semantics of our logics, we will find that $\system{T}'$ systems give rise to general monoidal categories while $\system{T}$ systems naturally lead to \emph{symmetric} monoidal categories. Throughout this section we will indicate where ordering, or a lack of it, is necessary in order to highlight the differences between $\sysT$ and $\sysT'$ systems.

In intuitionistic sequent calculus an inference is a relation between sequents and a term, written as $\Gamma \infers A$. We refer to terms in $\Gamma$ as \emph{antecedents} and $A$ as the \emph{consequent}. Intuitively, this is intended to represent that $A$ logically follows from the terms in $\Gamma$, however in a multiplicative logic we may not drop or introduce terms into $\Gamma$ and so a better interpretation may be that $A$ can be built out of the terms in $\Gamma$. Inferences are transformed according to rules that are meant to capture this intuition. For example, the exchange rule mentioned above states that terms in a sequent can be reordered arbitrarily. We can capture this by the rule
\begin{itemize}
    \item (Exchange) if $\Gamma, \Delta, \Theta, \Psi \infers A$ then $\Gamma, \Theta, \Delta, \Psi \infers A$.
\end{itemize}
Note that we have adhered to tradition by dropping cluttering brackets and braces in the antecedent and simply list the elements of the sequent, or in this case list its sub-sequents.

We will use proof trees throughout this work. These diagrams are a convenient notation for the application of deduction rules. For example the exchange rule above can be written as
\begin{prooftree}
    \rulename{Exchange}\AxiomC{$\Gamma, \Delta, \Theta, \Psi \infers A$}\UnaryInfC{$\Gamma, \Theta, \Delta, \Psi \infers A$.}
\end{prooftree}
In such a deduction, inferences above the line are called its \emph{assumptions} and the one below the line its \emph{conclusion}. A proof of an inference is a finite tree of rule applications that ends with the given inference as its conclusion.

We will denote inference in $\system{T}$ and $\system{T}'$ as $\Gamma \infers_\sysT A$ and $\Gamma \infers_{\system{T}'} A$ respectively, but drop the additional decoration when there is no danger of confusion or when the distinction between the systems is irrelevant. For example, the set of rules for $\sysT'$ is a subset of those for $\sysT$, and therefore any deduction in $\sysT'$ is also valid in $\sysT$. So, while defining rules for $\sysT'$, we may drop the unnecessary decorations as they hold in both logics. The syntax of deductions in $\sysT'$ is captured with the structural rules
\begin{itemize}
    \item (identity) for any atomic proposition $P$ we have $P \infers P$; and,
    \item (cut) for any term $B$, if $\Gamma \infers B$ and $\Delta, B, \Theta \infers A$ then $\Delta, \Gamma, \Theta \infers A$;
\end{itemize}
together with the logical rules for the nullary tautology and binary conjunction
\begin{itemize}
    \item (left-\mt) if $\Gamma, \Delta \infers A$ then $\Gamma, \multrue, \Delta \infers A$;
    \item (right-\mt) $\infers \multrue$;
    \item (left-\ma) if $\Delta, A, B, \Gamma \infers C$ then $\Delta, A \muland B, \Gamma \infers C$; and, %\anote{since order matters}
    \item (right-\ma) if $\Gamma \infers A$ and $\Delta \infers B$ then $\Gamma,\Delta \infers A\muland B$.
\end{itemize}
We capture these as the deduction rules in \Cref{system-T':rules-table}.

\begin{table}[b]
    \centering
    \fbox{\begin{tabular}{c@{\qquad}c@{\qquad}c}
        \rulename{Id}\AxiomC{}\UnaryInfC{$P \infers P$}\DisplayProof &
        \rulename{Cut}\AxiomC{$\Gamma \infers B$}\AxiomC{$\Delta, B, \Theta \infers A$}\BinaryInfC{$\Delta, \Gamma, \Theta \infers A$}\DisplayProof &
        \rulename{\LI}\AxiomC{$\Gamma, \Delta \infers A$}\UnaryInfC{$\Gamma, \multrue, \Delta \infers A$}\DisplayProof\\\\ 
        \rulename{\LO}\AxiomC{$\Gamma, A, B, \Delta \infers C$}\UnaryInfC{$\Gamma, A\muland B, \Delta \infers C$}\DisplayProof &
        \rulename{\RO}\AxiomC{$\Gamma \infers A$}\AxiomC{$\Delta \infers B$}\BinaryInfC{$\Gamma,\Delta \infers A \muland B$}\DisplayProof &
        \rulename{\RI}\AxiomC{}\UnaryInfC{$\infers \multrue$}\DisplayProof\\\\
    \end{tabular}}
    \caption{Deduction rules of $\system{T}'$}
    \label{system-T':rules-table}
\end{table}

\begin{lemma}\label{systemT:order}
    In $\system{T}'$ (and hence also in $\system{T}$), if $A \infers B$ and $C \infers D$ then $A \ma C \infers B \ma D$.
\end{lemma}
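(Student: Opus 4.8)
The plan is to build the desired derivation directly from the two given proofs using exactly two logical rules from \Cref{system-T':rules-table}, namely \RO\ followed by \LO; no induction, case analysis, cut, or exchange is needed. Suppose we are handed a proof $\Pi_1$ of $A \infers B$ and a proof $\Pi_2$ of $C \infers D$. I would first regard $A$ as the singleton antecedent sequent playing the role of $\Gamma$ in the \RO\ rule, and $C$ as the singleton antecedent sequent playing the role of $\Delta$. Applying \RO\ to $\Pi_1$ and $\Pi_2$ then produces a proof of $A, C \infers B \ma D$.

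Next I would apply \LO\ with both surrounding contexts in that rule taken to be empty, which fuses the two-term antecedent $A, C$ into the single term $A \ma C$ and yields the target inference. Schematically the whole derivation is
\begin{prooftree}
\AxiomC{$A \infers B$}
\AxiomC{$C \infers D$}
\rulename{\RO}
\BinaryInfC{$A, C \infers B \ma D$}
\rulename{\LO}
\UnaryInfC{$A \ma C \infers B \ma D$.}
\end{prooftree}
Since neither step uses the exchange rule, the derivation lies entirely within $\sysT'$; the claim for $\sysT$ is then immediate because the rules of $\sysT'$ form a subset of those of $\sysT$, so every $\sysT'$-derivation is also a $\sysT$-derivation.

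The only thing requiring attention — and it is really bookkeeping rather than an obstacle — is checking that the sequents line up with the precise shapes of the \RO\ and \LO\ rules as stated, in particular that the empty-context instance of \LO\ (taking $\Gamma = \Delta$ empty there) is a legitimate rule application, and that no atomicity hypothesis is needed since $A,B,C,D$ already appear only as the endpoints of given provable inferences. I expect no genuine difficulty: this lemma is essentially the statement that $\muland$ acts functorially on single-object inferences, and it will be reused downstream as a basic building block for the bifunctoriality and coherence arguments in the categorical semantics.
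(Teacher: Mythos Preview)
Your proposal is correct and matches the paper's proof essentially verbatim: the paper also applies \RO\ to the two hypotheses to obtain $A, C \infers B \ma D$ and then a single \LO\ to conclude $A \ma C \infers B \ma D$. Your additional remarks about empty contexts and the $\sysT'$/$\sysT$ inclusion are sound bookkeeping that the paper leaves implicit.
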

\begin{proof}
    This is a straightforward use of the $\muland$-introduction rules:
    \begin{prooftree}
        \rulename{\RO}\AxiomC{$A \infers B$}\AxiomC{$C \infers D$}\BinaryInfC{$A,C \infers B \ma D$}
        \rulename{\LO}\UnaryInfC{$A \ma C \infers B \ma D$.}
    \end{prooftree}
\end{proof}

\begin{proposition}[Identity rule]\label{systemT:identity}
    In $\system{T}'$ (and hence also in $\system{T}$), we have $A \infers A$ for every term $A$.
\end{proposition}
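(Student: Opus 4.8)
The plan is to prove the statement by structural induction on the term $A$, following the recursive definition of terms as those expressions built from atomic propositions by repeated application of $\muland$ together with the nullary constant $\multrue$. Since no use of the exchange rule will be needed, the entire argument takes place in $\system{T}'$, and hence also in $\system{T}$.

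For the base cases there are two possibilities. If $A = P$ is an atomic proposition, then $P \infers P$ is exactly the \rulename{Id} axiom, and nothing further is required. If $A = \multrue$, I would begin from the \RI\ axiom $\infers \multrue$ and then apply the \LI\ rule with $\Gamma$ and $\Delta$ both empty, which inserts a $\multrue$ into the (empty) antecedent and yields $\multrue \infers \multrue$.

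For the inductive step, suppose $A = B \muland C$ and that, by the induction hypothesis, $B \infers B$ and $C \infers C$ are both provable in $\system{T}'$. Applying Lemma~\ref{systemT:order} to these two inferences immediately gives $B \muland C \infers B \muland C$, completing the induction. (If one prefers not to invoke the lemma, the same conclusion follows directly: apply \RO\ to $B \infers B$ and $C \infers C$ to obtain $B, C \infers B \muland C$, and then apply \LO\ to contract the antecedent to the single term $B \muland C$.)

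I do not anticipate any genuine obstacle here; the proposition is a routine structural induction once the logical rules are in place. The only point that warrants a moment of care is the $\multrue$ base case, where one must remember that \RI\ by itself only produces the empty-antecedent sequent $\infers \multrue$, so that an application of \LI\ is genuinely needed to place the $\multrue$ on the left of the turnstile.
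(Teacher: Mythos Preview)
Your proof is correct and follows essentially the same approach as the paper: structural induction on $A$, with the same two base cases and the same appeal to Lemma~\ref{systemT:order} in the inductive step. If anything, your inductive step is stated more cleanly than the paper's, which splits into subcases $A = P \ma B$, $A = B \ma P$, $A = \mt \ma B$, $A = B \ma \mt$ rather than handling the general decomposition $A = B \ma C$ in one stroke as you do.
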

\begin{proof}
    We work by structural induction on $A$. If $A = P$ is an atomic proposition then this inference is just the identity rule. If $A = \multrue$, then we prove $\multrue \infers \multrue$ as follows:
    \begin{prooftree}
        \rulename{R-$\mt$} \AxiomC{}\UnaryInfC{$\infers \mt$}
        \rulename{L-$\mt$} \UnaryInfC{$\mt \infers \mt.$}
    \end{prooftree}
    Now let $A = P \ma B$ for an atomic proposition $P$. Inductively, we may suppose that we have a proof of $B \infers B$. We also have $P \infers P$ from the identity rule and so by Lemma~\ref{systemT:order} $P\ma B \infers P\ma B$. The case of $A = B \ma P$ is identical.
    
    If $A = \mt \ma B$ then we proceed similarly: inductively we assume a proof of $B \infers B$, above we have given a proof of $\mt \infers \mt$, and so by Lemma~\ref{systemT:order}, we have a proof of $\mt \otimes B \infers \mt \ma B$. Again the case of $A = B \ma \mt$ is identical.
\end{proof}

\begin{lemma}[Left-$\mt$ elimination]\label{systemT:left-1-elimination}
    In $\system{T}'$ (and hence also in $\system{T}$), if $\Gamma, \mt, \Delta \infers A$ then $\Gamma, \Delta \infers A$.
\end{lemma}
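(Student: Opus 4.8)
The plan is to prove Left-$\mt$ elimination by a cut against the derivation $\mt \infers \mt$, which isolates and removes the occurrence of $\mt$ in the antecedent. Concretely, by Proposition~\ref{systemT:identity} (or directly by \RI\ followed by \LI, as in that proof) we have a proof of $\mt \infers \mt$; but to eliminate $\mt$ we actually want a proof of $\mt \infers$ (empty), which is exactly the \RI\ rule read as $\infers \mt$ — that is, the sequent with empty antecedent and consequent $\mt$. Wait: to use cut to remove a $\mt$ from the left we need something of the form $\Theta \infers \mt$ with $\Theta$ empty, namely the axiom $\infers \mt$ (\RI). Then cut this against the hypothesis $\Gamma, \mt, \Delta \infers A$, matching the cut-formula $B = \mt$: the cut rule with $\Gamma_{\mathrm{cut}} = \emptyset$ (the antecedent of $\infers \mt$) and the context $\Gamma, [\,\cdot\,], \Delta$ yields $\Gamma, \Delta \infers A$. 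That is the whole argument.

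So the key steps, in order, are: (1) instantiate \RI\ to get the proof $\infers \mt$; (2) apply the \Cut\ rule with cut-formula $\mt$, left premise $\infers \mt$ and right premise the hypothesised $\Gamma, \mt, \Delta \infers A$, reading off the conclusion $\Gamma, \emptyset, \Delta \infers A = \Gamma, \Delta \infers A$. I would present this as a two-line prooftree:
\begin{prooftree}
    \rulename{\RI}\AxiomC{}\UnaryInfC{$\infers \mt$}
    \AxiomC{$\Gamma, \mt, \Delta \infers A$}
    \rulename{Cut}\BinaryInfC{$\Gamma, \Delta \infers A.$}
\end{prooftree}

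There is essentially no obstacle here — the only thing to be careful about is the bookkeeping of contexts in the cut rule. The cut rule as stated is ``if $\Gamma' \infers B$ and $\Delta', B, \Theta' \infers A$ then $\Delta', \Gamma', \Theta' \infers A$''; one must check that the pattern matching $\Gamma' = \emptyset$, $B = \mt$, $\Delta' = \Gamma$, $\Theta' = \Delta$ is legitimate, i.e.\ that empty antecedents and this splitting of $\Gamma,\mt,\Delta$ are permitted. Both are fine: antecedents may be empty by convention, and since $\mt$ occurs somewhere in the antecedent of the hypothesis we can always name the part before it $\Gamma$ and the part after it $\Delta$ (no exchange is even needed, so the lemma holds in $\sysT'$ as well as $\sysT$). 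I would remark in passing that this is really just a special case of the cut-elimination philosophy: $\mt$ in the antecedent is ``provable from nothing,'' so it can be discharged.
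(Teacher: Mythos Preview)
Your proof is correct and is essentially identical to the paper's: both use the axiom $\infers \mt$ (rule \RI) as the left premise of a cut against the hypothesis $\Gamma, \mt, \Delta \infers A$, with the same prooftree. The only difference is cosmetic---your write-up contains a brief false start about $\mt \infers \mt$ before arriving at the right idea, which you should trim.
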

\begin{proof}
    This follows immediately from the cut rule:
    \begin{prooftree}
        \rulename{R-$\mt$} \AxiomC{}\UnaryInfC{$\infers \mt$}
        \AxiomC{$\Gamma, \mt, \Delta \infers A$}
        \rulename{Cut} \BinaryInfC{$\Gamma, \Delta \infers A$.}
    \end{prooftree}
\end{proof}

\begin{lemma}[Right-$\mt$ elimination]\label{systemT:right-1-elimination}
    In $\system{T}'$ (and hence also in $\system{T}$), if $\Gamma \infers A \ma \mt \ma B$ then $\Gamma \infers A \ma B$. Similarly if $\Gamma \infers A \ma \mt$ then $\Gamma \infers A$ and if $\Gamma \infers \mt \ma B$ then $\Gamma \infers B$.
\end{lemma}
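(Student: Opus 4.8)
The plan is to reduce each of the three statements to a cut against an explicitly constructed proof, in exactly the spirit of the proof of Lemma~\ref{systemT:left-1-elimination}. For the main claim, I would first build a proof in $\sysT'$ of the inference $A \ma \mt \ma B \infers A \ma B$, and then apply \textbf{Cut} with the hypothesis $\Gamma \infers A \ma \mt \ma B$ as the left premise to conclude $\Gamma \infers A \ma B$.

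To construct $A \ma \mt \ma B \infers A \ma B$: start from $A \infers A$ and $B \infers B$, which hold for arbitrary terms by Proposition~\ref{systemT:identity}; apply \RO\ to obtain $A, B \infers A \ma B$; insert the unit by \LI\ to obtain $A, \mt, B \infers A \ma B$; and finally apply \LO\ twice to collapse the antecedent into the single term $A \ma \mt \ma B$, the two applications being taken in the order dictated by whichever parenthesization of $A \ma \mt \ma B$ is in force. A single cut then yields the main statement. The two special cases are handled in the same way with one fewer identity: for $\Gamma \infers A \ma \mt$, build $A \ma \mt \infers A$ from $A \infers A$ (Proposition~\ref{systemT:identity}) by \LI\ to get $A, \mt \infers A$ and then \LO\ to get $A \ma \mt \infers A$, and cut against the hypothesis; symmetrically, for $\Gamma \infers \mt \ma B$, build $\mt \ma B \infers B$ from $B \infers B$ via \LI\ and \LO\ and cut.

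There is essentially no obstacle here beyond bookkeeping. The only two points that need a little care are: (i) invoking the term-level identity of Proposition~\ref{systemT:identity} rather than the atomic \textbf{Id} rule, since $A$ and $B$ are arbitrary terms, not atomic propositions; and (ii) checking that each \LO\ step is applied to genuinely adjacent antecedent terms whose juxtaposition matches the conjunction being introduced — which is immediate once a parenthesization of $A \ma \mt \ma B$ is fixed, and none of the arguments depend on which parenthesization is chosen (as one would expect, since the two are provably equivalent by associativity, cf. Lemma~\ref{systemT:order}). Finally, since the rules of $\sysT'$ are a subset of those of $\sysT$, the statement for $\sysT$ follows at once.
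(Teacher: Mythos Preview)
Your proposal is correct and follows essentially the same route as the paper: construct $A \ma \mt \ma B \infers A \ma B$ from the identities via \RO, \LI, and two applications of \LO, then cut against the hypothesis, with the two side cases handled analogously. Your explicit invocation of Proposition~\ref{systemT:identity} for the term-level identities is in fact slightly more careful than the paper, which writes ``Id'' for $A \infers A$ and $B \infers B$ without comment.
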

\begin{proof}
    The first of these implications also follows from the cut rule after constructing the the proper inference (in this case, $A \ma \mt \ma B \infers A \ma B$):
%    \begin{prooftree}
%        \rulename{Prop. \ref{systemT:identity}} \AxiomC{}\UnaryInfC{$A\ma B \infers A\ma B$}
%        \rulename{L-$\mt$} \UnaryInfC{$A\ma \mt\ma B \infers A\ma B$}
%        \AxiomC{$\Gamma \infers A \otimes \mt \otimes B$}
%        \rulename{Cut} \BinaryInfC{$\Gamma \infers A\ma B$.}
%    \end{prooftree}
    \begin{prooftree}
        \rulename{Id}\AxiomC{}\UnaryInfC{$A \infers A$}
        \rulename{Id}\AxiomC{}\UnaryInfC{$B \infers B$}
        \rulename{\RO} \BinaryInfC{$A , B \infers A \ma B$}
        \rulename{L-$\mt$} \UnaryInfC{$A, \mt, B \infers A\ma B$}
        \rulename{(\LO) $\times 2$} \UnaryInfC{$A \ma \mt \ma B \infers A\ma B$}
        \AxiomC{$\Gamma \infers A \otimes \mt \otimes B$}
        \rulename{Cut} \BinaryInfC{$\Gamma \infers A\ma B$.}
    \end{prooftree}
    The two other proofs are identical.
\end{proof}

\begin{lemma}[Left-$\ma$ elimination]\label{lemma:system-T:left-tensor-elimination}
In $\system{T}'$ (and hence also in $\system{T}$), if $\Gamma, A\ma B, \Delta \infers C$ then $\Gamma, A, B, \Delta \infers C$.
\end{lemma}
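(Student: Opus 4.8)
The plan is to derive this as a single application of the cut rule, using it to splice the pair of antecedents $A, B$ into the position previously occupied by the compound antecedent $A \ma B$. This is the exact mirror of the strategy used for the Left-$\mt$ elimination lemma (Lemma~\ref{systemT:left-1-elimination}) and for the other direction of Left-$\ma$ already encoded in the \LO\ rule.

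Concretely, I would first build an auxiliary proof of the inference $A, B \infers A \ma B$. Here I cannot use the bare \textbf{Id} rule, since $A$ and $B$ need not be atomic; instead I invoke the generalized identity of Proposition~\ref{systemT:identity} to obtain $A \infers A$ and $B \infers B$, and then apply \RO\ once to conclude $A, B \infers A \ma B$. With this auxiliary inference in hand, I apply the cut rule with cut-term $A \ma B$: the left premise is the auxiliary proof $A, B \infers A \ma B$, and the right premise is the hypothesis $\Gamma, A \ma B, \Delta \infers C$ (matching the cut-rule schema $\Delta', B', \Theta' \infers A'$ with $\Delta' = \Gamma$, $B' = A \ma B$, $\Theta' = \Delta$, $A' = C$). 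The conclusion of the cut is $\Gamma, A, B, \Delta \infers C$, as required. Schematically:
\begin{prooftree}
    \AxiomC{$A \infers A$}
    \AxiomC{$B \infers B$}
    \rulename{\RO}\BinaryInfC{$A, B \infers A \ma B$}
    \AxiomC{$\Gamma, A \ma B, \Delta \infers C$}
    \rulename{Cut}\BinaryInfC{$\Gamma, A, B, \Delta \infers C$.}
\end{prooftree}

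I expect no real obstacle here: the argument is a direct use of cut, and in particular it does not require the exchange rule, since the cut rule deposits the sequence $A, B$ exactly where $A \ma B$ sat, so the statement holds verbatim in $\sysT'$ (and hence in $\sysT$). The only minor point worth flagging is the need to appeal to Proposition~\ref{systemT:identity} rather than the primitive \textbf{Id} rule when $A$ or $B$ is a non-atomic term; this is why the lemma is placed after that proposition. If the paper prefers an entirely cut-free presentation at some later stage, one could alternatively note that this lemma will become redundant once an inversion/admissibility analysis for \LO\ is carried out in the cut-elimination section, but for now the cut-based derivation is the cleanest route.
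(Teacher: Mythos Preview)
Your proposal is correct and matches the paper's proof essentially line for line: the paper also builds $A,B \infers A\ma B$ from the identities $A\infers A$ and $B\infers B$ via \RO, then cuts against the hypothesis $\Gamma, A\ma B, \Delta \infers C$. Your explicit invocation of Proposition~\ref{systemT:identity} (rather than the bare Id rule) for possibly non-atomic $A,B$ is a point the paper glosses over by simply labeling the leaves ``Id,'' so if anything your write-up is slightly more precise on that detail.
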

\begin{proof}
    Like the proofs above, this follows from the cut rule after forming the proper inference
    \begin{prooftree}
        \rulename{Id}\AxiomC{}\UnaryInfC{$A \infers A$}
        \rulename{Id}\AxiomC{}\UnaryInfC{$B \infers B$}
        \rulename{R-$\ma$}\BinaryInfC{$A,B \infers A \ma B$}
        \AxiomC{$\Gamma, A\ma B, \Delta \infers C$}
        \rulename{Cut}\BinaryInfC{$\Gamma, A, B, \Delta \infers C$.}
    \end{prooftree}
\end{proof}

In system $\sysT$ we augment the structural rules of $\system{T}'$ with the \emph{exchange} rule, so that the structural rules for $\system{T}$ are
\begin{itemize}
    \item (identity) for any atomic proposition $P$ we have $P \infers P$;
	\item (exchange) if $\Gamma, \Delta, \Theta, \Psi \infers A$, then $\Gamma, \Theta, \Delta, \Psi \infers A$; and,
    \item (cut) for any term $B$, if $\Gamma \infers B$ and $\Delta, B \infers A$ then $\Gamma, \Delta \infers A$.
\end{itemize} %implies that left-$\muland$' rule cannot be used in proofs for system $\sysT'$.}
The exchange rule explicitly states that the order of terms in the sequent on the left side can be arbitrarily reordered for a $\sysT$ inference. In particular, the cut rule presented for $\system{T}$ is the same as for $\system{T}'$ where the exchange rule has been used to simplify its form.

Similarly, the conjunction and tautology rules for $\sysT$ are those of $\sysT'$, except we have used the exchange rule to simplify some of their statements:
\begin{itemize}
	\item (left-$\multrue$) if $\Gamma \infers A$ then $\Gamma, \multrue \infers A$;
    \item (right-$\multrue$) $\infers \multrue$;
    \item (left-$\muland$) if $A, B, \Gamma \infers C$ then $A \muland B, \Gamma \infers C$; and %\anote{since order matters}
    \item (right-$\muland$) if $\Gamma \infers A$ and $\Delta \infers B$ then $\Gamma,\Delta \infers A\muland B$.
\end{itemize}

We capture these as the deduction rules in \Cref{system-T:rules-table}. Notice that the left-$\muland$ introduction rule now simplifies to combine the terms in the order in which they appear---left to right. That is $A$ and $B$, the first two terms of the antecedent, combine to form the term $A \muland B$. However, as we can use the exchange rule to move any of the terms into the first two positions, and back, $\sysT$ is not limited in the expressing power of its inferences: anything inferred in $\sysT'$ can also be inferred in $\sysT$.

\begin{table}[b]
    \centering
    \fbox{\begin{tabular}{c@{\qquad}c@{\qquad}c@{\qquad}c}
        \rulename{Id}\AxiomC{}\UnaryInfC{$P \infers P$}\DisplayProof &
        \rulename{Cut}\AxiomC{$\Gamma \infers B$}\AxiomC{$\Delta, B \infers A$}\BinaryInfC{$\Delta, \Gamma \infers A$}\DisplayProof &
        \rulename{L-$\multrue$}\AxiomC{$\Gamma \infers A$}\UnaryInfC{$\Gamma, \multrue \infers A$}\DisplayProof & \rulename{R-$\multrue$}\AxiomC{}\UnaryInfC{$\infers \multrue$}\DisplayProof\\\\
        \rulename{\LO}\AxiomC{$A, B, \Gamma \infers C$}\UnaryInfC{$A\muland B, \Gamma \infers C$}\DisplayProof &
        \rulename{\RO}\AxiomC{$\Gamma \infers A$}\AxiomC{$\Delta \infers B$}\BinaryInfC{$\Gamma,\Delta \infers A \muland B$}\DisplayProof &
        \multicolumn{2}{c}{\rulename{Ex}\AxiomC{$\Gamma, \Delta, \Theta, \Psi \infers A$}\UnaryInfC{$\Gamma, \Theta, \Delta, \Psi \infers A$}\DisplayProof}
        \\\\
    \end{tabular}}
    \caption{Deduction rules of $\system{T}$}
    \label{system-T:rules-table}
\end{table}

We write $A \equiv B$ when $A \infers B$ and $B \infers A$. Note that when $A \equiv B$, the cut rule implies any appearance of $A$ in a proof can replaced with $B$, and vice versa.

\begin{proposition}\label{proposition:system-T:algebra}
    In $\sysT$ and $\sysT'$ we have
    \begin{enumerate}
        \item $(A \muland B) \muland C \equiv A \muland (B \muland C)$ and
        \item $A \muland \multrue \equiv A \equiv \multrue \muland A$.
    \end{enumerate}
    In $\sysT$ we additionally have 
    \begin{enumerate}
        \item[3.]  $A \muland B\equiv B\muland A$.
    \end{enumerate}
\end{proposition}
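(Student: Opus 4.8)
The plan is to prove each equivalence by exhibiting proof trees in both directions, relying on the already-established lemmas rather than unfolding everything to atomic propositions. Recall that $A \equiv B$ means we must produce a proof of $A \infers B$ and a proof of $B \infers A$. For all three items, the strategy is the same: start from the identity inferences on the atomic building blocks (available for arbitrary terms by Proposition~\ref{systemT:identity}), assemble the left-hand side using \LO\ (and \LI), assemble the right-hand side using \RO\ (and \RI), and glue with \textbf{Cut} if needed.

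For item~1, associativity, I would prove $(A \muland B) \muland C \infers A \muland (B \muland C)$ as follows: begin with $A \infers A$, $B \infers B$, $C \infers C$; apply \RO\ to $B \infers B$ and $C \infers C$ to get $B, C \infers B \muland C$; apply \RO\ again with $A \infers A$ to get $A, B, C \infers A \muland (B \muland C)$; then apply \LO\ twice (first combining $B, C$ into $B \muland C$ after the $A$, then combining $A$ with $B\muland C$—using an exchange in $\sysT$, or arranging the bracketing appropriately in $\sysT'$) to reach $(A\muland B)\muland C \infers A \muland(B\muland C)$. Actually the cleanest route in $\sysT'$ is: from $A,B,C \infers A\muland(B\muland C)$, apply \LO\ to the first two positions to get $A\muland B, C \infers A\muland(B\muland C)$, then \LO\ again to get $(A\muland B)\muland C \infers A\muland(B\muland C)$. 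The reverse direction is symmetric. A slicker alternative invoking prior results: use Lemma~\ref{lemma:system-T:left-tensor-elimination} to peel $(A\muland B)\muland C$ down to the sequent $A, B, C$, and use repeated \RO\ to build $A\muland(B\muland C)$, then \textbf{Cut}.

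For item~2, the unit laws, I would prove $A \muland \multrue \infers A$ by: $A \infers A$, then \LI\ to get $A, \multrue \infers A$, then \LO\ to get $A \muland \multrue \infers A$. Conversely $A \infers A \muland \multrue$: take $A \infers A$ and $\infers \multrue$ (rule \RI), apply \RO\ to get $A \infers A \muland \multrue$. The $\multrue \muland A \equiv A$ case is identical up to an exchange (in $\sysT$) or a trivially rearranged tree (in $\sysT'$). For item~3, commutativity (only in $\sysT$), prove $A \muland B \infers B \muland A$ by: $A \infers A$ and $B \infers B$, then \RO\ gives $A, B \infers A\muland B$—wait, I want $B\muland A$ on the right, so \RO\ on $B\infers B$ and $A \infers A$ gives $B, A \infers B \muland A$; then \textbf{Ex} to get $A, B \infers B\muland A$; then \LO\ to get $A\muland B \infers B\muland A$. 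The reverse is the same with $A$ and $B$ swapped.

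I do not anticipate a serious obstacle here—this proposition is essentially the observation that $\muland$ and $\multrue$ form a commutative monoid structure on terms up to logical equivalence, and every step is a direct application of an already-proven lemma or a single deduction rule. The only mild subtlety is bookkeeping the difference between $\sysT$ and $\sysT'$: in $\sysT'$ the \LO\ rule combines adjacent terms in place (so one must be careful that the two terms to be merged are actually adjacent and in the right order), whereas in $\sysT$ the exchange rule removes this concern entirely. Since item~3 explicitly uses exchange, it is correctly flagged as holding only in $\sysT$; items~1 and~2 need only the structural and logical rules common to both systems, so I would present their proof trees in a form valid for $\sysT'$.
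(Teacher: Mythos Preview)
Your proposal is correct and matches the paper's proof essentially line for line: the same identity-then-\RO-then-\LO\ assembly for associativity, the same \LI/\RI\ trees for the unit laws, and the same \RO-then-Exchange-then-\LO\ derivation for commutativity. The only cosmetic differences are your self-correction in item~3 (you derive $A\muland B \infers B\muland A$ whereas the paper writes the mirror direction $B\muland A \infers A\muland B$) and your mention of an alternative route via Lemma~\ref{lemma:system-T:left-tensor-elimination} and Cut, which the paper does not use.
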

\begin{proof} 
The proof of one direction of (1) is as follows:
\begin{prooftree}
    \AxiomC{}\rulename{Id}\UnaryInfC{$A \infers A$}
    \AxiomC{}\rulename{Id}\UnaryInfC{$B \infers B$}
    \AxiomC{}\rulename{Id}\UnaryInfC{$C \infers C$}
    \rulename{\RO}\BinaryInfC{$B,C \infers B\muland C$}
    \rulename{\RO}\BinaryInfC{$A, B, C \infers A \muland (B \muland C)$}
    \rulename{\LO}\UnaryInfC{$A \muland B, C \infers A \muland (B \muland C)$}
    \rulename{\LO}\UnaryInfC{$(A \muland B) \muland C \infers A \muland (B \muland C)$.}
\end{prooftree}
The proof of the converse inference is essentially identical,
\begin{prooftree}
    \AxiomC{}\rulename{Id}\UnaryInfC{$A \infers A$}
    \AxiomC{}\rulename{Id}\UnaryInfC{$B \infers B$}
    \rulename{\RO}\BinaryInfC{$A,B \infers A\muland B$}\AxiomC{}
    \rulename{Id}\UnaryInfC{$C \infers C$}
    \rulename{\RO}\BinaryInfC{$A, B, C \infers (A \muland B) \muland C$}
    \rulename{\LO}\UnaryInfC{$A, B \muland C \infers (A \muland B) \muland C$}
    \rulename{\LO}\UnaryInfC{$A \muland (B \muland C) \infers (A \muland B) \muland C$.}
\end{prooftree}
The crux of the proof relies within the left-$\muland$ introduction rule---any two consecutive terms in the sequent may be combined. By combining them in a different order, we thus produce the ``associativity'' relation.

The proof of (2) is straight forward:
\begin{prooftree}
    \AxiomC{}\rulename{Id}\UnaryInfC{$A \infers A$}
    \rulename{L-$\multrue$}\UnaryInfC{$A, \multrue \infers A$}
    \rulename{\LO}\UnaryInfC{$A\muland \multrue \infers A$}
\end{prooftree}
and conversely
\begin{prooftree}
    \AxiomC{}\rulename{Id}\UnaryInfC{$A \infers A$}\AxiomC{}
    \rulename{R-$\multrue$}\UnaryInfC{$\infers \multrue$}
    \rulename{\RO}\BinaryInfC{$A \infers A\muland \multrue$.}
    \end{prooftree}
The proof for $\multrue \muland A$ is similar.

Finally, for (3) in $\system{T}$, we may use the exchange rule and hence give the following proof
\begin{prooftree}
    \AxiomC{}\rulename{Id}\UnaryInfC{$A \infers A$}\AxiomC{}\rulename{Id}\UnaryInfC{$B \infers B$}\rulename{R-$\muland$}\BinaryInfC{$A,B \infers A\muland B$}\rulename{Exchange}\UnaryInfC{$B, A \infers A \muland B$}\rulename{L-$\muland$}\UnaryInfC{$B\muland A \infers A \muland B$.}
\end{prooftree}
The converse inference is obtained by exchanging the roles of $A$ and $B$ above.
\end{proof}

Note that in this final part of the proof, the exchange rule was critical. So part (3) of the above proposition will not hold in general for $\system{T}'$ which fails to have this rule.

\section{Proof transformations}\label{section:transformations}

An essential component of our study is that of proof transformation. Rarely is a proof of an inference unique. But many proofs are considered equivalent as they simply involve rearranging the order of their rules. By swapping adjacent rules, or eliminating redundant rules, we define the following collection of proof transformations that will be used both in the cut elimination theorem (Theorem \ref{thm:cut_elim}) and later in Section \ref{section:category} to define the categorical semantics.

\subsection{Cut commutativity}

Consecutive cut rules lead to a canonical proof transformation, which states that the order of the two cuts is irrelevant to the proof.
\begin{equation}\label{equation:system-T:vertical-cut-commutivity}
\begin{array}{r@{\quad}l}
\text{\textsf{Cut $\circ$ Cut}:} &
\AxiomC{$\vdots$}\UnaryInfC{$\Gamma \infers B$}
\AxiomC{$\vdots$}\UnaryInfC{$\Delta, B, \Theta \infers C$}
\rulename{Cut}\BinaryInfC{$\Delta, \Gamma, \Theta \infers C$}
\AxiomC{$\vdots$}\UnaryInfC{$\Psi, C, \Xi \infers C$}
\rulename{Cut}\BinaryInfC{$\Psi, \Delta, \Gamma, \Theta, \Xi \infers C$} \DisplayProof\\\\ & \qquad \Rightarrow \qquad
\AxiomC{$\vdots$}\UnaryInfC{$\Gamma \infers B$}
\AxiomC{$\vdots$}\UnaryInfC{$\Delta, B, \Theta \infers C$}
\AxiomC{$\vdots$}\UnaryInfC{$\Psi, C, \Xi \infers C$}
\rulename{Cut}\BinaryInfC{$\Psi, \Delta, B, \Theta, \Xi \infers C$}
\rulename{Cut}\BinaryInfC{$\Psi, \Delta, \Gamma, \Theta, \Xi \infers C$.} \DisplayProof
\end{array}
\end{equation}
As opposed to consecutive cuts, parallel cuts also lead to a canonical proof transformation
\begin{equation}\label{equation:system-T:horizontal-cut-commutivity}
\begin{array}{r@{\quad}l}
\text{\textsf{Cut, Cut}:} &
\AxiomC{$\vdots$}\UnaryInfC{$\Gamma \infers A$}
\AxiomC{$\vdots$}\UnaryInfC{$\Delta \infers B$}
\AxiomC{$\vdots$}\UnaryInfC{$\Psi, A, \Xi, B, \Theta \infers C$}
\rulename{Cut}\BinaryInfC{$\Psi, A, \Xi, \Delta, \Theta \infers C$}
\rulename{Cut}\BinaryInfC{$\Psi, \Gamma, \Xi, \Delta, \Theta \infers C$} \DisplayProof\\\\ & \qquad \Rightarrow \qquad
\AxiomC{$\vdots$}\UnaryInfC{$\Delta \infers B$}
\AxiomC{$\vdots$}\UnaryInfC{$\Gamma \infers A$}
\AxiomC{$\vdots$}\UnaryInfC{$\Psi, A, \Xi, B, \Theta \infers C$}
\rulename{Cut}\BinaryInfC{$\Psi, \Gamma, \Xi, B, \Theta \infers C$}
\rulename{Cut}\BinaryInfC{$\Psi, \Gamma, \Xi, \Delta, \Theta \infers C$.} \DisplayProof
\end{array}
\end{equation}
A proof transformation we will rely upon heavily is cut decomposition: when cutting a composite term, we may transform the proof into one that cuts each of the sub-terms individually. Formally, 
\begin{equation}\label{equation:system-T:decomposing-cut}
\begin{array}{r@{\quad}l}
\text{\textsf{Cut $\ma$ Cut}:} &
    \AxiomC{$\vdots$}\UnaryInfC{$\Gamma \infers A$}
    \AxiomC{$\vdots$}\UnaryInfC{$\Delta \infers B$}
    \rulename{R-$\ma$}\BinaryInfC{$\Gamma, \Delta \infers A \ma B$}
    \AxiomC{$\vdots$}\UnaryInfC{$\Psi, A, B, \Theta \infers C$}
    \rulename{L-$\ma$}\UnaryInfC{$\Psi, A \ma B, \Theta \infers C$}
    \rulename{Cut}\BinaryInfC{$\Psi, \Gamma, \Delta, \Theta \infers C$.}\DisplayProof\\\\ & \qquad \Rightarrow \qquad
    \AxiomC{$\vdots$}\UnaryInfC{$\Gamma \infers A$}
    \AxiomC{$\vdots$}\UnaryInfC{$\Delta \infers B$}
    \AxiomC{$\vdots$}\UnaryInfC{$\Psi, A, B, \Theta \infers C$}
    \rulename{Cut}\BinaryInfC{$\Psi, A, \Delta, \Theta \infers C$}
    \rulename{Cut}\BinaryInfC{$\Psi, \Gamma, \Delta, \Theta \infers C$.} \DisplayProof
\end{array}
\end{equation}
Cut decomposition with the trivial decomposition (namely $\mt$) takes a special form, which we single out as the proof transformation
\begin{equation}\label{equation:system-T:removing-1-cut}
\begin{array}{r@{\quad}l}
\text{\textsf{$\mt$-Cut}:} &
    \rulename{R-$\mt$}\AxiomC{}\UnaryInfC{$\infers \mt$}
    \AxiomC{$\vdots$}\UnaryInfC{$\Psi, \Theta \infers C$}
    \rulename{L-$\mt$}\UnaryInfC{$\Psi, \mt, \Theta \infers C$}
    \rulename{Cut}\BinaryInfC{$\Psi, \Theta \infers C$.}\DisplayProof\\\\ & \qquad \Rightarrow \qquad
    \AxiomC{$\vdots$}\UnaryInfC{$\Psi, \Theta \infers C$.} \DisplayProof
\end{array}
\end{equation}

In a similar vein the cut rule can be commuted through a left-$\muland$ introduction. The two simpler cases occur when the left-$\muland$ introduction does not involve the term being cut. When this appears on the left the transformation is
\begin{equation}\label{equation:system-T:l-and-cut1}
\begin{array}{r@{\quad}l}
\text{\textsf{L-$\muland$/Cut(L)}:} &
\AxiomC{$\vdots$}\UnaryInfC{$\Gamma, C, D, \Delta \infers B$}
\rulename{L-$\muland$}\UnaryInfC{$\Gamma, C \muland D, \Delta \infers B$}
\AxiomC{$\vdots$}\UnaryInfC{$\Theta, B, \Psi \infers A$}
\rulename{Cut}\BinaryInfC{$\Theta, \Gamma, C \muland D, \Delta, \Psi \infers A$}\DisplayProof\\\\ & \qquad \Rightarrow \qquad
\AxiomC{$\vdots$}\UnaryInfC{$\Gamma, C, D, \Delta \infers B$}
\AxiomC{$\vdots$}\UnaryInfC{$\Theta, B, \Psi \infers A$}
\rulename{Cut}\BinaryInfC{$\Theta, \Gamma, C, D, \Delta, \Psi \infers A$}
\rulename{L-$\muland$}\UnaryInfC{$\Theta, \Gamma, C \muland D, \Delta, \Psi \infers A$.}\DisplayProof
\end{array}
\end{equation}
When this appears on the right one example of the transformation is
\begin{equation}\label{systemT:l-and-cut2}
\begin{array}{r@{\quad}l}
\text{\textsf{L-$\muland$/Cut(R)}:} &
\AxiomC{$\vdots$}\UnaryInfC{$\Gamma \infers B$}
\AxiomC{$\vdots$}\UnaryInfC{$\Delta, B, \Theta, C, D, \Psi \infers A$}
\rulename{L-$\muland$}\UnaryInfC{$\Delta, B, \Theta, C \muland D, \Psi \infers A$}
\rulename{Cut}\BinaryInfC{$\Delta, \Gamma, \Theta, C \muland D, \Psi \infers A$}\DisplayProof\\\\ & \qquad \Rightarrow \qquad
\AxiomC{$\vdots$}\UnaryInfC{$\Gamma \infers B$}
\AxiomC{$\vdots$}\UnaryInfC{$\Delta, B, \Theta, C, D, \Psi \infers A$}
\rulename{Cut}\BinaryInfC{$\Delta, \Gamma, \Theta, C, D, \Psi \infers A$}
\rulename{L-$\muland$}\UnaryInfC{$\Delta, \Gamma, \Theta, C \muland D, \Psi \infers A$.}\DisplayProof
\end{array}
\end{equation}
To be precise there is a similar transformation when the terms $C,D$ appear earlier in the sequent than the term to be cut $B$, however that figure is essentially identical. 

Likewise we can commute a cut rule with a right-$\ma$ introduction, as long as the combined term is not the one being cut. There is only one example of such a transformation
\begin{equation}\label{equation:system-T:r-and-cut}
\begin{array}{r@{\quad}l}
\text{\textsf{R-$\muland$/Cut}:} &
\AxiomC{$\vdots$}\UnaryInfC{$\Gamma \infers B$}
\AxiomC{$\vdots$}\UnaryInfC{$\Delta, B, \Theta \infers C$}
\AxiomC{$\vdots$}\UnaryInfC{$\Psi \infers D$}
\rulename{R-$\muland$}\BinaryInfC{$\Delta, B, \Theta, \Psi \infers C \muland D$}
\rulename{Cut}\BinaryInfC{$\Delta, \Gamma, \Theta, \Psi \infers C \muland D$} \DisplayProof\\\\ & \qquad \Rightarrow \qquad
\AxiomC{$\vdots$}\UnaryInfC{$\Gamma \infers B$}
\AxiomC{$\vdots$}\UnaryInfC{$\Delta, B, \Theta \infers C$}
\rulename{Cut}\BinaryInfC{$\Delta, \Gamma, \Theta \infers C$}
\AxiomC{$\vdots$}\UnaryInfC{$\Psi \infers D$}
\rulename{R-$\muland$}\BinaryInfC{$\Delta, \Gamma, \Theta, \Psi \infers C \muland D$.} \DisplayProof
\end{array}
\end{equation}

\subsection{Left and right identity transformations}

Suppose we have a proof of $\Gamma \infers A$. Then we construct a new proof by extending the given proof with the the identity and cut rules. Introducing the identity on the right gives the proof transformation
\begin{equation}\label{equation:system-T:right-identity-transformation}
\begin{array}{r@{\quad}c@{\qquad\Rightarrow\qquad}c}
\text{\textsf{R-Id}:} &
\AxiomC{$\vdots$}\UnaryInfC{$\Gamma \infers A$}\DisplayProof &
\AxiomC{$\vdots$}\UnaryInfC{$\Gamma \infers A$}
\rulename{Id}\AxiomC{}\UnaryInfC{$A \infers A$}
\rulename{Cut}\BinaryInfC{$\Gamma \infers A$.}\DisplayProof
\end{array}
\end{equation}
We may introduce an identity on the left for each term in $\Gamma$. Say $\Gamma = \Delta, B, \Theta$ gives a transformation 
\begin{equation}\label{equation:system-T:left-identity-transformation}
\begin{array}{r@{\quad}c@{\qquad\Rightarrow\qquad}c}
\text{\textsf{L-Id}:} &
\AxiomC{$\vdots$}\UnaryInfC{$\Delta,B,\Theta \infers A$}\DisplayProof &
\rulename{Id}\AxiomC{}\UnaryInfC{$B \infers B$}
\AxiomC{$\vdots$}\UnaryInfC{$\Delta,B,\Theta \infers A$}
\rulename{Cut}\BinaryInfC{$\Delta,B,\Theta \infers A$.}\DisplayProof
\end{array}
\end{equation}
The inverse of either of these transformations is an example of a cut-elimination: whenever the identity rule immediately precedes a cut deduction, both can be eliminated from the proof.

\subsection{Left-right introduction transformations}

Introduction rules for $\muland$ or $\multrue$ on the left or right naturally commute, which we capure in the following transformations.

\begin{equation}\label{equation:system-T:left-right-and-transformation}
\begin{array}{r@{\quad}c@{\qquad\Rightarrow\qquad}c}
    \text{\textsf{L-$\ma$}/\textsf{R-$\ma$}:} &
    \AxiomC{$\vdots$}\UnaryInfC{$\Gamma, A, B, \Delta \infers C$}
    \rulename{L-$\ma$}\UnaryInfC{$\Gamma, A \ma B, \Delta \infers C$}
    \AxiomC{$\vdots$}\UnaryInfC{$\Theta \infers D$}
    \rulename{R-$\ma$}\BinaryInfC{$\Gamma, A \ma B, \Delta, \Theta \infers C \ma D$}\DisplayProof &
    \AxiomC{$\vdots$}\UnaryInfC{$\Gamma, A, B, \Delta \infers C$}
    \AxiomC{$\vdots$}\UnaryInfC{$\Theta \infers D$}
    \rulename{R-$\ma$}\BinaryInfC{$\Gamma, A, B, \Delta, \Theta \infers C \ma D$}
    \rulename{L-$\ma$}\UnaryInfC{$\Gamma, A \ma B, \Delta, \Theta \infers C \ma D$.}\DisplayProof
\end{array}
\end{equation}

\begin{equation}\label{equation:system-T:left-1-right-and-transformation}
\begin{array}{r@{\quad}c@{\qquad\Rightarrow\qquad}c}
    \text{\textsf{L-$\mt$}/\textsf{R-$\ma$}:} &
    \AxiomC{$\vdots$}\UnaryInfC{$\Gamma, \Delta \infers B$}
    \rulename{L-$\mt$}\UnaryInfC{$\Gamma, \mt, \Delta \infers B$}
    \AxiomC{$\vdots$}\UnaryInfC{$\Theta \infers C$}
    \rulename{R-$\ma$}\BinaryInfC{$\Gamma, \mt, \Delta, \Theta \infers B \ma C$}\DisplayProof &
    \AxiomC{$\vdots$}\UnaryInfC{$\Gamma, \Delta \infers B$}
    \AxiomC{$\vdots$}\UnaryInfC{$\Theta \infers C$}
    \rulename{R-$\ma$}\BinaryInfC{$\Gamma, \Delta, \Theta \infers B \ma C$}
    \rulename{L-$\mt$}\UnaryInfC{$\Gamma, \mt, \Delta \Theta \infers B \ma C$.}\DisplayProof
\end{array}
\end{equation}

\section{The cut elimination theorem}\label{section:cut-elimination}

Gentzen's \emph{Hauptsatz} is cut elimination, which is to say that a proof in the system (the sequent calculus in his case) can be transformed into a proof of the stated inference that does not use the cut rule. Needless to say this can be quite involved. The  idea is to examine a general proof that ends in the cut rule
\begin{equation}\label{equation:System-T:cut}
    \AxiomC{$\Pi_1$}\noLine\UnaryInfC{$\vdots$}\UnaryInfC{$\Gamma \infers B$}
    \AxiomC{$\Pi_2$}\noLine\UnaryInfC{$\vdots$}\UnaryInfC{$\Delta, B \infers A$}
    \LeftLabel{$\Pi$:\quad}\rulename{Cut}\BinaryInfC{$\Gamma,\Delta \infers A$}
    \DisplayProof
\end{equation}
and attempt to commute the last step of either $\Pi_1$ or $\Pi_2$ with this cut rule. As such we can divide the proofs $\Pi_1$ and $\Pi_2$ into \emph{primary}, when this last step of the proof involves $A$ or $B$, or \emph{secondary}, when it involves $\Gamma$ or $\Delta$. The critical metric is the \emph{length} of a proof, which is the number of deductions used. Let us denote the length of a proof as $|\cdot|$; the length of proof (\ref{equation:System-T:cut}) is $|\Pi| = |\Pi_1| + |\Pi_2| + 1$.

A notable exception is when either $\Pi_1$ or $\Pi_2$ is the identity rule (either from an atomic proposition or via \cref{systemT:identity}), for which the cut can be eliminated directly using (\ref{equation:system-T:right-identity-transformation}, \ref{equation:system-T:left-identity-transformation}). That is, when $\Pi_1$ is the identity $\Pi$ is transformed
\begin{equation*}
\begin{array}{ccc}
    \AxiomC{}\UnaryInfC{$B \infers B$}
    \AxiomC{$\Pi_2$}\noLine\UnaryInfC{$\vdots$}\UnaryInfC{$\Delta, B \infers A$}
    \LeftLabel{$\Pi$:\quad}\rulename{Cut}\BinaryInfC{$\Delta, B \infers A$}\DisplayProof &
    \stackrel{\mathsf{L-Id}^{-1}}{\Longrightarrow} &
    \AxiomC{$\Pi_2$}\noLine\UnaryInfC{$\vdots$}\UnaryInfC{$\Delta, B \infers A$.}\DisplayProof
\end{array}
\end{equation*}
The new proof has length $|\Pi_2| = |\Pi|-2$. Similarly, when $\Pi_2$ is the identity
\begin{equation*}
\begin{array}{ccc}
    \AxiomC{$\Pi_1$}\noLine\UnaryInfC{$\vdots$}\UnaryInfC{$\Gamma \infers B$}
    \AxiomC{}\UnaryInfC{$B \infers B$}
    \LeftLabel{$\Pi$:\quad}\rulename{Cut}\BinaryInfC{$\Gamma \infers B$}\DisplayProof &
    \stackrel{\mathsf{R-Id}^{-1}}{\Longrightarrow} &
    \AxiomC{$\Pi_1$}\noLine\UnaryInfC{$\vdots$}\UnaryInfC{$\Gamma \infers B$}\DisplayProof
\end{array}
\end{equation*}
transforms $\Pi$ to a proof of length $|\Pi_1| = |\Pi| - 2$.

We begin with left-$\ma$ introduction. At this point, we will only consider the cases where it is secondary in $\Pi_1$ and $\Pi_2$, and reserve the case of being primary in $\Pi_2$ for later. When left-$\ma$ introduction is secondary in $\Pi_1$ we apply (\ref{equation:system-T:l-and-cut1}) to transform
\begin{equation*}
\begin{array}{cc@{\quad}c}
    \AxiomC{$\Pi_1'$}\noLine\UnaryInfC{$\vdots$}\UnaryInfC{$\Gamma, C, D\infers B$}\rulename{L-$\ma$}\UnaryInfC{$\Gamma,C\ma D \infers B$}
    \AxiomC{$\Pi_2$}\noLine\UnaryInfC{$\vdots$}\UnaryInfC{$\Delta, B \infers A$}
    \LeftLabel{$\Pi$:\quad}\rulename{Cut}\BinaryInfC{$\Gamma,C\ma D,\Delta \infers A$}\DisplayProof &
    \Longrightarrow &
    \AxiomC{$\Pi_1'$}\noLine\UnaryInfC{$\vdots$}\UnaryInfC{$\Gamma, C, D \infers B$}
    \AxiomC{$\Pi_2$}\noLine\UnaryInfC{$\vdots$}\UnaryInfC{$\Delta, B \infers A$}
    \LeftLabel{$\Pi'$:\quad}\rulename{Cut}\BinaryInfC{$\Gamma,\Delta, C, D \infers A$}\rulename{L-$\ma$}\UnaryInfC{$\Gamma, \Delta, C\ma D \infers A$.}\DisplayProof
\end{array}
\end{equation*}
Note that $|\Pi'| = |\Pi|$, however the cut rule is now part of a sub-proof of length $|\Pi|-1$. When left-$\ma$ introduction is secondary in $\Pi_2$ we apply (\ref{systemT:l-and-cut2}) to transform
\begin{equation*}
\begin{array}{cc@{\quad}c}
    \AxiomC{$\Pi_1$}\noLine\UnaryInfC{$\vdots$}\UnaryInfC{$\Gamma \infers B$}
    \AxiomC{$\Pi_2'$}\noLine\UnaryInfC{$\vdots$}\UnaryInfC{$\Delta, C, D, B \infers A$}\rulename{L-$\ma$}\UnaryInfC{$\Delta,C \ma D, B \infers A$}
    \LeftLabel{$\Pi$:\quad}\rulename{Cut}\BinaryInfC{$\Gamma,\Delta,C\ma D \infers A$}\DisplayProof &
    \Longrightarrow &
    \AxiomC{$\Pi_1$}\noLine\UnaryInfC{$\vdots$}\UnaryInfC{$\Gamma \infers B$}
    \AxiomC{$\Pi_2'$}\noLine\UnaryInfC{$\vdots$}\UnaryInfC{$\Delta, C, D, B \infers A$}
    \LeftLabel{$\Pi'$:\quad}\rulename{Cut}\BinaryInfC{$\Gamma,\Delta, C, D \infers A$}\rulename{L-$\ma$}\UnaryInfC{$\Gamma,\Delta,C\ma D \infers A$.}\DisplayProof
\end{array}
\end{equation*}
Similarly the cut rule is moved to a subproof of length $|\Pi|-1$.

Right-$\ma$ introduction involves the consequent of inferences, and so can only be primary in $\Pi_1$ and $\Pi_2$. We postpone the case when it is primary in $\Pi_1$ momentarily. When it is primary in $\Pi_2$ we apply (\ref{equation:system-T:r-and-cut}) to transform
\begin{equation*}
\begin{array}{cc@{\quad}c}
    \AxiomC{$\Pi_1$}\noLine\UnaryInfC{$\vdots$}\UnaryInfC{$\Gamma \infers B$}
    \AxiomC{$\Pi_{2,1}$}\noLine\UnaryInfC{$\vdots$}\UnaryInfC{$\Delta, B \infers C$}
    \AxiomC{$\Pi_{2,2}$}\noLine\UnaryInfC{$\vdots$}\UnaryInfC{$\Sigma \infers D$}
    \rulename{R-$\ma$}\BinaryInfC{$\Delta,\Sigma, B \infers C\ma D$}
    \LeftLabel{$\Pi$:\quad}\rulename{Cut}\BinaryInfC{$\Gamma,\Delta,\Sigma \infers C\ma D$}\DisplayProof &
    \Longrightarrow &
    \AxiomC{$\Pi_1$}\noLine\UnaryInfC{$\vdots$}\UnaryInfC{$\Gamma \infers B$}
    \AxiomC{$\Pi_{2,1}$}\noLine\UnaryInfC{$\vdots$}\UnaryInfC{$\Delta, B \infers C$}
    \LeftLabel{$\Pi'$:\quad}\rulename{Cut}\BinaryInfC{$\Gamma,\Delta \infers C$}
    \AxiomC{$\Pi_{2,2}$}\noLine\UnaryInfC{$\vdots$}\UnaryInfC{$\Sigma \infers D$}
    \rulename{R-$\ma$}\BinaryInfC{$\Delta,\Sigma, B \infers C\ma D$.}\DisplayProof
\end{array}
\end{equation*}
Just as before, $|\Pi'| = |\Pi|$ yet the cut rule is part of sub-proof of length $|\Pi|-1$.

Taking stock, we have examined every case for the last step of $\Pi_2$ except right-$\mt$ introduction and primary left-$\ma$ introduction. The first of these cannot happen: as right-$\mt$ introduction has no antecedents there are no terms to cut. Therefore we may assume that $\Pi_2$ ends in primary left-$\ma$ introduction and the proof has the form
\begin{prooftree}
    \AxiomC{$\Pi_1$}\noLine\UnaryInfC{$\vdots$}\UnaryInfC{$\Gamma \infers B\ma C$}
    \AxiomC{$\Pi_2'$}\noLine\UnaryInfC{$\vdots$}\UnaryInfC{$\Delta, B, C \infers A$}\rulename{L-$\ma$}\UnaryInfC{$\Delta, B\ma C \infers A$}
    \LeftLabel{$\Pi$:\quad}\rulename{Cut}\BinaryInfC{$\Gamma,\Delta \infers A$.}
\end{prooftree}
We note that this can only apply if the term being cut is composed of sub-terms via $\ma$, and so in particular cannot be $\mt$ or an atomic proposition. Similarly, we have considered every case for the last step of $\Pi_1$ except right-$\mt$ introduction and right-$\ma$ introduction. But this last step cannot be right-$\mt$ introduction as we have just seen that the term to be cut cannot be $\mt$. Therefore $\Pi_1$ ends in a right-$\ma$ introduction and so our proof is of the form 
\begin{prooftree}
    \AxiomC{$\Pi_{1,1}$}\noLine\UnaryInfC{$\vdots$}\UnaryInfC{$\Gamma \infers B$}
    \AxiomC{$\Pi_{1,2}$}\noLine\UnaryInfC{$\vdots$}\UnaryInfC{$\Sigma \infers C$}
    \rulename{R-$\ma$}\BinaryInfC{$\Gamma,\Sigma \infers B\ma C$}
    \AxiomC{$\Pi_2'$}\noLine\UnaryInfC{$\vdots$}\UnaryInfC{$\Delta, B, C \infers A$}\rulename{L-$\ma$}\UnaryInfC{$\Delta, B\ma C \infers A$}
    \LeftLabel{$\Pi$:\quad}\rulename{Cut}\BinaryInfC{$\Gamma,\Sigma,\Delta \infers A$.}
\end{prooftree}
Note $|\Pi| = |\Pi_{1,1}| + |\Pi_{1,2}| + |\Pi_2'| + 3$. We replace this proof with
\begin{prooftree}
    \AxiomC{$\Pi_{1,1}$}\noLine\UnaryInfC{$\vdots$}\UnaryInfC{$\Gamma \infers B$}
    \AxiomC{$\Pi_{1,2}$}\noLine\UnaryInfC{$\vdots$}\UnaryInfC{$\Sigma \infers C$}
    \AxiomC{$\Pi_2'$}\noLine\UnaryInfC{$\vdots$}\UnaryInfC{$\Delta, B, C \infers A$}
    \rulename{Cut}\BinaryInfC{$\Sigma,\Delta, B \infers A$}
    \LeftLabel{$\Pi'$:\quad}\rulename{Cut}\BinaryInfC{$\Gamma,\Sigma,\Delta \infers A$.}
\end{prooftree}
Note that while we have not commuted the final cut into an earlier proof, the new proof is shorter: $|\Pi'| = |\Pi_{1,1}| + |\Pi_{1,2}| + |\Pi_2'| + 2$.

\begin{theorem}[Cut elimination]
\label{thm:cut_elim}
    If a statement $\Gamma \infers A$ has a proof in System $\system{T}$ then it has a proof in System $\system{T}$ that does not use the cut rule.
\end{theorem}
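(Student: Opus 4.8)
The plan is to isolate the real content in a sub-lemma: \emph{if a proof ends in a cut rule whose two immediate sub-proofs $\Pi_1$ (of $\Gamma\infers B$) and $\Pi_2$ (of $\Delta,B\infers A$) are cut-free, then $\Gamma,\Delta\infers A$ has a cut-free proof.} The theorem follows from this by a straightforward outer induction on the number of cut rules occurring in a given proof: if that number is positive, choose a cut with no cut rule above it in the tree, so that its two premise sub-proofs are automatically cut-free; apply the sub-lemma to replace the sub-proof rooted at that cut by a cut-free one; the total number of cuts has strictly decreased, so repeat.

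To prove the sub-lemma, I would induct on the pair $(|B|,\,|\Pi_1|+|\Pi_2|)$ ordered lexicographically, where $|B|$ is the number of connectives in the cut formula and $|\cdot|$ is proof length, by case analysis on the last rules of $\Pi_1$ and $\Pi_2$; this is essentially the analysis rehearsed in the discussion preceding the theorem. If $\Pi_1$ or $\Pi_2$ is an instance of the identity rule, the inverses of transformations \eqref{equation:system-T:right-identity-transformation}--\eqref{equation:system-T:left-identity-transformation} delete the cut outright. If the last rule of $\Pi_1$ acts only on the antecedent (left-$\ma$, left-$\mt$, or exchange) it cannot involve the consequent $B$, hence is secondary, and transformation \eqref{equation:system-T:l-and-cut1} --- with the evident $\mt$ and exchange analogues --- commutes the cut above it: the cut formula is unchanged while $|\Pi_1|+|\Pi_2|$ drops by one, so the inner induction hypothesis applies. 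Symmetrically, if the last rule of $\Pi_2$ is secondary --- a rule acting on the antecedent but not involving $B$, or any right-$\ma$ introduction, which always builds the consequent $A$ rather than the cut formula --- transformations \eqref{systemT:l-and-cut2} and \eqref{equation:system-T:r-and-cut} commute the cut above it and the inner hypothesis again applies. Right-$\mt$ introduction cannot be the last rule of $\Pi_2$, since it has no antecedent and hence nothing to cut.

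What remains is the principal case, in which the last rule of $\Pi_2$ is a left-introduction of the cut formula itself, forcing $B=\mt$ or $B=C\ma D$. In either case the last rule of $\Pi_1$ can no longer be one acting on the antecedent alone nor the identity, so it is a right-introduction; right-$\mt$ introduction forces $B=\mt$ and right-$\ma$ introduction forces $B=C\ma D$, so the two possibilities match up. When $B=\mt$ --- so $\Pi_1$ ends in right-$\mt$ introduction and $\Pi_2$ ends in left-$\mt$ introduction of that $\mt$ --- transformation \eqref{equation:system-T:removing-1-cut} removes the cut. When $B=C\ma D$ --- so $\Pi_1$ ends in right-$\ma$ introduction and $\Pi_2$ ends in left-$\ma$ introduction of that conjunction --- transformation \eqref{equation:system-T:decomposing-cut} replaces the single cut on $C\ma D$ by two nested cuts, one on $D$ and one on $C$. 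Both new cuts have cut-free premises (these are sub-proofs of $\Pi_1$ and $\Pi_2$) and a strictly smaller cut formula, so the \emph{outer} induction hypothesis applies: use it first to turn the inner cut into a cut-free proof, and then --- now facing a cut with cut-free premises and cut formula $C$, smaller than $C\ma D$ --- to turn the outer cut into a cut-free proof. This exhausts the cases and proves the sub-lemma.

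The main obstacle is the tension between the two reduction moves in this last case and the choice of induction measure that resolves it: commuting a cut past a secondary rule leaves the cut formula fixed and can only be paid for by a decrease in proof length, whereas decomposing a cut through $\ma$ may lengthen the proof and momentarily replaces one cut by two, so it must be paid for by the strict drop in $|B|$ --- the lexicographic pair $(|B|,\,|\Pi_1|+|\Pi_2|)$ accommodates both. The outer reduction to cuts with nothing above them is exactly what guarantees the premises are cut-free in \eqref{equation:system-T:decomposing-cut}, so the induction hypothesis for the sub-lemma is available there. The exchange rule, present in $\sysT$ but absent from $\sysT'$, contributes only routine extra secondary cases: commute the cut past the permutation and reinstate the order below it.
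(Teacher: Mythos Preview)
Your argument is correct and follows the classical Gentzen scheme: an outermost induction on the number of cuts reduces the problem to eliminating a single topmost cut, and that is handled by a lexicographic induction on $(|B|,|\Pi_1|+|\Pi_2|)$. The paper takes a different, more economical route that is available in this minimal fragment: a single induction on total proof length. The reason this works here is that the principal reduction \eqref{equation:system-T:decomposing-cut} actually \emph{shortens} the proof by one --- the R-$\ma$ and L-$\ma$ steps disappear and only one extra cut is introduced, so $|\Pi'|=|\Pi|-1$ --- while every secondary commutation pushes the cut into a strictly shorter subproof. No separate induction on the size of the cut formula is needed. What the paper's approach buys is brevity for this particular logic; what yours buys is robustness, since the lexicographic measure is precisely what becomes necessary once connectives like $\multimap$ are added and the principal reduction can lengthen the proof.

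One terminological wobble worth fixing: in the principal case you invoke ``the \emph{outer} induction hypothesis'' to dispatch the two new cuts on $C$ and $D$, but what you are actually using is the first component of your lexicographic induction inside the sub-lemma (the strict drop in $|B|$), not the outermost induction on the number of cuts --- which, after all, momentarily goes up from one to two. The reasoning is sound, only the label is misleading.
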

\begin{proof}
    We induct on the length of a proof. If the proof of a statement is length $1$, then that statement must be either the identity, or right-$\mt$ introduction. Neither of these involve the cut rule. Inductively assume that any proof of length $n$ can be rewritten into a proof with length no more than $n$ in a way that does not involve the cut rule. Let $\Pi$ be a proof of $\Gamma \infers A$ of length $n+1$. Note that any cut rules appearing in sub-proofs of $\Pi$ may be can be eliminated inductively, so we may assume that only the last step of $\Pi$ may be the cut rule. If the last step of $\Pi$ is not the cut rule, then $\Pi$ is already a cut-free proof of $\Gamma \infers A$. If the last step of the proof of $\Gamma \infers A$ is the cut rule, then as we have shown above we either (i) eliminate the cut-rule obtaining a proof of length $\leq n$, (ii) commute the cut rule into a subproof where we may inductively eliminate it, or (iii) replace it with two cut rules in a new proof of overall length $n$ that again may be rewritten into a cut-free proof by the inductive hypothesis.
\end{proof}

\begin{corollary}\label{corollary:normal-form}
    Suppose $\Gamma \infers_\system{T} A$, then $\Gamma$ is finite. Moreover, by writing $\Gamma = B_1, \dots, B_k$, if we express
    \begin{itemize}
        \item $B_1 \ma \cdots \ma B_k = P_1 \ma \cdots \ma P_m \ma \mt^{\ma b}$ and
        \item $A = Q_1 \ma \cdots \ma Q_n \ma \mt^{\ma a}$
    \end{itemize}
    where $P_i,Q_j \in \Phi$ are atomic propositions then $m = n$ and there exists a permutation $\pi$ such that $Q_j = P_{\pi(i)}$. That is, modulo extra $\mt$ terms, $\Gamma$ and $A$ are formed from precisely the same multiset of atomic propositions. Conversely, if $\Gamma$ and $A$ are formed from the same multiset of atomic propositions, modulo extra $\mt$ terms, then $\Gamma \infers_\system{T} A$.
\end{corollary}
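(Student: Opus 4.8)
The plan is to track, through any derivation, the multiset of atomic propositions occurring on each side of an inference, and to show this ``atomic content'' is a complete invariant. First I would attach to every term $T$ a finite multiset $\mu(T)$ over $\Phi$, defined by $\mu(P) = \{P\}$ for $P \in \Phi$, $\mu(\mt) = \emptyset$, and $\mu(A \ma B) = \mu(A) \uplus \mu(B)$, extended to sequents by $\mu(\Gamma) = \biguplus_{B \in \Gamma} \mu(B)$. By construction, if a term is written in the stated shape $P_1 \ma \cdots \ma P_m \ma \mt^{\ma b}$ then $\mu$ of that term is exactly $\{P_1,\dots,P_m\}$; so ``$m = n$ and $Q_j = P_{\pi(j)}$ for some permutation $\pi$'' is precisely the assertion $\mu(\Gamma) = \mu(A)$, and the corollary reduces to: $\Gamma \infers_\sysT A$ implies $\Gamma$ finite and $\mu(\Gamma) = \mu(A)$, and conversely any finite $\Gamma$ with $\mu(\Gamma) = \mu(A)$ satisfies $\Gamma \infers_\sysT A$.

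For the forward direction I would induct on the structure of a derivation of $\Gamma \infers_\sysT A$, proving simultaneously that $\Gamma$ is finite and $\mu(\Gamma) = \mu(A)$. The base cases are immediate: Id gives $\mu(\{P\}) = \{P\} = \mu(P)$ and R-$\mt$ gives $\mu(\emptyset) = \emptyset = \mu(\mt)$, each with an antecedent of size at most one. Finiteness then propagates because L-$\mt$, L-$\ma$, R-$\ma$, Ex and Cut each produce a finite antecedent from finite ones. For the invariant: Ex leaves $\mu$ unchanged; L-$\mt$ appends a term of empty content; L-$\ma$ replaces $A,B$ by $A \ma B$, leaving $\mu$ unchanged by the definition of multiset sum; R-$\ma$ gives $\mu(\Gamma,\Delta) = \mu(\Gamma) \uplus \mu(\Delta) = \mu(A) \uplus \mu(B) = \mu(A \ma B)$; and Cut gives $\mu(\Gamma,\Delta) = \mu(\Gamma) \uplus \mu(\Delta) = \mu(B) \uplus \mu(\Delta) = \mu(\Delta,B) = \mu(A)$. (Alternatively one may first invoke Theorem~\ref{thm:cut_elim} and skip the Cut case, but it costs nothing to keep it.) Thus every provable $\Gamma \infers_\sysT A$ has $\mu(\Gamma) = \mu(A)$.

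For the converse, suppose $\Gamma = B_1,\dots,B_k$ is finite with $M := \mu(\Gamma) = \mu(A)$. I would first record a normalization lemma: every term $T$ satisfies $T \equiv P_1 \ma \cdots \ma P_m$ when $\mu(T) = \{P_1,\dots,P_m\} \neq \emptyset$, and $T \equiv \mt$ when $\mu(T) = \emptyset$; this follows by structural induction on $T$ from the associativity, commutativity and unit equivalences of Proposition~\ref{proposition:system-T:algebra}, together with Lemma~\ref{systemT:order}, which makes $\ma$ a congruence for $\equiv$. Two consequences: (i) by Proposition~\ref{systemT:identity} and repeated R-$\ma$ we get $B_1,\dots,B_k \infers B_1 \ma \cdots \ma B_k$ (reading the empty product as $\mt$ via R-$\mt$ when $k=0$); and (ii) since both $B_1 \ma \cdots \ma B_k$ and $A$ have atomic content $M$, the normalization lemma plus the commutativity/exchange equivalences give $B_1 \ma \cdots \ma B_k \equiv A$. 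Finally, the remark following Table~\ref{system-T:rules-table}---an occurrence of one side of an $\equiv$ may be replaced by the other via Cut---splices (i) and (ii) together to yield $\Gamma \infers A$.

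I expect the only genuine work to be the converse, specifically the normalization lemma and the bookkeeping that any two terms with the same atomic content are $\equiv$, so that the degenerate all-$\mt$ case and the unspecified parenthesization of $B_1 \ma \cdots \ma B_k$ are both handled uniformly; the forward direction is a one-line invariant check once $\mu$ is set up. This also justifies calling the $P_1 \ma \cdots \ma P_m \ma \mt^{\ma b}$ presentation a normal form for terms of $\sysT$.
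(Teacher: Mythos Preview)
Your proof is correct. The forward direction is essentially the paper's argument: both track the multiset of atomic propositions through a derivation and observe that every rule preserves it. The paper first passes to a cut-free proof via Theorem~\ref{thm:cut_elim} and then inspects only the cut-free rules; you instead verify the invariant for Cut directly (noting the alternative yourself), which is a harmless and slightly more self-contained variation.

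The converse is where the two approaches diverge. The paper builds an explicit cut-free derivation from scratch: apply Id to each shared atomic proposition, assemble the consequent $A$ with R-$\ma$ (using R-$\mt$ for each $\mt$ factor), then use Exchange, L-$\ma$, and L-$\mt$ to pack the antecedent into the shape of $\Gamma$. Your route instead leverages Proposition~\ref{proposition:system-T:algebra} and Lemma~\ref{systemT:order} to prove a normalization lemma ($T \equiv P_1 \ma \cdots \ma P_m$ or $T \equiv \mt$), deduces $B_1 \ma \cdots \ma B_k \equiv A$, and then cuts this against $B_1,\dots,B_k \infers B_1 \ma \cdots \ma B_k$. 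Both are valid; the paper's construction is more in the spirit of the corollary as a \emph{corollary of cut elimination} (it exhibits the cut-free witness directly), while yours is more modular in that it reuses the algebraic equivalences already established, at the cost of reintroducing Cut into the argument.
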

\begin{proof}
    Write a cut-free proof of $\Gamma \infers_\system{T} A$. Examining each of the deduction rules in this proof we see: (i) the same atomic proposition is added to antecedent and consequent (identity), (ii) a $\mt$ term is added to the antecedent or as a tensor factor to the consequent (left-$\mt$ or right-$\mt$ introduction), or the composition of atomic propositions (as factors in terms) in the antecedent and consequent are unchanged (left-$\ma$ or right-$\ma$ introduction or exchange). As there can be only finitely many rules in a proof, a finite number of atomic propositions and $\mt$ terms can be introduced and hence $\Gamma$ is finite. Moreover, atomic propositions are introduced to the antecedent and consequent in matching pairs, and therefore modulo $\mt$ terms the antecedent and consequent have the same factors up to reordering.
    
    The converse is straightforward: we construct a proof of $\Gamma \infers A$ by using the identity rule for each contained atomic proposition, use left-$\ma$ and right-$\ma$ introduction to assemble $A$ and the terms of $\Gamma$, and finally use left-$\mt$ and right-$\mt$ introduction to add any additional $\mt$ terms to the sequent and consequent.
\end{proof}

\section{Polynomial systems}\label{section:polynomials}

In this section we consider extensions to $\sysT$ (and also $\sysT'$) in terms of polynomial systems and provide an analogue of Lambek's lifting theorems \cite{lambek1974functional} for such extensions. From the resource theory perspective, these extensions will help to express resource conversions as well as the property of some resource to be freely available or freely disposable. Notice from left-$\mt$ and right-$\mt$ introduction rules in Table~\ref{system-T:rules-table} that the identity element $\multrue$ is the only one that can be arbitrarily added to the consequent or antecedent of any proof statement $\Gamma \infers A$. That is, $\multrue$ is a free resource. This indicates that adding rules similar to \LI\ and \RI\ for other terms can signify if they are free, as discussed below.

\begin{definition}
    Let $X$ be a term. Define $\system{T}[X]$ to be the logical system whose rules are all those of $\system{T}$ together with the right-$X$ introduction rule:
    \begin{itemize}
        \item (R-$X$) $\infers X$.
    \end{itemize}
\end{definition}

\begin{proposition}\label{proposition:polynomials:introduction-rules}
    In $\system{T}[X]$ we have
    \begin{enumerate}
        \item $\Gamma, X \infers A$ implies $\Gamma \infers A$, and
        \item $\Gamma \infers A$ implies $\Gamma \infers A\muland X$.
    \end{enumerate}
\end{proposition}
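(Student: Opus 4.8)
The plan is to obtain both parts directly from the single new axiom (R-$X$), namely $\infers X$, exactly as Lemma~\ref{systemT:left-1-elimination} and Proposition~\ref{proposition:system-T:algebra}(2) extract the corresponding facts about the unit $\multrue$ from (R-$\multrue$). No induction or appeal to cut elimination is needed: the entire content is that in $\sysT[X]$ the term $X$ now plays, on the right, the role that $\multrue$ always played.

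For part (1), I would cut the hypothesised inference $\Gamma, X \infers A$ against the one-line proof of $\infers X$ furnished by (R-$X$):
\begin{prooftree}
    \rulename{R-$X$}\AxiomC{}\UnaryInfC{$\infers X$}
    \AxiomC{$\Gamma, X \infers A$}
    \rulename{Cut}\BinaryInfC{$\Gamma \infers A$.}
\end{prooftree}
This is the $\sysT[X]$-analogue of the proof of Lemma~\ref{systemT:left-1-elimination}, with $X$ in place of $\multrue$; as there, the exchange rule of $\sysT$ lets one ignore where $X$ sits in the antecedent.

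For part (2), I would combine the hypothesised proof of $\Gamma \infers A$ with the proof of $\infers X$ from (R-$X$) by a single right-$\muland$ introduction, taking the empty sequent as the antecedent of the second premise:
\begin{prooftree}
    \AxiomC{$\Gamma \infers A$}
    \rulename{R-$X$}\AxiomC{}\UnaryInfC{$\infers X$}
    \rulename{\RO}\BinaryInfC{$\Gamma \infers A \muland X$.}
\end{prooftree}

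There is essentially no obstacle here; the only point worth flagging is that the same two derivations go through verbatim in $\sysT'[X]$ (using the more general form of the cut rule in part (1) to accommodate the position of $X$), so the proposition holds for both systems. It is worth recording, for the lifting theorems to follow, that these are exactly the ``$X$ is freely available'' moves: part (1) discards an occurrence of $X$ from the available resources, and part (2) appends $X$ to the resource produced --- the two-sided, general-$X$ manifestation of the way $\multrue$ behaves under \LI\ and \RI\ in $\sysT$.
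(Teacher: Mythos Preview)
Your proposal is correct and matches the paper's proof essentially verbatim: part (1) cuts the axiom $\infers X$ against the hypothesis, and part (2) applies right-$\muland$ introduction to the hypothesis and $\infers X$. Your added remarks (the analogy with $\multrue$ and the observation that the argument goes through in $\sysT'[X]$) are accurate commentary but not required for the proof.
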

\begin{proof}
    The first result follows directly from the cut rule:
    \begin{prooftree}
        \rulename{R-$X$}\AxiomC{}\UnaryInfC{$\infers X$}\AxiomC{$\Gamma, X \infers A$}
        \rulename{Cut}\BinaryInfC{$\Gamma \infers A$.}
    \end{prooftree}
    The second one follows directly from right-$\muland$ introduction:
     \begin{prooftree}
        \AxiomC{$\Gamma \infers A$}\rulename{R-$X$}\AxiomC{}\UnaryInfC{$\infers X$}
        \rulename{Cut}\BinaryInfC{$\Gamma \infers A\muland X$.}
    \end{prooftree}
\end{proof}

The following result is a direct analogue of \cite[Proposition 1.5]{lambek1974functional}.

\begin{theorem}\label{systemT:lambek-lifting-theorem}
    There exists a proof of $\Gamma \infers_{\system{T}[X]} A$ if and only if there exists a proof of $\Gamma, X^m \infers_\system{T} A$ for some $m \geq 0$.
\end{theorem}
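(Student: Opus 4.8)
The plan is to treat the two implications separately. The $(\Leftarrow)$ direction is immediate from Proposition~\ref{proposition:polynomials:introduction-rules}, and the $(\Rightarrow)$ direction goes by induction on the length $|\Pi|$ of a proof $\Pi$ in $\sysT[X]$.

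For $(\Leftarrow)$, suppose $\Gamma, X^m \infers_\sysT A$ for some $m \geq 0$. Every deduction rule of $\sysT$ is also a rule of $\sysT[X]$, so the very same proof tree witnesses $\Gamma, X^m \infers_{\sysT[X]} A$. Applying Proposition~\ref{proposition:polynomials:introduction-rules}(1) a total of $m$ times --- at each stage using the exchange rule to move one copy of $X$ into the position required by the proposition and then deleting it --- yields $\Gamma \infers_{\sysT[X]} A$.

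For $(\Rightarrow)$, let $\Pi$ be a proof of $\Gamma \infers_{\sysT[X]} A$; I claim there is some $m \geq 0$ admitting a $\sysT$-proof of $\Gamma, X^m \infers_\sysT A$, and I will establish this by induction on $|\Pi|$. The guiding idea is that $m$ should count the occurrences of the R-$X$ rule in $\Pi$: each leaf $\infers X$, which is not a rule of $\sysT$, is replaced by the $\sysT$-provable sequent $X \infers X$ --- valid even for a composite term $X$ by Proposition~\ref{systemT:identity} --- and the extra antecedent $X$ so created is carried down to the root of the tree. Concretely, for the base cases: if $\Pi$ is the identity rule or R-$\mt$, take $m = 0$; if $\Pi$ is R-$X$, take $m = 1$ together with the proof of $X \infers X$. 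For the inductive step one examines the last rule of $\Pi$. If it is single-premise (exchange, L-$\mt$, or L-$\ma$), apply the inductive hypothesis to the premise and reapply the rule, using exchange to keep the accumulated block $X^m$ at the right-hand end of the antecedent, clear of whichever leftmost terms the rule acts on; here $m$ is unchanged. If it is two-premise (cut or R-$\ma$), apply the inductive hypothesis to each premise to obtain exponents $m_1$ and $m_2$, use exchange to bring the two blocks $X^{m_1}$ and $X^{m_2}$ together and into the positions the rule requires, and reapply the rule; this produces a $\sysT$-proof with $m = m_1 + m_2$.

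The step requiring the most care is exactly this bookkeeping --- tracking where the accumulated $X$ antecedents sit and checking that the exchange rule is genuinely available at every point to relocate them into whatever slot the next rule application demands (in the cut case, for instance, the cut-term must be made to sit at the far end of the second premise's antecedent). This is precisely why the statement is phrased for $\sysT[X]$ rather than for an extension of $\sysT'$: without exchange one would have to keep track of exactly where each $X$ is inserted. Notably, no appeal to the cut-elimination theorem is needed.
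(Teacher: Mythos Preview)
Your proof is correct and follows essentially the same idea as the paper: replace each leaf $\infers X$ by the identity $X \infers X$ and carry the extra antecedent $X$ down to the root, with exchange handling the bookkeeping. The only difference is organizational --- you run a clean bottom-up induction on the last rule of $\Pi$, whereas the paper performs the transformation top-down at each R-$X$ leaf and its immediate successor rule, leaving the propagation of the extra $X$ through the remaining proof implicit; your version is arguably the more rigorous of the two.
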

\begin{proof}
    Suppose that there exists a proof in $\system{T}$ of the form
    \begin{prooftree}
        \AxiomC{$\vdots$}\UnaryInfC{$\Gamma, X^m \infers A$.}
    \end{prooftree}
    Then as the rules of $\system{T}$ are a subset of those of $\system{T}[X]$ this is a proof in $\system{T}[X]$ as well. If $m = 0$ there is nothing to show, otherwise we extend this proof as follows:
     \begin{prooftree}
        \AxiomC{}\UnaryInfC{$\infers X$}
        \AxiomC{$\vdots$}\UnaryInfC{$\Gamma, X^{m-1}, X \infers A$}
        \rulename{Cut}\BinaryInfC{$\Gamma, X^{m-1} \infers A$.}
    \end{prooftree}
    We obtain a proof of $\Gamma \infers_{\system{T}[X]} A$ by induction.
    
    Conversely suppose we have a proof of $\Gamma \infers_{\system{T}[X]} A$; we focus on the locations in this proof where right-$X$ introduction was used. If the entire proof is
    \begin{prooftree}
        \rulename{R-$X$}\AxiomC{}\UnaryInfC{$\infers X$,}
    \end{prooftree}
    then we transform this into a proof in $\system{T}$ of the desired form (with $m=1$):
     \begin{prooftree}
        \rulename{Id}\AxiomC{}\UnaryInfC{$X\infers X$.}
    \end{prooftree}
    Otherwise this rule appears in the assumptions of another rule. If this other rule is the cut rule then this deduction has the form
    \begin{prooftree}
        \rulename{R-$X$}\AxiomC{}\UnaryInfC{$\infers X$}
        \AxiomC{$\vdots$}\UnaryInfC{$\Delta, X \infers B$}
        \rulename{Cut}\BinaryInfC{$\Delta \infers B$.}
    \end{prooftree}
    We transform this into the deduction
     \begin{prooftree}
        \rulename{Id}\AxiomC{}\UnaryInfC{$X \infers X$}
        \AxiomC{$\vdots$}\UnaryInfC{$\Delta, X \infers B$}
        \rulename{Cut}\BinaryInfC{$\Delta, X \infers B$.}
    \end{prooftree}
   If the other rule is left-$\multrue$ introduction then the deduction is of the form
    \begin{prooftree}
        \rulename{R-$X$}\AxiomC{}\UnaryInfC{$\infers X$}
        \rulename{L-$\multrue$}\UnaryInfC{$\multrue \infers X$.}
    \end{prooftree}
    Similarly, we transform this to the deduction
    \begin{prooftree}
        \rulename{Id}\AxiomC{}\UnaryInfC{$X \infers X$}
        \rulename{L-$\multrue$}\UnaryInfC{$\multrue, X \infers X$.}
    \end{prooftree}
    Finally, right-$\muland$ introduction is the only other rule that could have right-$X$ introduction as an assumption, which then has the form
    \begin{prooftree}
        \rulename{R-$X$}\AxiomC{}\UnaryInfC{$\infers X$}
        \AxiomC{$\vdots$}\UnaryInfC{$\Delta \infers B$}
        \rulename{R-$\muland$}\BinaryInfC{$\Delta \infers X \muland B$.}
    \end{prooftree}
    We transform this to
     \begin{prooftree}
        \rulename{Id}\AxiomC{}\UnaryInfC{$X \infers X$}
        \AxiomC{$\vdots$}\UnaryInfC{$\Delta \infers B$}
        \rulename{R-$\muland$}\BinaryInfC{$\Delta, X \infers X \muland B$.}
    \end{prooftree}
   These transformations eliminate all instances of the right-$X$ introduction rule at the cost of introducing one $X$ into the antecedent for each instance. Hence the proof of $\Gamma \infers_{\system{T}[X]} A$ is transformed into a proof of $\Gamma, X^m \infers_{\system{T}} A$ where $m$ is the number of uses of right-$X$ introduction.
\end{proof}

We extend to more terms inductively: $\system{T}[X_1, \dots, X_n] = \system{T}[X_1, \dots, X_{n-1}][X_n]$. We leave it to the reader to adapt the proof of the theorem above for the following result.

\begin{corollary}
    There exists a proof of $\Gamma \infers_{\system{T}[X_1,\dots,X_n]} A$ if and only if there exists a proof of $\Gamma, X_1^{m_1}, \dots, X_n^{m_n} \infers_\system{T} A$ for some $m_j \geq 0$ and $j=1, \dots, n$.
\end{corollary}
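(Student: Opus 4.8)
The plan is to prove the $n$-variable lifting statement by induction on $n$, using Theorem~\ref{systemT:lambek-lifting-theorem} (the single-variable case) as both the base case and the inductive engine. The key observation is that $\system{T}[X_1,\dots,X_n]$ is defined as $\system{T}[X_1,\dots,X_{n-1}][X_n]$, so it is the single-variable extension of the system $\system{T}[X_1,\dots,X_{n-1}]$ by the new term $X_n$.

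First I would set up the induction. For $n=1$ the claim is precisely Theorem~\ref{systemT:lambek-lifting-theorem}. For the inductive step, assume the corollary holds for $n-1$ variables over the system $\system{T}$. I would then apply Theorem~\ref{systemT:lambek-lifting-theorem}, but with the role of ``$\system{T}$'' played by the logic $\system{T}[X_1,\dots,X_{n-1}]$ and the role of ``$X$'' played by $X_n$. The theorem's proof nowhere uses the specific rules of $\system{T}$ beyond the cut rule, the identity rule, left-$\multrue$ introduction, and right-$\muland$ introduction — and all of these are present in $\system{T}[X_1,\dots,X_{n-1}]$, since it contains all rules of $\system{T}$ plus extra right-$X_j$ introductions. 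Hence the theorem applies verbatim: $\Gamma \infers_{\system{T}[X_1,\dots,X_{n-1}][X_n]} A$ if and only if $\Gamma, X_n^{m_n} \infers_{\system{T}[X_1,\dots,X_{n-1}]} A$ for some $m_n \geq 0$.

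Next I would feed the resulting inference $\Gamma, X_n^{m_n} \infers_{\system{T}[X_1,\dots,X_{n-1}]} A$ into the inductive hypothesis for $n-1$ variables, treating $\Gamma, X_n^{m_n}$ as the new antecedent sequent. This yields a proof of $\Gamma, X_n^{m_n}, X_1^{m_1}, \dots, X_{n-1}^{m_{n-1}} \infers_{\system{T}} A$ for some $m_1, \dots, m_{n-1} \geq 0$. A single application of the exchange rule (valid in $\system{T}$) reorders the antecedent into the stated form $\Gamma, X_1^{m_1}, \dots, X_n^{m_n} \infers_{\system{T}} A$. The converse direction is the easier half: given a $\system{T}$-proof of $\Gamma, X_1^{m_1}, \dots, X_n^{m_n} \infers A$, every rule of $\system{T}$ is a rule of $\system{T}[X_1,\dots,X_n]$, so this is already a proof in the larger system, and we then repeatedly cut against the axiom $\infers X_j$ (for each $j$, used $m_j$ times) to discharge all the $X_j$ copies, exactly as in the forward direction of Theorem~\ref{systemT:lambek-lifting-theorem}.

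The only real subtlety — and the step worth spelling out carefully — is the justification that Theorem~\ref{systemT:lambek-lifting-theorem} applies with $\system{T}$ replaced by an intermediate polynomial system: one must check that its proof is genuinely modular in this sense, i.e.\ that adding further axioms of the form $\infers X_j$ to the base logic does not interfere with the case analysis over which rule sits directly below a right-$X_n$ introduction. Since those additional axioms are themselves leaves (they have no assumptions), they never appear \emph{below} another rule as the ``parent'' being commuted past, so the case analysis in the proof of Theorem~\ref{systemT:lambek-lifting-theorem} is unaffected. Once this modularity is noted, the corollary is a routine two-layer induction and nothing further is needed.
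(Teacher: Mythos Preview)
Your proposal is correct and follows exactly the approach the paper intends: the paper explicitly sets up the inductive definition $\system{T}[X_1,\dots,X_n] = \system{T}[X_1,\dots,X_{n-1}][X_n]$ and then leaves the adaptation to the reader, which is precisely the induction on $n$ you carry out. Your explicit check that the added leaf axioms $\infers X_j$ do not interfere with the case analysis in Theorem~\ref{systemT:lambek-lifting-theorem} is the one nontrivial point, and you handle it correctly.
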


\begin{definition}
    Let $X$ be a term. Define $\system{T}[\bar{X}]$ to be the logical system whose rules are all those of $\system{T}$ together with
    \begin{itemize}
        \item (L-$X$) $X \infers \multrue$.
    \end{itemize}
\end{definition}

\begin{proposition}\label{proposition:polynomials:dual-introduction-rules}
    In $\system{T}[\bar{X}]$ we have
    \begin{enumerate}
        \item $\Gamma \infers A$ implies $\Gamma, X \infers A$, and
        \item $\Gamma \infers A\otimes X$ implies $\Gamma \infers A$.
    \end{enumerate}
\end{proposition}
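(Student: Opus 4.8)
The two items are the formal duals of Proposition~\ref{proposition:polynomials:introduction-rules}: the role played there by the right-$X$ axiom $\infers X$ is taken here by the left-$X$ axiom $X \infers \multrue$, which is the defining rule of $\system{T}[\bar X]$. So the plan is to retrace that argument in dual form, using cuts against $X \infers \multrue$ together with the unit lemmas of Section~\ref{section:syntax}.

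For part (1), I would start from a proof of $\Gamma \infers A$, attach a unit to the antecedent by left-$\multrue$ introduction, and cut the result against $X \infers \multrue$:
\begin{prooftree}
\rulename{L-$X$}\AxiomC{}\UnaryInfC{$X \infers \multrue$}
\AxiomC{$\Gamma \infers A$}
\rulename{L-$\multrue$}\UnaryInfC{$\Gamma, \multrue \infers A$}
\rulename{Cut}\BinaryInfC{$\Gamma, X \infers A$.}
\end{prooftree}
Because a $\system{T}$ antecedent is a multiset, the conclusion $X,\Gamma \infers A$ is exactly $\Gamma, X \infers A$.

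For part (2), I would first derive the auxiliary inference $A \muland X \infers A$: applying Lemma~\ref{systemT:order} to the identity $A \infers A$ (Proposition~\ref{systemT:identity}) and the axiom $X \infers \multrue$ gives $A \muland X \infers A \muland \multrue$, and right-$\multrue$ elimination (Lemma~\ref{systemT:right-1-elimination}) then strips the trailing unit to leave $A \muland X \infers A$. A single cut of this inference against the hypothesis $\Gamma \infers A \muland X$ yields the desired $\Gamma \infers A$.

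I do not expect any substantial obstacle: every step is a cut or an already-established unit lemma, so this is a routine dualization. The only point needing a little care is that the construction in part (2) naturally produces the spurious $\multrue$ tensor factor in $A \muland X \infers A \muland \multrue$, so one must remember to discharge it via right-$\multrue$ elimination (equivalently, via Proposition~\ref{proposition:system-T:algebra}(2) together with the $\equiv$-substitution principle) before the final cut, and to keep the orientation of the cut rule straight so that $X$ is inserted into the antecedent in (1) but deleted from the consequent in (2).
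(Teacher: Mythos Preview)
Your argument is correct, and in part~(1) it is in fact more streamlined than the paper's: the paper combines $\Gamma \infers A$ and $X \infers \multrue$ via right-$\muland$ introduction to obtain $\Gamma, X \infers A \muland \multrue$ and then cuts against $A \muland \multrue \infers A$, whereas you go the shorter route of adding $\multrue$ to the antecedent with L-$\multrue$ and cutting against $X \infers \multrue$ directly. For part~(2) both proofs reduce to constructing $A \muland X \infers A$ and cutting it against the hypothesis; the paper obtains this inference by applying the just-proved part~(1) to the identity $A \infers A$ and then using L-$\muland$, while you build it from Lemma~\ref{systemT:order} plus right-$\multrue$ elimination. The paper's derivation of~(2) has the minor advantage of reusing~(1), making the internal dependency explicit, but your version is equally valid and stays closer in spirit to the dualization of Proposition~\ref{proposition:polynomials:introduction-rules}.
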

\begin{proof}
    In \Cref{proposition:system-T:algebra} we proved $A \otimes \multrue \infers A$ for any term $A$. Together with cut and right-$\muland$ introduction this gives the first result:
    \begin{prooftree}
        \AxiomC{$\Gamma \infers A$}
        \rulename{L-$X$}\AxiomC{}\UnaryInfC{$X \infers \multrue$}
        \rulename{R-$\muland$}\BinaryInfC{$\Gamma, X \infers A \otimes \multrue$}
        \rulename{Prop. \ref{proposition:system-T:algebra}}\AxiomC{}\UnaryInfC{$A\otimes \multrue \infers A$}
        \rulename{Cut}\BinaryInfC{$\Gamma, X \infers A$.}
    \end{prooftree}
    
    The second results follows from the first:
    \begin{prooftree}
        \AxiomC{$\Gamma \infers A\otimes X$}
        \AxiomC{}\UnaryInfC{$A \infers A$}
        \rulename{(Part 1)}\UnaryInfC{$A, X \infers A$}
        \rulename{L-$\otimes$}\UnaryInfC{$A \muland X \infers A$}
        \rulename{Cut}\BinaryInfC{$\Gamma \infers A$.}
    \end{prooftree}
\end{proof}

\begin{proposition}\label{systemT:disposable-lifting-theorem}
    There exists a proof of $\Gamma \infers_{\system{T}[\bar{X}]} A$ if and only if there exists a proof of $\Gamma \infers_\system{T} A\otimes X^{\ma m}$ for some $m \geq 0$.
\end{proposition}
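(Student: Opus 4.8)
The plan is to prove this as the consequent-side dual of Theorem~\ref{systemT:lambek-lifting-theorem}: whereas $X$-lifting for $\sysT[X]$ collects freely introduced copies of $X$ in the antecedent, here we collect freely disposed copies as tensor factors of the consequent. The ``if'' direction is short and the ``only if'' direction can be organised either as the dual of the local proof transformations used for Theorem~\ref{systemT:lambek-lifting-theorem} or, as I would prefer, as a direct structural induction on proofs.

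For the ``if'' direction, note that every rule of $\sysT$ is also a rule of $\sysT[\bar{X}]$, so a $\sysT$-proof of $\Gamma \infers A \ma X^{\ma m}$ is already a $\sysT[\bar{X}]$-proof. Writing $A \ma X^{\ma m} \equiv (A \ma X^{\ma(m-1)}) \ma X$ via Proposition~\ref{proposition:system-T:algebra} and cut, Proposition~\ref{proposition:polynomials:dual-introduction-rules}(2) strips one factor of $X$ from the consequent; iterating $m$ times and applying $A \ma \mt \equiv A$ once more yields $\Gamma \infers_{\sysT[\bar{X}]} A$.

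For the ``only if'' direction I would induct on the structure of a proof $\Pi$ of $\Gamma \infers_{\sysT[\bar{X}]} A$, proving the sharper statement that there is a $\sysT$-proof of $\Gamma \infers A \ma X^{\ma m}$ with $m$ the number of occurrences of the L-$X$ rule in $\Pi$. The leaves are immediate: Id and R-$\mt$ need $m = 0$, and the new leaf $X \infers \multrue$ is handled with $m = 1$ by the $\sysT$-proof of $X \infers \multrue \ma X$ assembled from Id, R-$\mt$, and R-$\ma$. Among the inductive cases, L-$\mt$, L-$\ma$, and Exchange commute straight through the inductive hypothesis with $m$ unchanged, so the content lies in Cut and R-$\ma$. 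For R-$\ma$ with premises giving $\Gamma \infers A \ma X^{\ma m_1}$ and $\Delta \infers B \ma X^{\ma m_2}$, apply R-$\ma$ and rewrite $(A \ma X^{\ma m_1}) \ma (B \ma X^{\ma m_2}) \equiv (A \ma B) \ma X^{\ma(m_1+m_2)}$ using associativity and \emph{commutativity} of $\ma$ in $\sysT$ (Proposition~\ref{proposition:system-T:algebra}), then cut. For Cut with premises giving $\Gamma \infers B \ma X^{\ma m_1}$ and $\Delta, B \infers A \ma X^{\ma m_2}$, first tensor $X^{\ma m_1}$ (with its own identity proof, Proposition~\ref{systemT:identity}) onto the second inference via R-$\ma$ and $\equiv$ to reach $\Delta, B, X^{\ma m_1} \infers A \ma X^{\ma(m_1+m_2)}$, then use Exchange and L-$\ma$ to reassemble $B \ma X^{\ma m_1}$ as a single antecedent term, and finally cut against $\Gamma \infers B \ma X^{\ma m_1}$ to obtain $\Delta, \Gamma \infers A \ma X^{\ma(m_1+m_2)}$.

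The step I expect to be the main obstacle is the bookkeeping in the Cut case: the extra consequent factors do not simply ride down the proof, so one must re-form the cut term $B \ma X^{\ma m_1}$ from its scattered occurrences in the antecedent and carefully regroup the accumulated $X$-factors in the consequent, and it is exactly here that commutativity of $\ma$ is genuinely used --- which is why, just as for Theorem~\ref{systemT:lambek-lifting-theorem}, the statement is given for $\sysT$ rather than $\sysT'$. Everything else reduces to routine applications of the structural rules, the $\ma$-introduction rules, and the equivalences of Proposition~\ref{proposition:system-T:algebra}.
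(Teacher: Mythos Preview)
Your argument is correct. The paper itself gives no proof beyond ``essentially the same as Theorem~\ref{systemT:lambek-lifting-theorem}'', so the intended route is the direct dual of those local transformations: replace each leaf $X \infers \multrue$ by a proof ending in $X \infers \multrue \ma X$ and let the extra $\ma X$ propagate to the root. You acknowledge this option but carry out a structural induction on the derivation instead.

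The two organizations differ in where the work sits. The local-transformation approach is lighter to state but hides a subtlety absent from Theorem~\ref{systemT:lambek-lifting-theorem}: there, the extra $X$ lands in the antecedent and rides untouched through every subsequent rule, whereas here an extra $\ma X$ in the consequent does \emph{not} pass through Cut or R-$\ma$ automatically---one must still argue the propagation, which amounts to the same regrouping you isolate in your Cut and R-$\ma$ cases. Your induction makes that bookkeeping explicit and tracks $m$ as the number of L-$X$ occurrences, which is tidier. Your observation that commutativity of $\ma$ is genuinely needed (in both the R-$\ma$ regrouping and in reassembling $B \ma X^{\ma m_1}$ in the Cut case) is a point worth making and is consistent with the statement being for $\sysT$ rather than $\sysT'$.
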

\begin{proof}
    The proof is essentially the same as that of \Cref{systemT:lambek-lifting-theorem} and so is left to the reader.
\end{proof}

The two constructions above characterize when a term is freely available and freely disposable respectively. However in standard resource theories, a class of free transformations is specified and the free states are those that are produced from $I$ using free transformations. As we wish to follow a strict Howard-Curry-Lambek correspondence, a resource conversion from $A$ to $B$ should be represented as the inference $A \infers B$ and this being a free transformation implies the rule $\AxiomC{}\UnaryInfC{$A \infers B$}\DisplayProof$ should be added to our logic. We write this system as $\sysT[A \infers B]$. Note that if $A$ and $B$ have a common atomic proposition, $A = A' \otimes P$ and $B = B' \otimes P$, then we have the proof
$$\AxiomC{$A' \infers B'$}
\rulename{Id}\AxiomC{}\UnaryInfC{$P \infers P$}
\rulename{R-$\ma$}\BinaryInfC{$A',P \infers B'\ma P$}
\rulename{L-$\ma$}\UnaryInfC{$A \infers B$.}\DisplayProof$$
Thus we may instead use the system $\sysT[A' \infers B']$, as we can prove $A \infers B$ therein. That is, when we consider a resource conversion from $A$ to $B$ and form $\sysT[A \infers B]$, we may without loss of generality assume that $A$ and $B$ have no common atomic propositions.

We claim that we can construct $\sysT[A \infers B]$ with polynomials. There are two dual approaches, it is unclear if one has any advantages over the other. The key step is to formally introduce a term $-A$, which behaves in many ways like the negation of $A$. We then work in the polynomial system $\system{T}[-A \ma B, \overline{A \ma -A}]$. In this system we prove
$$\rulename{Id}\AxiomC{}\UnaryInfC{$A \infers A$}
\rulename{Prop. \ref{proposition:polynomials:introduction-rules}}\UnaryInfC{$A \infers A \ma (-A \ma B)$}
\rulename{Prop. \ref{proposition:system-T:algebra}}\UnaryInfC{$A\infers (A\ma -A)\ma B$}
\rulename{L-$(A\ma -A)$}\AxiomC{}\UnaryInfC{$A \ma -A \infers \mt$}
\rulename{Id}\AxiomC{}\UnaryInfC{$B \infers B$}
\rulename{\RO; \LO}\BinaryInfC{$(A \ma -A) \ma B \infers \mt \ma B$}
\rulename{Cut}\BinaryInfC{$A \infers B$.}
\DisplayProof$$
Alternatively we introduce the term $-B$ and work in the polynomial system $\system{T}[\overline{A \ma -B}, -B \ma B]$. We prove
$$\rulename{R-$(-B \ma B)$}\AxiomC{}\UnaryInfC{$\infers -B \ma B$}
\rulename{Id}\AxiomC{}\UnaryInfC{$B \infers B$}
\rulename{Prop. \ref{proposition:polynomials:dual-introduction-rules}}\UnaryInfC{$(A \ma -B), B \infers B$}
\rulename{L-$\ma$; Prop. \ref{proposition:system-T:algebra}}\UnaryInfC{$A \ma (-B \ma B) \infers B$}
\rulename{Lem. \ref{lemma:system-T:left-tensor-elimination}}\UnaryInfC{$A, (-B \ma B) \infers B$}
\rulename{Cut}\BinaryInfC{$A \infers B$.}
\DisplayProof$$
%Lem.~\ref{systemT:order}

As we can prove $A \infers B$ in either of these systems, any inference we can prove in $\sysT[A \infers B]$ we can prove in both $\system{T}[-A \ma B, \overline{A \ma -A}]$ and $\system{T}[\overline{A \ma -B}, -B \ma B]$. Our claim is that the converse is also true, after a fashion. Namely, $-A$ is not a term in $\sysT[A \infers B]$ and hence no inferences will involve it. Hence we focus on inferences provable in $\system{T}[-A \ma B, \overline{A \ma -A}]$ or $\system{T}[\overline{A \ma -B}, -B \ma B]$ that do not involve $-A$.

\begin{proposition}
    Suppose $C \infers_{\system{T}[-A \ma B, \overline{A \ma -A}]} D$ where neither $C$ nor $D$ contains $-A$ as a subterm. Then $C \infers_{\system{T}[A \infers B]} D$.
\end{proposition}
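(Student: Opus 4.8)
The plan is to pass from the polynomial system down to $\sysT$ using the lifting theorems, extract a combinatorial fact about multisets of atomic propositions from the normal-form corollary, and then reconstruct the required proof inside $\sysT[A\infers B]$ by hand.

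First I would note that $\sysT[-A\ma B,\overline{A\ma-A}]=\sysT[-A\ma B][\overline{A\ma-A}]$ and combine the proof transformations used in Theorem~\ref{systemT:lambek-lifting-theorem} and Proposition~\ref{systemT:disposable-lifting-theorem}: each rewrites only occurrences of a new rule (the rule $\infers -A\ma B$ in the first case, the rule $A\ma-A\infers\mt$ in the second), they act on opposite sides of the turnstile, and so they do not interfere. Iterating them shows that $C\infers_{\sysT[-A\ma B,\overline{A\ma-A}]}D$ holds if and only if
\[
C,\,(-A\ma B)^{\ma m}\ \infers_\sysT\ D\ma(A\ma-A)^{\ma n}
\]
for some $m,n\ge 0$; we only need the left-to-right direction. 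Throughout, $-A$ is treated as a fresh atomic proposition occurring in none of $A,B,C,D$, and (as noted just before the statement) the conversion may be assumed set up so that $A$ and $B$ share no atomic propositions.

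Next I would apply Corollary~\ref{corollary:normal-form} to this $\sysT$-inference. Writing $\mathcal C,\mathcal D,\mathcal A,\mathcal B$ for the multisets of atomic propositions of $C,D,A,B$, the corollary says the two sides carry equal atom-multisets modulo $\mt$ terms. Comparing the multiplicity of $-A$ forces $m=n$, and the remaining atoms give $\mathcal C+m\,\mathcal B=\mathcal D+m\,\mathcal A$. Since $\mathcal A$ and $\mathcal B$ are disjoint, this identity forces $\mathcal C\supseteq m\,\mathcal A$ and $\mathcal D\supseteq m\,\mathcal B$, with $\mathcal C-m\,\mathcal A=\mathcal D-m\,\mathcal B=:\mathcal R$. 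Choosing any term $R$ (a tensor of atoms) with atom-multiset $\mathcal R$, the converse half of Corollary~\ref{corollary:normal-form} then gives $C\equiv_\sysT A^{\ma m}\ma R$ and $D\equiv_\sysT B^{\ma m}\ma R$ (up to the associativity and unit equivalences of Proposition~\ref{proposition:system-T:algebra}), and these equivalences remain valid in $\sysT[A\infers B]$.

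Finally, in $\sysT[A\infers B]$ the axiom $A\infers B$, iterated through Lemma~\ref{systemT:order} (with Proposition~\ref{systemT:identity} supplying $R\infers R$), yields $A^{\ma m}\ma R\infers B^{\ma m}\ma R$; composing this with $C\equiv A^{\ma m}\ma R$ and $B^{\ma m}\ma R\equiv D$ through the cut rule produces the desired proof of $C\infers_{\sysT[A\infers B]}D$. The multiset bookkeeping and this last proof tree are routine; the step I expect to require the most care is the first, namely verifying that the two lifting arguments genuinely compose over the two-generator polynomial system (the excerpt states the disposable version only over base $\sysT$) and keeping scrupulous track of $-A$ as a genuinely new atom — it is precisely the independent counting of $-A$, together with the disjointness of $\mathcal A$ and $\mathcal B$, that upgrades the multiset identity to the containment $\mathcal C\supseteq m\,\mathcal A$ on which the explicit construction rests.
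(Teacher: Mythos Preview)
Your proposal is correct and follows essentially the same route as the paper: lift to $\sysT$ via the two polynomial lifting results, use Corollary~\ref{corollary:normal-form} together with the count of the fresh atom $-A$ to force $m=n$, exploit the disjointness of the atoms of $A$ and $B$ to factor $C\equiv A^{\ma m}\ma R$ and $D\equiv B^{\ma m}\ma R$, and then rebuild the proof in $\sysT[A\infers B]$ by tensoring $m$ copies of the axiom $A\infers B$ with the identity on $R$. The paper's argument is the same in substance (writing $C',D'$ where you write $R$); your version is if anything more careful about distinguishing $\equiv$ from literal equality and about flagging that the two lifting theorems must be composed over the two-generator extension.
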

\begin{proof}
    Using the lifting theorems above we have $C, (-A \ma B)^m \infers_\sysT D \otimes (A \ma -A)^{\ma n}$, or equivalently
    $$C, (-A)^m, B^m \infers_\sysT D\otimes A^{\ma n} \otimes (-A)^{\ma n}.$$
    From Corollary \ref{corollary:normal-form}, the atomic propositions constituting both sides of this inference must be in one-to-one correspondence. Since neither $C$ nor $D$ contains $-A$ as a subterm, we must then have $n = m$. Moreover, we have assumed $A$ and $B$ contain no common atomic propositions and hence $C = C' \ma A^{\ma n}$ and $D = D' \ma B^{\ma n}$ where $C'$ and $D'$ are composed of the same multiset of atomic propositions. That is $C' \infers_\sysT D'$. We extend the proof of $C' \infers D'$ in $\sysT[A\infers B]$ as follows:
    $$\AxiomC{$\vdots$}\UnaryInfC{$C' \infers D'$}
    \AxiomC{}\UnaryInfC{$A \infers B$}
    \rulename{R-$\ma$; L-$\ma$}\BinaryInfC{$C' \ma A \infers D' \ma B$}
    \noLine\UnaryInfC{$\vdots$}
    \UnaryInfC{$C' \ma A^{\ma (n-1)} \infers D' \ma B^{\ma (n-1)}$}
    \AxiomC{}\UnaryInfC{$A \infers B$}
    \rulename{R-$\ma$; L-$\ma$}\BinaryInfC{$C \infers D$}
    \DisplayProof$$
\end{proof}

An identical proof works for $\system{T}[\overline{A \ma -B}, -B \ma B]$ where we instead focus on inferences that do not involve the term $-B$. The claim is stated below for completeness.

\begin{proposition}
    Suppose $C \infers_{\system{T}[\overline{A \ma -B}, -B \ma B]} D$ where neither $C$ nor $D$ contains $-B$ as a subterm. Then $C \infers_{\system{T}[A \infers B]} D$.
\end{proposition}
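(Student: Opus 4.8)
The plan is to transcribe the proof of the preceding proposition, swapping the auxiliary atom $-A$ for $-B$ and redistributing the two lifting theorems over the matching polynomial generators. So suppose $C \infers_{\system{T}[\overline{A \ma -B}, -B \ma B]} D$ with $-B$ occurring as a subterm in neither $C$ nor $D$, and recall the standing assumption that $A$ and $B$ share no atomic proposition.

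First I would peel off the polynomial generators using the lifting theorems: applying \Cref{systemT:lambek-lifting-theorem} to the freely-available generator (the rule $\infers -B \ma B$) and \Cref{systemT:disposable-lifting-theorem} to the freely-disposable generator (the rule $A \ma -B \infers \mt$), we obtain $m, n \geq 0$ with $C, (-B \ma B)^m \infers_\sysT D \otimes (A \ma -B)^{\ma n}$. Decomposing the composite terms on each side — the antecedent copies via \Cref{lemma:system-T:left-tensor-elimination} and its converse, and the consequent factors via the associativity and commutativity of \Cref{proposition:system-T:algebra} — this is equivalent to
$$C, (-B)^m, B^m \infers_\sysT D \otimes A^{\ma n} \otimes (-B)^{\ma n}.$$

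Next I would invoke the normal form \Cref{corollary:normal-form}: the multiset of atomic propositions on the left equals that on the right, modulo $\mt$ terms. Since $-B$ is an atom distinct from every atom of $C$, $D$, $A$, $B$, it occurs $m$ times on the left and $n$ times on the right, so its absence from $C$ and $D$ forces $m = n$; cancelling those copies leaves the multiset identity $\mathrm{atoms}(C) + m\cdot\mathrm{atoms}(B) = \mathrm{atoms}(D) + m\cdot\mathrm{atoms}(A)$. Here the disjointness of $A$ and $B$ does the real work: each atom of $A$ counted on the right must be matched on the left inside $C$ (never inside the $B^m$ block), so $C$ contains at least $A^{\ma m}$ worth of atoms, and symmetrically $D$ contains at least $B^{\ma m}$ worth. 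Hence by the converse half of \Cref{corollary:normal-form} we may write $C \equiv C' \ma A^{\ma m}$ and $D \equiv D' \ma B^{\ma m}$, and the multiset identity collapses to $\mathrm{atoms}(C') = \mathrm{atoms}(D')$, so $C' \infers_\sysT D'$.

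Finally I would reassemble the conclusion in $\sysT[A \infers B]$: a $\sysT$-proof of $C' \infers D'$ is a fortiori a $\sysT[A \infers B]$-proof, and tensoring it $m$ times against the axiom $A \infers B$ — using \Cref{systemT:order}, i.e. \RO\ followed by \LO, exactly as in the previous proposition — yields $C' \ma A^{\ma m} \infers_{\sysT[A \infers B]} D' \ma B^{\ma m}$. Since $C \equiv C' \ma A^{\ma m}$ and $D \equiv D' \ma B^{\ma m}$ (equivalences already provable in $\sysT$), the cut rule lets us replace the endpoints and conclude $C \infers_{\sysT[A \infers B]} D$. I expect the only genuine obstacle to be the multiset bookkeeping in the third step — spelling out the cancellation and confirming that $A$–$B$ disjointness is precisely what licenses splitting $A^{\ma m}$ off $C$ and $B^{\ma m}$ off $D$; everything else is a faithful copy of the argument already given, under the substitutions $-A \rightsquigarrow -B$, $[-A \ma B] \rightsquigarrow [-B \ma B]$, and $[\overline{A \ma -A}] \rightsquigarrow [\overline{A \ma -B}]$.
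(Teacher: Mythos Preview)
Your proposal is correct and follows essentially the same route the paper indicates: it is the dual transcription of the preceding proposition's proof, with the lifting theorems applied to the two polynomial generators, the normal-form corollary used to force $m=n$ via the count of the auxiliary atom $-B$, and the $A$--$B$ disjointness used to split off $A^{\ma m}$ from $C$ and $B^{\ma m}$ from $D$. If anything, you spell out the multiset bookkeeping in the third step more carefully than the paper does for the companion result.
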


\begin{example}\label{example:coherence}
Here we form a simple type system for the resource theory of coherence for qubit systems: the free states are diagonal density operators (classical states) with respect to some fixed basis, and the resource content of a state is its $L^1$-distance from this convex set. We bend the rules somewhat to allow an uncountably infinite family of atomic propositions:
$$\Phi = \{ Q(\lambda) \::\: \lambda \in [0,1] \}.$$
We identify $Q(0)$ as the type of classical states in the given basis. That is, in a programming language realizing this type theory we would write
$$\begin{pmatrix} p & 0 \\ 0 & 1-p\end{pmatrix} \mathsf{:: Q(0)}.$$
There are a variety of choices for what we might consider for the incoherent operations (that is, quantum channels that preserve $Q(0)$). The most general set of such operations, so-called MIO, is somewhat intractable; the most commonly studied set of incoherent operations, IO, is characterized as those quantum channels that admit a set of Kraus operators $\{K_n\}$ such that if $\rho \in Q(0)$ then for all $n$ we have $\frac{K_n\rho K_n^\dagger}{\mathrm{tr}(K_n\rho K_n^\dagger)} \in Q(0)$\footnote{as opposed to MIO where merely $\sum_n K_n\rho K_n^\dagger \in Q(0)$.} \cite{baumgratz2014quantifying}. See \cite{streltsov2017colloquium} for a detailed discussion of numerous other choices.

Clearly the totally depolarizing channel is in IO, thus there a free transformation of states of any type to classical states. Logically this implies that the inference $Q(\lambda) \infers Q(0)$ is a theorem. However states in $Q(0)$ are free, and so we have $Q(0) \infers \mt$ and hence using the cut rule $Q(\lambda)\infers \mt$ in general.

In \cite{baumgratz2014quantifying} it is shown that there is a maximally coherent state from which any other state can be prepared with a channel in IO. The maximally coherent qubit state is of type $Q(1)$, and consequently our resource theory logic also contains the inferences $Q(1) \infers Q(\lambda)$.

The conditions for which one can transform one qubit state into another by an incoherent map are known but nontrivial \cite{gour2017quantum}. In fact, the requirement that a channel reduce the $L^1$ distance from the classical states is precisely half of the characterization given as \cite[equations (3,4)]{streltsov2017structure}. At the level of types we cannot guarantee further free transformations, and so our logic is the system $$\sysT\left[ \{Q(0)\} \cup \{\overline{Q(\lambda)} \::\: \lambda \in [0,1]\} \cup \{Q(1) \infers Q(\lambda) \::\: \lambda \in [0,1)\}\right].$$
\end{example}

\begin{example}\label{example:locc}
Extending Example \ref{example:introduction} in a different direction, we can build a toy model of qubit entanglement. Here we have actors Alice and Bob who have access to local quantum computation on their respective systems together with arbitrary classical communication (LOCC), see for instance \cite{chitambar2014everything}. Hence we pose atomic propositions $Q_A$ for the type of a qubit held by Alice, $Q_B$ for a qubit held by Bob, and $C$ for a classical bit which given arbitrary communication may be considered shared between Alice and Bob. For illustrative purposes we simply add an additional atomic proposition $E$ for the type of a maximally entangled state shared by Alice and Bob. Needless to say, this toy model is insufficient to capture the full power of LOCC. Nonetheless we can capture the types of simple quantum protocols; for example the quantum teleportation circuit has type $E\ma Q_A \infers Q_B$. This is obtained by composing Alice's measurement of type $E\ma Q_A \infers C \ma C\ma Q_B$ with Bob's correction of type $C\ma C\ma Q_B \infers Q_B$.

As classical information may be freely created or destroyed $\infers C$ and $C \infers \mt$ are theorems of this logic. Similarly local measurements are free in LOCC and thus $Q_A \infers C$ and $Q_B \infers C$ are theorems, and similar to the coherence case above so also are $Q_A\infers \mt$ and $Q_B \infers \mt$. In dealing with $E$ there are some options. We could form a logic where $E \infers Q_A\ma Q_B$ is a theorem; in such a system entanglement is disposable in the technical sense given above. Namely any protocol requiring $Q_A$ and $Q_B$ can be run with $E$ instead, and any protocol producing $E$ and be transformed into one producing $Q_A\ma Q_B$. This logic is the system
$$\sysT[C, \overline{C}, \overline{Q_A}, \overline{Q_B}, E \infers Q_A\ma Q_B].$$
On the other hand, we may take the system where merely $E\infers C\ma Q_B$ and $E\infers Q_A\ma C$ are theorems. In this system, entanglement can only be disposed of by either Alice or Bob measuring his or her half of the entangled pair. This system is strong enough to contain the teleportation protocol as a theorem, but is strictly weaker than the previous system\footnote{We challenge the reader to prove $E \infers Q_A\ma Q_B$ is not a theorem of this latter system.}. 
\end{example}

\section{Algebraic Semantics}\label{section:algebra}

In the previous sections we have treated resources as terms in a logic with resource conversions as inferences in this logic. In \cite{fritz2015resource} abstract resources theories are cast the algebraic language of monoids, with resource conversion described by an order in the monoid. Here we prove soundness and completeness of the algebraic semantics of $\sysT$ through ordered monoids.

\begin{definition}
    A \emph{monoid} is a set of elements $\mathbb{M}$ together with a binary operation $\monoid$ and element $e \in \mathbb{M}$ satisfying
    \begin{itemize}
        \item $(x \monoid y) \monoid z = x \monoid (y \monoid z)$, and
        \item $x \monoid e = x = e \monoid x$.
    \end{itemize}
    A monoid is said to be \emph{commutative} if it additionally satisfies
    \begin{itemize}
        \item $x \monoid y = y \monoid x$.
    \end{itemize}
    A preorder $\leq$ (i.e. a reflexive and transitive order) on a monoid is a \emph{monoid order} if it satisfies
    \begin{itemize}
        \item $r \leq s$ and $x \leq y$ implies $r\cdot x \leq s \cdot y$.
    \end{itemize}
    An \emph{ordered monoid} (respectively, \emph{ordered commutative monoid}) is a monoid (respectively, commutative monoid) together with a given monoid order.
\end{definition}

Formally, an algebraic model $\mathcal{M}$ of $\system{T}$ consists of an ordered commutative monoid $(\mathbb{M},\monoid,e,\leq)$ together with a valuation on atomic propositions $v:\Phi \to \mathbb{M}$. Intuitively, the valuation encodes the resources associated to each proposition in the given model. For general terms we use a \emph{forcing} relation $\forces$ defined inductively as follows:
\begin{enumerate}
    \item if $P\in \Phi$ is atomic then $m \forces P$ means $m \leq v(P)$;
    \item $m \forces A \ma B$ if and only if there exists $n_1,n_2\in \mathbb{M}$ with $m = n_1\monoid n_2$ and $n_1 \forces A$ and $n_2 \forces B$; and 
    \item $m \forces \mt$ if and only if $m = e$.
\end{enumerate}
Extending the valuation to terms homomorphically, $v(A \ma B) = v(A) \monoid v(B)$ with $v(\mt) = e$, we then have $m \forces A$ implies that $m \leq v(A)$ for all terms. However the converse is not generally true as the Joyal-like rule (2) is stronger than merely stating that $m \leq v(A)\cdot v(B)$. In the language of \cite{coecke2013resource}, monoids where these are equivalent are called ``noninteracting'' although other names appear in the literature. Nonetheless we still have $v(A) \forces A$ for any term, since we can decompose $A = P_1\muland \cdots \muland P_k$ where all the $P_j$ are atomic and $v(P_j) \forces P_j$.

In the language of a resource-based type theory, the relation $m \forces A$ indicates that $A$ contains enough resources for the context described by $m$ to exist. In other words, $m \in \mathbb{M}$ gives resource requirements for some system (at this level of abstraction, a system can be considered synonymous with its resource requirements) and $m \forces A$ means that instances of type $A$ can provide the needed resources to instantiate $m$. Entailment of types over a requirement $m\in \mathbb{M}$ has the usual logical formulation:
\begin{itemize}
    \item $\mathcal{M}: A \manifests_m B$  means either $m \forces B$ or $m \not\forces A$ (that is, $m \forces A$ implies $m \forces B$).
\end{itemize}
This extends to sequents in the natural way:
\begin{itemize}
    \item $\mathcal{M}:\ \manifests_m B$ if and only if $m \forces B$ or $m \not= e$, and
    \item $\mathcal{M}: A_1, \dots, A_k \manifests_m B$ if and only if whenever there exists $m = m_1\cdots m_k$ with $m_j \forces A_j$ (for each $j=1, \dots, k)$ then $m \forces B$.
\end{itemize}
Semantic entailment in the model is defined as
$$\mathcal{M}: \Gamma \manifests B \text{ means } \mathcal{M}: \Gamma \manifests_m B \text{ for all } m\in \mathbb{M},$$
and general semantic entailment is
$$\Gamma \manifests B \text{ means } \mathcal{M}: \Gamma \manifests B \text{ for all models } \mathcal{M}.$$

\begin{theorem}[Strong soundness]\label{theorem:algebra:soundness}
    If $\Gamma \infers_{\system{T}} A$ then $\Gamma \manifests A$.
\end{theorem}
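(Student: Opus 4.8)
The plan is to prove the statement by structural induction on a derivation of $\Gamma \infers_{\system{T}} A$ (equivalently, induction on its length), with a case analysis on the final rule applied. Fix an arbitrary algebraic model $\mathcal{M}$ with underlying ordered commutative monoid $(\mathbb{M},\monoid,e,\leq)$ and valuation $v$; it suffices to establish $\mathcal{M}:\Gamma\manifests A$, i.e.\ $\mathcal{M}:\Gamma\manifests_m A$ for every $m\in\mathbb{M}$. The two length-one derivations are the base cases: for the identity rule $P\infers P$, the relation $m\forces P$ trivially implies $m\forces P$ for each $m$; for right-$\mt$ introduction $\infers\multrue$, the statement $\manifests_m\multrue$ unwinds to ``$m=e$ implies $m\forces\multrue$'', which holds since $e\forces\multrue$ by clause (3) of the forcing definition.

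For the inductive step I would treat each remaining rule of Table~\ref{system-T:rules-table} by unfolding the definition of $\manifests_m$ at the conclusion sequent, choosing an arbitrary decomposition of $m$ into monoid factors witnessing its antecedent, and regrouping that decomposition so that an inductive hypothesis applies. For \LI, a decomposition of $m$ matching $\Gamma,\multrue$ forces the final factor to equal $e$, hence also yields a decomposition matching $\Gamma$, and the hypothesis $\Gamma\manifests A$ gives $m\forces A$. For \LO, a factor forcing $A\ma B$ splits, by clause (2), into two factors forcing $A$ and $B$, producing a decomposition matching $A,B,\Gamma$, whence $m\forces C$. For \RO, given a decomposition of $m$ matching $\Gamma,\Delta$, let $n$ be the product of the factors over $\Gamma$ and $n'$ that of the factors over $\Delta$; then $n\forces A$ by $\Gamma\manifests_n A$ (using the empty-sequent clause when $\Gamma$ is empty, so that $n=e\forces A$), likewise $n'\forces B$, and $m=n\monoid n'$ witnesses $m\forces A\ma B$ via clause (2). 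For the exchange rule, a decomposition of $m$ matching $\Gamma,\Theta,\Delta,\Psi$ is turned, \emph{using commutativity of $\mathbb{M}$}, into one matching $\Gamma,\Delta,\Theta,\Psi$, to which the hypothesis applies; this is the one place where commutativity of the monoid is essential.

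The cut rule is the delicate case, and I expect it to be the main obstacle. Suppose the last step derives $\Delta,\Gamma\infers A$ from $\Gamma\infers B$ and $\Delta,B\infers A$, so by induction $\Gamma\manifests B$ and $\Delta,B\manifests A$. Fix $m$ and a decomposition of $m$ matching $\Delta,\Gamma$; let $d$ be the product of the factors over $\Delta$ and $n$ the product of the factors over $\Gamma$, so $m=d\monoid n$. Applying $\Gamma\manifests_n B$ to the $\Gamma$-part of the decomposition gives $n\forces B$ (invoking the empty-sequent clause if $\Gamma$ is empty, where $n=e$). Then $m=d\monoid n$, with the $\Delta$-factors forcing their respective propositions and $n\forces B$, is a decomposition matching $\Delta,B$, so $\Delta,B\manifests_m A$ yields $m\forces A$. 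As $m$ was arbitrary, $\Delta,\Gamma\manifests A$.

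The only point requiring real care is the bookkeeping in lining up a decomposition of $m$ with the concatenated antecedents $\Delta,\Gamma$ and $\Delta,B$, together with the correct handling of empty sub-sequents through the convention that $\manifests_m B$ holds exactly when $m\forces B$ or $m\neq e$; the remaining cases are routine unwindings of the forcing clauses. I would note that no appeal to cut elimination (Theorem~\ref{thm:cut_elim}) is needed, since soundness is verified rule by rule for the full system $\system{T}$, cut included.
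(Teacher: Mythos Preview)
Your proposal is correct and follows essentially the same route as the paper: fix a model, induct on the derivation, and verify each rule by unfolding the forcing clauses and regrouping monoid factors. If anything you are slightly more thorough, since you treat the Exchange case explicitly via commutativity of $\mathbb{M}$ and are careful about empty sub-sequents, whereas the paper leaves Exchange implicit and only adds the preliminary remark (via Corollary~\ref{corollary:normal-form}) that $\Gamma$ must be finite so that semantic entailment is well defined.
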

\begin{proof}
    Note that by \cref{corollary:normal-form} we must have $\Gamma$ is finite and hence semantic entailment is well defined. Now fix an algebraic model $\mathcal{M} = (\mathbb{M},\monoid, e, \leq, v)$. We prove soundness by algebraic induction on a proof of $\Gamma \infers A$. First if our proof is just
    $$\rulename{Id}\AxiomC{}\UnaryInfC{$P \infers P$}\DisplayProof$$
    then the result is tautological: either $m \forces P$ or $m \not\forces P$. Similarly if the proof is
    $$\rulename{R-$\mt$}\AxiomC{}\UnaryInfC{$\infers \mt$}\DisplayProof$$
    then again this is a tautology: either $m \not= e$ or $m=e$ (and hence $m \forces \mt$).
    
    Now suppose the last step of the proof is the cut rule
    $$\AxiomC{$\Delta \infers B$}\AxiomC{$\Theta, B \infers A$}
    \rulename{Cut}\BinaryInfC{$\Theta, \Delta \infers A$.}\DisplayProof$$
    We must show $\mathcal{M}: \Delta, \Theta \manifests_m A$ and so assume $\Delta = D_1, \dots, D_k$ and $\Theta = E_1, \dots, E_\ell$ where $m_j \forces D_j$ and $m_j' \forces E_j$ and let $m = m_1\cdots m_k \monoid m_1' \cdots m_\ell'$. Inductively we have $\mathcal{M}: \Delta \manifests_m B$ and so can conclude that $m_1\cdots m_k \forces B$. Again, inductively, $\mathcal{M}: \Theta, B \manifests_m A$ and $m_1\cdots m_k \forces B$ imply that $m = m_1\cdots m_k \monoid m_1' \cdots m_\ell' \forces A$ as desired.
    
    If the last step of the proof is left-$\mt$ introduction
    $$\rulename{L-$\mt$}\AxiomC{$\Delta\infers A$}\UnaryInfC{$\Delta,\mt \infers A$,}\DisplayProof$$
    we need to show $\mathcal{M}: \Delta, \mt \manifests_m A$ and so assume $m = m_0\monoid m_1 \cdots m_k$ with $m_0 \forces \mt$ and $m_j \forces D_j$ (where again we write $\Delta = D_1, \dots, D_k$).
    Then we have $m_0 = e$ and by definition, $m_1 \cdots m_k \forces A$ inductively and therefore $m = m_1\cdots m_k \forces A$ as desired. 
    
    Likewise if the last step of the proof is left-$\ma$ introduction
    $$\rulename{L-$\ma$}\AxiomC{$B,C,\Delta\infers A$}\UnaryInfC{$B\otimes C,\Delta \infers A$,}\DisplayProof$$
    then we assume $m = m_0\monoid m_1 \cdots m_k$ with $m_0 \forces B\otimes C$ and $m_j \forces D_j$. Now by definition, we have $m_0 = n_1 \monoid n_2$ with $n_1 \forces B$ and $n_2 \forces C$. Inductively, we have $m = n_1 \monoid n_2 \monoid m_1 \cdots m_k \forces A$ as needed.
    
    Finally if the last step of the proof is right-$\ma$ introduction
    $$\AxiomC{$\Delta \infers B$}\AxiomC{$\Theta \infers C$}
    \rulename{R-$\ma$}\BinaryInfC{$\Delta, \Theta \infers B\otimes C$,}\DisplayProof$$
    then as before we take $\Delta = D_1, \dots, D_k$ and $\Theta = E_1, \dots, E_\ell$ with $m = m_1\cdots m_k \monoid m_1' \cdots m_\ell'$ and $m_j \forces D_j$ and $m_j' \forces E_j$. Then inductively $m_1\cdots m_k \forces B$ and $m_1' \cdots m_\ell' \forces C$ and so by definition $m = m_1\cdots m_k \monoid m_1' \cdots m_\ell' \forces B \ma C$.
\end{proof}

\begin{theorem}[Strong completeness]\label{theorem:algebra:completeness}
    If $\Gamma \manifests A$ then $\Gamma \infers_{\system{T}} A$.
\end{theorem}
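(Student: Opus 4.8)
The plan is to prove the contrapositive by exhibiting a single canonical model that refutes every unprovable sequent, with essentially all the content supplied by the normal form of \Cref{corollary:normal-form}. So suppose $\Gamma = A_1, \dots, A_k$ (finite, as is required for $\manifests$ to be defined) and $\Gamma \not\infers_\sysT A$; I will produce a model $\mathcal{M}$ and an element $m$ with $\mathcal{M}: \Gamma \not\manifests_m A$, which yields $\Gamma \not\manifests A$.

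For $\mathcal{M}$, take $\mathbb{M}$ to be the free commutative monoid on $\Phi$ --- concretely, the finite multisets of atomic propositions under multiset union, with $e$ the empty multiset --- equipped with the preorder given by \emph{equality} (trivially reflexive, transitive, and a monoid order, so that $\mathcal{M}$ is a bona fide algebraic model), and with valuation $v(P) = \{P\}$. Write $\|A\|$ for the multiset of atomic propositions occurring in $A$, counted with multiplicity, so that $\|A \ma B\| = \|A\| \monoid \|B\|$ and $\|\mt\| = e$; thus $\|\,\cdot\,\|$ simply forgets the $\mt$ factors appearing in the normal form of \Cref{corollary:normal-form}. A routine structural induction on the three clauses defining $\forces$ shows that $m \forces A$ if and only if $m = \|A\|$: the atomic case is immediate because $m \leq v(P)$ means $m = \{P\}$ under the equality order; the $\mt$ case is just the clause $m \forces \mt \iff m = e$; and the $\ma$ case follows from $\|A \ma B\| = \|A\| \monoid \|B\|$ together with the inductive hypothesis.

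Now take $m := \|A_1\| \monoid \cdots \monoid \|A_k\|$ together with the decomposition $m_j := \|A_j\|$; by the equivalence above, $m_j \forces A_j$ for every $j$, so this is an admissible decomposition witnessing the hypothesis in the definition of $\manifests_m$. If $m \forces A$ held, we would get $\|A_1\| \monoid \cdots \monoid \|A_k\| = \|A\|$, i.e. $\Gamma$ and $A$ are built from the same multiset of atomic propositions modulo $\mt$ terms, and then the converse half of \Cref{corollary:normal-form} would give $\Gamma \infers_\sysT A$, contradicting our assumption. Hence $m \not\forces A$, so $\mathcal{M}: \Gamma \not\manifests_m A$. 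The empty case $k = 0$ is handled the same way with $m = e$: there $\manifests_e A$ reduces, via the clause ``$\manifests_m B$ iff $m \forces B$ or $m \neq e$'', to $e \forces A$, i.e. $\|A\| = e$, i.e. $A = \mt^{\ma a}$, which by \Cref{corollary:normal-form} is exactly when $\infers_\sysT A$.

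I do not anticipate a genuine obstacle: the substantive step --- that provability is equivalent to agreement of the underlying multisets of atomic propositions --- is already \Cref{corollary:normal-form}, so what remains is bookkeeping. The only points needing a little care are checking that the equality preorder really does make $\mathbb{M}$ an ordered commutative monoid (so the counter-model is legitimate), verifying the ``$m \forces A \iff m = \|A\|$'' induction, and treating the empty-antecedent edge case through the special clause for $\manifests_m$ on an empty sequent.
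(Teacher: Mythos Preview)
Your proposal is correct and follows essentially the same approach as the paper: build the free commutative monoid on $\Phi$ with the equality preorder and trivial valuation, identify forcing with matching multisets of atomic propositions, and then invoke \Cref{corollary:normal-form}. The only cosmetic differences are that the paper argues directly rather than by contrapositive and is less explicit about the structural induction for $m \forces A \iff m = \|A\|$ and the empty-antecedent case.
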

\begin{proof}
    Assume $\Gamma \manifests A$, and write $\Gamma = B_1, \dots, B_k$. Let $(\mathbb{M},\cdot,e)$ be the free monoid generated by the atomic propositions $\Phi$ identifying $e = \mt$. Define $C \leq D$ to mean $C = D$, and take the valuation to be trivial $v(C) = C$. Let $\mathcal{M}$ be the model with this data. Then in this model, we claim $m \forces A$ if and only if $m = A$. We have already seen $A = v(A) \forces A$; for the converse, we have also seen that $m \forces A$ implies $m \leq v(A)$, which here means $m = v(A)$. In particular, by taking $m = \prod_{j=1}^k v(B_j)$ and noting $\mathcal{M}: \Gamma \manifests_m A$ we conclude $m \forces A$. But this implies $B_1\ma \cdots \ma B_k = A$. Expanding both sides into generators from $\Phi$, we can then apply \cref{corollary:normal-form} to conclude $\Gamma \infers A$.
\end{proof}

\begin{example}\label{example:coherence-algebra}
Continuing example \ref{example:coherence}, the atomic propositions of the logic are $\{Q(\lambda)\}_{\lambda \in [0,1]}$. Then the free monoid appearing in Theorem \ref{theorem:algebra:completeness} is isomorphic to the coproduct $\coprod_{\lambda\in [0,1]} \mathbb{N}$ where $\mathbb{N}$ is the natural numbers under addition, and $e$ as the all zeros tuple. Equivalently this moniod can be realized as the collection of  finite multisets of elements from $[0,1]$ with union as the operation and $e$ as the empty multiset.

While a full treatment of algebraic models for polynomial systems over $\sysT$ is beyond the scope of this paper, we indicate that every model of $\sysT$ is also a model of a polynomial system. However we cannot expected soundness and completeness for such models as polynomial systems have additional rules that every model has to adhere to. For instance, In the example of coherence, $Q(0)$ is free and so we enforce that every model's valuation have $v(Q(0)) = e$. Similarly, each $Q(\lambda)$ is freely disposable and hence we also require that every model's valuation have $v(Q(\lambda)) \leq e$. 
\end{example}

\section{Categorical Semantics}\label{section:category}

In the algebraic semantics of the previous section the focus was on terms; existence of an inference was relegated to an ordered relation. Categorical semantics centers on the proofs of inferences. In fact, the formalism of categories and logics are so similar that soundness and completeness (typically unified under the name ``coherence'') are immediate, once one establishes that the logic forms a free category of the type under consideration. Here, this will be a monoidal category for $\system{T}'$ and a symmetric monoidal category for $\system{T}$.

A category $\catC$ is a class of objects, together with a sets $\hom(A,B)$ for each pair of objects $A,B$ and pairings
$$\_ \circ \_ : \hom(B,C) \times \hom(A,B) \to \hom(A,C)$$
that satisfies certain rules:
\begin{enumerate}
    \item $(f\circ g)\circ h = f\circ (g\circ h)$ whenever these are defined, and
    \item for each object $A$ there is $\id_A\in\hom(A,A)$ which satisfies
    $$\id_B\circ f = f \circ \id_A = f \text{ whenever } f \in \hom(A,B).$$
\end{enumerate}

A category is monoidal if it has a monoid-like operation on its objects \cite{maclane1963natural}. The main issue stemming from this informality is that algebraic operations are defined on sets, and the objects of a category typically are not a set. Consequently, operations are promoted to functors, and the rules they must satisfy are described as natural transformations. For example, a binary operation $\monoidal$ is promoted to a covariant bifunctor $\_ \monoidal \_: \category{C} \times \category{C} \to \category{C}$. This means that to any pair of objects $(A,B)$ of $\category{C} \times \category{C}$ the functor produces a new object $A \monoidal B$, and to every pair of homomorphisms $(f,g)\in \hom_\category{C}(A,C) \times \hom_\category{C}(B,D)$ the functor produces a homomorphism $f\monoidal g \in \hom_\category{C}(A\otimes B, C\otimes D)$. 

\begin{definition}
    Let \category{C} be a category. A \emph{monoidal structure} on \category{C} consists of a covariant bifunctor $\_\monoidal\_:\category{C}\times\category{C} \to \category{C}$, an object $I$, and three natural isomorphisms 
    \begin{itemize}
        \item $\alpha: (\_ \monoidal \_) \monoidal \_ \cong \_\monoidal (\_ \times \_)$,
        \item $\lambda: I \monoidal \_ \cong \_$, and
        \item $\rho: \_ \monoidal I \cong \_$
    \end{itemize}
    such that the two diagrams commute: the triangle rule
    \begin{equation}\label{eqn:MoC:triangle}
        \begin{tikzpicture}[baseline=(current bounding box.center)]
            \draw (0,1.5) node (n0) {$(A\monoidal I) \monoidal B$};
            \draw (4,1.5) node (n1) {$A\monoidal (I \monoidal B)$};
            \draw (2,0) node (n2) {$A \monoidal B$,};
            \draw (2,1.7) node (a0) {\scriptsize $\alpha_{A,T,B}$};
            \draw (0.3,0.6) node (a1) {\scriptsize $\rho_A\monoidal\id_B$};
            \draw (3.8,0.6) node (a2) {\scriptsize $\id_A\monoidal\lambda_B$};
            \draw[->] (n0) -- (n1);
            \draw[->] (n1) -- (n2);
            \draw[->] (n0) -- (n2);
        \end{tikzpicture}
    \end{equation}
    and the pentagon rule
    \begin{equation}\label{eqn:MoC:pentagon}
        \begin{tikzpicture}[baseline=(current bounding box.center)]
            \draw (1.5,3) node (n0) {$(A \monoidal (B \monoidal C)) \monoidal D$};
            \draw (0,1.5) node (n1) {$((A \monoidal B) \monoidal C) \monoidal D$};
            \draw (4,0) node (n2) {$(A \monoidal B) \monoidal (C \monoidal D)$.};
            \draw (6.5,3) node (n3) {$A \monoidal ((B \monoidal C) \monoidal D)$};
            \draw (8,1.5) node (n4) {$A \monoidal (B \monoidal (C \monoidal D))$};
            \draw (4,3.2) node (a0) {\scriptsize $\alpha_{A,B\monoidal C,D}$};
            \draw (-0.3,2.35) node (a1) {\scriptsize $\alpha_{A,B,C}\monoidal \id_D$};
            \draw (8.3,2.35) node (a2) {\scriptsize $\id_A\monoidal\alpha_{B,C,D}$};
            \draw (1,0.6) node (a3) {\scriptsize $\alpha_{A\monoidal B,C,D}$};
            \draw (7,0.6) node (a4) {\scriptsize $\alpha_{A,B,C\monoidal D}$};
            \draw[->] (n0) -- (n1);
            \draw[->] (n1) -- (n2);
            \draw[->] (n0) -- (n3);
            \draw[->] (n3) -- (n4);
            \draw[->] (n4) -- (n2);
        \end{tikzpicture}
    \end{equation}
    A category with monoidal structure we call a \emph{monoidal category}.
\end{definition}

\begin{definition}
    Let \category{C} be a category. A \emph{symmetric monoidal structure} on \category{C} is a monoidal structure together with a family of isomorphisms $\sigma_{A,B}\in\hom(A\monoidal B, B\monoidal A)$ for which the following diagrams commute:
    \begin{equation}\label{eqn:SMC-symmetry}%\label{eqn:WCC-symmetry-condition}
        \begin{tikzcd}
            A\monoidal I \arrow[rr, "\sigma_{A,T}"'] \arrow[rd, "\rho_A"'] & & I \monoidal A \arrow[ld, "\lambda_A"]\\
            & A,
        \end{tikzcd}
    \end{equation}
    \begin{equation}\label{eqn:SMC-inverse}%\label{eqn:WCC-inverse-condition}
        \begin{tikzcd}
            A\monoidal B \arrow[rr, "\id_{A\monoidal B}"'] \arrow[rd, "\sigma_{A,B}"'] & & A \monoidal B \\
            & B\monoidal A, \arrow[ru, "\sigma_{B,A}"']
        \end{tikzcd}
    \end{equation}
 and the hexagon rule
    \begin{equation}\label{eqn:SMC-hexagon}%\label{eqn:WCC-associative-condition}
        \begin{tikzcd}
            & (A\monoidal B) \monoidal C \arrow[rr, "\sigma_{A,B} \monoidal \id_C"] \arrow[ld, "\alpha_{A,B,C}"'] & & (B \monoidal A) \monoidal C \arrow[rd, "\alpha_{B,A,C}"] & \\
            A \monoidal (B \monoidal C) \arrow[rd, "\sigma_{A, B \monoidal C}"'] & & & & B \monoidal (A\monoidal C) \arrow[ld, "\id_B \monoidal \sigma_{A,C}"]\\
            & (B \monoidal C) \monoidal A \arrow[rr, "\alpha_{B,C,A}"'] & & B \monoidal (C \monoidal A). &
        \end{tikzcd}
    \end{equation}
%%\arrow[ru, "\sigma_{A,C}\otimes \id_B"]
\end{definition}

\begin{proposition}
    Let $\Phi$ be a countable set of atomic propositions and $\sysT'$ be the logic formed from it. Then the following data forms a category $\category{C}$:
    \begin{itemize}
        \item each term $A$ defines an object of $\category{C}$;
        \item each pair of terms $A,B$ define $\hom_\category{C}(A,B)$ as the equivalence class of proofs of $A \infers B$ modulo any equivalence relation containing the proof transformations (\ref{equation:system-T:vertical-cut-commutivity}, \ref{equation:system-T:right-identity-transformation}, \ref{equation:system-T:left-identity-transformation}) and their inverses;
        \item for each term $A$ define $\id_A = \frames{\rulename{Id}\AxiomC{}\UnaryInfC{$A\infers A$}\DisplayProof} \in \hom_\category{C}(A, A)$; and,
        \item for $\frames{\Pi_1} \in \hom_\category{C}(A, B)$ and $\frames{\Pi_2} \in \hom_\category{C}(B, C)$ define $\frames{\Pi_2}\circ\frames{\Pi_1} = \frames{\Pi_3}$ where
        $$\AxiomC{$\Pi_1$}\noLine\UnaryInfC{$\vdots$}\UnaryInfC{$A \infers B$}
        \AxiomC{$\Pi_2$}\noLine\UnaryInfC{$\vdots$}\UnaryInfC{$B \infers C$}
        \LeftLabel{$\Pi_3 := \quad$}\rulename{Cut}\BinaryInfC{$A \infers C$.}\DisplayProof$$
    \end{itemize}
\end{proposition}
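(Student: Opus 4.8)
The plan is to verify the two category axioms directly from the definitions, using the proof transformations that were used to define the equivalence relation on $\hom$-sets. The key point is that these transformations were chosen precisely so that the category axioms hold, so the proof is a matter of matching each axiom to the corresponding transformation. First I would check that composition is well-defined: if $\Pi_1 \sim \Pi_1'$ and $\Pi_2 \sim \Pi_2'$, then the proof $\Pi_3$ obtained by applying \textsf{Cut} to $\Pi_1, \Pi_2$ is equivalent to the one obtained from $\Pi_1', \Pi_2'$. Since the equivalence relation is generated by local transformations applied within subproofs, and $\Pi_1$ and $\Pi_2$ each occur as a subproof of $\Pi_3$, any transformation witnessing $\Pi_1 \sim \Pi_1'$ or $\Pi_2 \sim \Pi_2'$ lifts to a transformation of $\Pi_3$; hence $\frames{\Pi_2} \circ \frames{\Pi_1}$ depends only on the classes $\frames{\Pi_1}$ and $\frames{\Pi_2}$.

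Next I would verify associativity. Given $\frames{\Pi_1} \in \hom(A,B)$, $\frames{\Pi_2} \in \hom(B,C)$, $\frames{\Pi_3} \in \hom(C,D)$, both $(\frames{\Pi_3}\circ\frames{\Pi_2})\circ\frames{\Pi_1}$ and $\frames{\Pi_3}\circ(\frames{\Pi_2}\circ\frames{\Pi_1})$ are represented by proofs of $A \infers D$ built from two nested applications of the cut rule, differing only in the order in which the two cuts are performed. This is exactly the content of the \textsf{Cut $\circ$ Cut} transformation~(\ref{equation:system-T:vertical-cut-commutivity}) (in the special case where the sequents $\Delta, \Theta, \Psi, \Xi$ are empty), which is among the transformations generating the equivalence relation; hence the two composites agree in $\hom(A,D)$. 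Finally, for the identity axiom, I would observe that $\id_B \circ \frames{\Pi_1}$ is represented by the proof obtained from $\Pi_1$ by appending an identity on the right followed by a cut, which is precisely the right-hand side of the \textsf{R-Id} transformation~(\ref{equation:system-T:right-identity-transformation}); its inverse sends this back to $\Pi_1$, so $\id_B \circ \frames{\Pi_1} = \frames{\Pi_1}$. Symmetrically, $\frames{\Pi_1}\circ \id_A$ equals $\frames{\Pi_1}$ by the \textsf{L-Id} transformation~(\ref{equation:system-T:left-identity-transformation}) applied to the single antecedent $A$ of the sequent $A \infers B$ (with $\Delta, \Theta$ empty).

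I do not expect any serious obstacle here; the main thing to be careful about is the well-definedness step, i.e. arguing cleanly that a transformation applied inside a subproof of $\Pi_3$ is still an instance of (the reflexive–transitive–symmetric closure of) the listed transformations, so that passing to equivalence classes is consistent with composition. Everything else is a direct reading-off of axioms from transformations~(\ref{equation:system-T:vertical-cut-commutivity}), (\ref{equation:system-T:right-identity-transformation}), (\ref{equation:system-T:left-identity-transformation}), specialized to the case of proofs with a single-term antecedent and consequent.
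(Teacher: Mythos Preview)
Your proposal is correct and follows essentially the same approach as the paper: associativity is obtained from the \textsf{Cut $\circ$ Cut} transformation~(\ref{equation:system-T:vertical-cut-commutivity}), and the identity laws from the \textsf{R-Id} and \textsf{L-Id} transformations~(\ref{equation:system-T:right-identity-transformation}, \ref{equation:system-T:left-identity-transformation}). You additionally address well-definedness of composition on equivalence classes, which the paper leaves implicit; this is a reasonable point to include but does not constitute a different route.
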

\begin{proof}
    To prove associativity of composition consider 
    $$\begin{array}{rcl}\Pi_4 &:=& \AxiomC{$\Pi_1$}\noLine\UnaryInfC{$\vdots$}\UnaryInfC{$A \infers B$}
    \AxiomC{$\Pi_2$}\noLine\UnaryInfC{$\vdots$}\UnaryInfC{$B \infers C$}
    \rulename{Cut}\BinaryInfC{$A \infers C$}
    \AxiomC{$\Pi_3$}\noLine\UnaryInfC{$\vdots$}\UnaryInfC{$C \infers D$}
    \rulename{Cut}\BinaryInfC{$A \infers D$}\DisplayProof
    \end{array}$$
    and
    $$\begin{array}{rcl}\Pi'_4 &:=& \AxiomC{$\Pi_1$}\noLine\UnaryInfC{$\vdots$}\UnaryInfC{$A \infers B$}
    \AxiomC{$\Pi_2$}\noLine\UnaryInfC{$\vdots$}\UnaryInfC{$B \infers C$}
    \AxiomC{$\Pi_3$}\noLine\UnaryInfC{$\vdots$}\UnaryInfC{$C \infers D$}
    \rulename{Cut}\BinaryInfC{$B \infers D$}
    \rulename{Cut}\BinaryInfC{$A \infers D$.}\DisplayProof
    \end{array}$$
    Then by cut commutativity (\ref{equation:system-T:vertical-cut-commutivity}) we have
    $$(\frames{\Pi_1}\circ\frames{\Pi_2})\circ \frames{\Pi_3} = \frames{\Pi_4} =  \frames{\Pi'_4} = \frames{\Pi_1}\circ(\frames{\Pi_2}\circ \frames{\Pi_3}).$$
    
    The identity transformations (\ref{equation:system-T:right-identity-transformation}) and (\ref{equation:system-T:left-identity-transformation}) show $\id$ is the the identity morphism. Namely,
    $$\begin{array}{rcccccl}
    \id_B \circ \frames{\Pi}
    &=&
    \frames{%
    \AxiomC{$\Pi$}\noLine\UnaryInfC{$\vdots$}\UnaryInfC{$A \infers B$}
    \rulename{Id}\AxiomC{}\UnaryInfC{$B \infers B$}
    \rulename{Cut}\BinaryInfC{$A \infers B$}
    \DisplayProof}
    &\stackrel{(\ref{equation:system-T:right-identity-transformation})}{=}& 
    \frames{%
    \AxiomC{$\Pi$}\noLine\UnaryInfC{$\vdots$}\UnaryInfC{$A \infers B$}
    \DisplayProof}
    &=& \frames{\Pi}
    \end{array}$$
    and similarly on the right.
\end{proof}

\begin{lemma}
    Let $\category{C}$ be as above. Define $A \monoidal B = A \ma B$ and for proofs $\Pi_1$ of $A \infers B$ and $\Pi_2$ of $C \infers D$ define
    $$\frames{\Pi_1} \monoidal \frames{\Pi_2} = \frames{%
    \AxiomC{$\Pi_1$}\noLine\UnaryInfC{$\vdots$}\rulename{}\UnaryInfC{$A \infers B$}
	\AxiomC{$\Pi_2$}\noLine\UnaryInfC{$\vdots$}\rulename{}\UnaryInfC{$C \infers D$}
	\rulename{\RO}\BinaryInfC{$A, C \infers B \muland D$}
	\rulename{\LO}\UnaryInfC{$A \muland C \infers B \muland D$.}\DisplayProof}$$
	If proofs related by transformations (\ref{equation:system-T:decomposing-cut}, \ref{equation:system-T:l-and-cut1}, \ref{equation:system-T:r-and-cut}) or their inverses are equivalent then $\_ \monoidal \_ : \category{C} \times \category{C} \to \category{C}$ is a bifunctor.
\end{lemma}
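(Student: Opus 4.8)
The plan is to verify the three conditions that make $\_\monoidal\_$ a functor out of the product category $\catC\times\catC$: that it is well defined on $\hom$-classes, that it sends $(\id_A,\id_B)$ to $\id_{A\ma B}$, and that it preserves composition (the interchange law). Well-definedness is immediate once one notes that the equivalence relation on proofs is a congruence — closed under substituting an equivalent subproof into any deduction — for $\frames{\Pi_1}\monoidal\frames{\Pi_2}$ is obtained from $\Pi_1$ and $\Pi_2$ by a single \RO\ followed by a single \LO, so replacing $\Pi_1$ or $\Pi_2$ by an equivalent proof produces an equivalent proof. For identities, unwinding the definition shows that $\id_A\monoidal\id_B$ is precisely the proof of $A\ma B\infers A\ma B$ obtained by applying Lemma~\ref{systemT:order} to $\id_A$ and $\id_B$, which is the very proof used in Proposition~\ref{systemT:identity} to exhibit $\id_{A\ma B}$, so the two morphisms coincide (a short structural induction reconciles the two recipes in general).

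The substance is the interchange law: given $\Pi_1:A\infers B$, $\Pi_1':B\infers C$, $\Pi_2:A'\infers B'$ and $\Pi_2':B'\infers C'$, we must show that $(\frames{\Pi_1'}\circ\frames{\Pi_1})\monoidal(\frames{\Pi_2'}\circ\frames{\Pi_2})$ equals $(\frames{\Pi_1'}\monoidal\frames{\Pi_2'})\circ(\frames{\Pi_1}\monoidal\frames{\Pi_2})$, and we do this by reducing the right-hand side to the left. The right-hand side is a single Cut on the composite term $B\ma B'$ whose left premise $\Pi_1\monoidal\Pi_2$ ends in \LO\ acting on its antecedent and whose right premise $\Pi_1'\monoidal\Pi_2'$ ends in \LO\ splitting $B\ma B'$. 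First apply~(\ref{equation:system-T:l-and-cut1}) to commute the main Cut above the \LO\ of the left premise; the left premise of the Cut now ends in \RO, producing the cut term $B\ma B'$, while its right premise still ends in \LO\ splitting it. Next apply~(\ref{equation:system-T:decomposing-cut}) to replace this Cut by two Cuts, one on $B$ and one on $B'$, feeding respectively into the two halves of the \RO\ that combines $\Pi_1'$ and $\Pi_2'$. Finally apply~(\ref{equation:system-T:r-and-cut}) twice — once in the stated form and once in its exchange-mirrored form — to push each of these Cuts above that \RO; the Cut on $B$ fuses $\Pi_1$ with $\Pi_1'$ into $\Pi_1'\circ\Pi_1$, and the Cut on $B'$ fuses $\Pi_2$ with $\Pi_2'$ into $\Pi_2'\circ\Pi_2$. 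With the outermost \LO\ (untouched since the first step) restored, the resulting proof is literally $(\Pi_1'\circ\Pi_1)\monoidal(\Pi_2'\circ\Pi_2)$.

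The main obstacle is carrying out this last reduction cleanly: one must check at each stage that the side condition of the transformation being invoked is actually satisfied — that the \LO\ commuted past in the first step acts on antecedent terms and not on the cut term, and that the \RO\ commuted past in the last step produces a term distinct from the cut term — and one must justify the mirror of~(\ref{equation:system-T:r-and-cut}) that is needed because $B'$ sits in the right, rather than the left, branch of that \RO. The latter is harmless in $\sysT$: the exchange rule lets us reposition antecedent terms freely, so the mirrored transformation follows from~(\ref{equation:system-T:r-and-cut}) together with exchange. Everything else — well-definedness and preservation of identities — is routine.
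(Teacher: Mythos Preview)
Your reduction of the interchange law is the same as the paper's: commute the Cut past the outer \LO\ via~(\ref{equation:system-T:l-and-cut1}), decompose via~(\ref{equation:system-T:decomposing-cut}), then push each resulting Cut past the inner \RO\ via~(\ref{equation:system-T:r-and-cut}) twice. The paper omits the well-definedness and identity checks you include, saying it ``need only show the functorial relation,'' so your write-up is if anything more complete.

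One caution: your justification for the mirror of~(\ref{equation:system-T:r-and-cut}) appeals to the exchange rule in $\sysT$, but the category $\catC$ in this lemma is built from $\sysT'$, which has no exchange. The paper glosses over this too---it applies~(\ref{equation:system-T:r-and-cut}) to the cut on the right branch without comment---so the honest fix is simply to take the mirrored form of~(\ref{equation:system-T:r-and-cut}) as an additional primitive transformation (exactly as the paper does for the mirror of~(\ref{systemT:l-and-cut2})), rather than deriving it from exchange.
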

\begin{proof}
     We need only show the functorial relation $$(\frames{\Pi_1} \monoidal \frames{\Pi_2}) \circ (\frames{\Pi_3} \monoidal \frames{\Pi_4}) = (\frames{\Pi_1} \circ \frames{\Pi_3}) \monoidal (\frames{\Pi_2} \circ \frames{\Pi_4})$$
     Starting with the left side we have
     $$\AxiomC{$\Pi_1$}\noLine\UnaryInfC{$\vdots$}\UnaryInfC{$A \infers C$}
     \AxiomC{$\Pi_2$}\noLine\UnaryInfC{$\vdots$}\UnaryInfC{$B \infers D$}
     \rulename{R-$\ma$}\BinaryInfC{$A, B \infers C \ma D$}
     \rulename{L-$\ma$}\UnaryInfC{$A \ma B \infers C \ma D$}
     \AxiomC{$\Pi_3$}\noLine\UnaryInfC{$\vdots$}\UnaryInfC{$C \infers E$}
     \AxiomC{$\Pi_4$}\noLine\UnaryInfC{$\vdots$}\UnaryInfC{$D \infers F$}
     \rulename{R-$\ma$}\BinaryInfC{$C, D \infers E \ma F$}
     \rulename{L-$\ma$}\UnaryInfC{$C \ma D \infers E \ma F$}
     \rulename{Cut}\BinaryInfC{$A \ma B \infers E \ma F$.}
     \DisplayProof$$
     Commuting the cut through the left-$\ma$ introduction using ($\ref{equation:system-T:l-and-cut1}$) this proof is transformed to
     $$\AxiomC{$\Pi_1$}\noLine\UnaryInfC{$\vdots$}\UnaryInfC{$A \infers C$}
     \AxiomC{$\Pi_2$}\noLine\UnaryInfC{$\vdots$}\UnaryInfC{$B \infers D$}
     \rulename{R-$\ma$}\BinaryInfC{$A, B \infers C \ma D$}
     \AxiomC{$\Pi_3$}\noLine\UnaryInfC{$\vdots$}\UnaryInfC{$C \infers E$}
     \AxiomC{$\Pi_4$}\noLine\UnaryInfC{$\vdots$}\UnaryInfC{$D \infers F$}
     \rulename{R-$\ma$}\BinaryInfC{$C, D \infers E \ma F$}
     \rulename{L-$\ma$}\UnaryInfC{$C \ma D \infers E \ma F$}
     \rulename{Cut}\BinaryInfC{$A, B \infers E \ma F$}
     \rulename{L-$\ma$}\UnaryInfC{$A \ma B \infers E \ma F$.}
     \DisplayProof$$
     Now we apply the cut decomposition rule (\ref{equation:system-T:decomposing-cut}) to obtain
     $$\AxiomC{$\Pi_1$}\noLine\UnaryInfC{$\vdots$}\UnaryInfC{$A \infers C$}
     \AxiomC{$\Pi_2$}\noLine\UnaryInfC{$\vdots$}\UnaryInfC{$B \infers D$}
     \AxiomC{$\Pi_3$}\noLine\UnaryInfC{$\vdots$}\UnaryInfC{$C \infers E$}
     \AxiomC{$\Pi_4$}\noLine\UnaryInfC{$\vdots$}\UnaryInfC{$D \infers F$}
     \rulename{R-$\ma$}\BinaryInfC{$C, D \infers E \ma F$}
     \rulename{Cut}\BinaryInfC{$C, B \infers E \ma F$}
     \rulename{Cut}\BinaryInfC{$A, B \infers E \ma F$}
     \rulename{L-$\ma$}\UnaryInfC{$A \ma B \infers E \ma F$.}
     \DisplayProof$$
     We now move the top cut across the right-$\ma$ introduction rule with (\ref{equation:system-T:r-and-cut}) giving proof
     $$\AxiomC{$\Pi_1$}\noLine\UnaryInfC{$\vdots$}\UnaryInfC{$A \infers C$}
     \AxiomC{$\Pi_3$}\noLine\UnaryInfC{$\vdots$}\UnaryInfC{$C \infers E$}
     \AxiomC{$\Pi_2$}\noLine\UnaryInfC{$\vdots$}\UnaryInfC{$B \infers D$}
     \AxiomC{$\Pi_4$}\noLine\UnaryInfC{$\vdots$}\UnaryInfC{$D \infers F$}
     \rulename{Cut}\BinaryInfC{$B \infers F$}
     \rulename{R-$\ma$}\BinaryInfC{$C, B \infers E \ma F$}
     \rulename{Cut}\BinaryInfC{$A, B \infers E \ma F$}
     \rulename{L-$\ma$}\UnaryInfC{$A \ma B \infers E \ma F$.}
     \DisplayProof$$
     Finally moving the lower cut across the right-$\ma$ introduction rule with (\ref{equation:system-T:r-and-cut}) transforms our proof to
     $$\AxiomC{$\Pi_1$}\noLine\UnaryInfC{$\vdots$}\UnaryInfC{$A \infers C$}
     \AxiomC{$\Pi_3$}\noLine\UnaryInfC{$\vdots$}\UnaryInfC{$C \infers E$}
     \rulename{Cut}\BinaryInfC{$A \infers E$}
     \AxiomC{$\Pi_2$}\noLine\UnaryInfC{$\vdots$}\UnaryInfC{$B \infers D$}
     \AxiomC{$\Pi_4$}\noLine\UnaryInfC{$\vdots$}\UnaryInfC{$D \infers F$}
     \rulename{Cut}\BinaryInfC{$B \infers F$}
     \rulename{R-$\ma$}\BinaryInfC{$A, B \infers E \ma F$}
     \rulename{L-$\ma$}\UnaryInfC{$A \ma B \infers E \ma F$.}
     \DisplayProof$$
     But this is precise the proof associated to the right side of the functorial relation above.
\end{proof}

Additionally, we define $I = \mt$. The monoidal structure is given through three natural isomorphisms $\alpha, \lambda, \rho$. The first one encodes associativity of $\monoidal$, while the last two ensure that $I$ behaves as a left and right identity respectively. Define $\lambda_A \in \hom(I \monoidal A, A)$ and $\rho_A \in \hom(A \monoidal I, A)$ as in Proposition \ref{proposition:system-T:algebra}:
\begin{equation}
\label{eq:rho-lambda}
    \begin{array}{ccc}
        \lambda_A = \frames{\AxiomC{}\rulename{Id}\UnaryInfC{$A \infers A$}
        \rulename{L-$\multrue$}\UnaryInfC{$\mt, A \infers A$}
        \rulename{\LO}\UnaryInfC{$\mt \ma A \infers A$}\DisplayProof}
        & \text{ and } &
        \rho_A = \frames{\AxiomC{}\rulename{Id}\UnaryInfC{$A \infers A$}
        \rulename{L-$\multrue$}\UnaryInfC{$A, \multrue \infers A$}
        \rulename{\LO}\UnaryInfC{$A\muland \multrue \infers A$}\DisplayProof}.
    \end{array}
\end{equation}

\begin{lemma}
    If proofs related by transformations (\ref{equation:system-T:decomposing-cut}, \ref{equation:system-T:removing-1-cut}, \ref{equation:system-T:left-identity-transformation}) or their inverses are equivalent then $\lambda, \rho$ are natural isomorphisms.
\end{lemma}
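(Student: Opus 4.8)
The goal is to produce, for every term $A$, a two-sided inverse to $\lambda_A$ and to $\rho_A$, and to verify the naturality squares; recall that a natural isomorphism is just a natural transformation all of whose components are invertible.

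For the inverses I would take the evident reversed proofs: let $\lambda_A^{-1}\in\hom(A,\mt\ma A)$ be the class of the proof that combines $\infers\mt$ (the R-$\mt$ rule) with the canonical proof of $A\infers A$ using R-$\ma$, and dually let $\rho_A^{-1}\in\hom(A,A\ma\mt)$. To check $\lambda_A\circ\lambda_A^{-1}=\id_A$ I would write out the composite: it is the cut of $A\infers\mt\ma A$ against the proof $\mt\ma A\infers A$ (which is L-$\mt$ followed by L-$\ma$ over the canonical $A\infers A$). The cut term $\mt\ma A$ is assembled by the R-$\ma$ immediately above the cut and consumed by the L-$\ma$ immediately below, so cut decomposition (\ref{equation:system-T:decomposing-cut}) replaces this single cut by a cut on $\mt$ and a cut on $A$. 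The former matches the pattern of (\ref{equation:system-T:removing-1-cut}) and disappears, and the latter is a cut against an identity proof and disappears by the inverse of (\ref{equation:system-T:left-identity-transformation}); what remains is the canonical proof $A\infers A$, i.e.\ $\id_A$. For $\lambda_A^{-1}\circ\lambda_A$ one argues symmetrically — commute the cut up through the L-$\mt$ and L-$\ma$ of $\lambda_A$, then cancel the residual identity cut — and is left with the canonical proof of $\mt\ma A\infers\mt\ma A$ from Proposition \ref{systemT:identity}, i.e.\ $\id_{\mt\ma A}$. The two $\rho$ computations are the mirror images, with $A\ma\mt$ in place of $\mt\ma A$.

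Naturality is checked in the same style. Fix a proof $\Pi$ of $A\infers B$. For $\lambda$ I must show $\frames{\Pi}\circ\lambda_A=\lambda_B\circ(\id_I\monoidal\frames{\Pi})$ in $\hom(\mt\ma A,B)$. Expanding the right-hand side, $\id_I\monoidal\frames{\Pi}$ is the proof of $\mt\ma A\infers\mt\ma B$ built by R-$\ma$ then L-$\ma$ from $\id_\mt$ and $\Pi$, and post-composition with $\lambda_B$ cuts it against the proof $\mt\ma B\infers B$. Commuting the cut upward past the final L-$\ma$ and then applying cut decomposition (\ref{equation:system-T:decomposing-cut}) splits the cut on $\mt\ma B$ into a cut on $\mt$ and a cut on $B$; the $\mt$-cut is removed by (\ref{equation:system-T:removing-1-cut}) and the $B$-cut, being against the identity part of $\lambda_B$'s proof, is removed by the inverse of (\ref{equation:system-T:left-identity-transformation}). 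Tracking the surviving rule applications leaves exactly $\Pi$ with the L-$\mt$ and L-$\ma$ steps inherited from $\id_I\monoidal(-)$ and $\lambda_B$ in front of it, which is precisely the proof representing $\frames{\Pi}\circ\lambda_A$ (up to at most one more identity-cut cancellation). The naturality square for $\rho$ is obtained by reflecting left and right.

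I expect the main obstacle to be organizational rather than conceptual. One must be careful that every move used is either one of the three transformations named in the hypothesis or one already folded into the equivalence relation in the preceding proposition and bifunctor lemma (vertical cut commutativity, the two identity transformations, and the L-$\ma$/Cut and R-$\ma$/Cut commutations), and in particular that the relative order of the L-$\mt$ and L-$\ma$ steps in the two composites is made to match so that the resulting classes of proofs really do coincide. That these alignments are forced is ultimately due to the rigidity of the L-$\ma$ rule in $\sysT$, which always combines the first two antecedent terms and so pins down the shape of every proof tree that occurs.
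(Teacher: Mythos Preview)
Your proposal is correct and follows essentially the same approach as the paper: you take the inverses from Proposition~\ref{proposition:system-T:algebra}, reduce the composite $\lambda_A\circ\lambda_A^{-1}$ to $\id_A$ via cut decomposition~(\ref{equation:system-T:decomposing-cut}), $\mt$-cut removal~(\ref{equation:system-T:removing-1-cut}), and identity-cut cancellation~(\ref{equation:system-T:left-identity-transformation}), and sketch naturality by the same style of cut manipulation. If anything you are more thorough than the paper, which only exhibits one of the two inverse composites explicitly and dispatches naturality in a single sentence; your closing caveat that the other inverse direction and the naturality reduction quietly borrow the L-$\ma$/Cut and R-$\ma$/Cut commutations already assumed in the bifunctor lemma is well taken and worth keeping.
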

\begin{proof}
    The inverse morphisms are given by the proofs of $A \infers \mt \ma A$ and $A \infers A \ma \mt$ given in Proposition \ref{proposition:system-T:algebra}. Namely, composing $\rho_A$ with its putative inverse is the image of the proof
\begin{equation*}
    \rulename{Id}\AxiomC{}\UnaryInfC{$A \infers A$}
    \rulename{R-$\mt$}\AxiomC{}\UnaryInfC{$\infers \mt$}
    \rulename{R-$\ma$}\BinaryInfC{$A \infers A \ma \mt$}
    \AxiomC{}\rulename{Id}\UnaryInfC{$A \infers A$}
    \rulename{L-$\multrue$}\UnaryInfC{$A, \multrue \infers A$}
    \rulename{\LO}\UnaryInfC{$A\muland \multrue \infers A$}
    \rulename{Cut}\BinaryInfC{$A \infers A$.}\DisplayProof
\end{equation*}
Clearly this is a proof of $A \infers A$, however it must be equivalent to the identity. To show this we apply a sequence of reducing proof transformations as follows.
\begin{align*}
    \frames{\AxiomC{}\UnaryInfC{$A \infers A$}
    \AxiomC{}\UnaryInfC{$\infers \mt$}
    \BinaryInfC{$A \infers A \ma \mt$}
    \AxiomC{}\UnaryInfC{$A \infers A$}
    \UnaryInfC{$A, \multrue \infers A$}
    \UnaryInfC{$A\muland \multrue \infers A$}
    \BinaryInfC{$A \infers A$}\DisplayProof}
    &\stackrel{(\ref{equation:system-T:decomposing-cut})}{=}
    \frames{\AxiomC{}\UnaryInfC{$A \infers A$}
    \AxiomC{}\UnaryInfC{$\infers \mt$}
    \AxiomC{}\UnaryInfC{$A \infers A$}\UnaryInfC{$A, \mt \infers A$}
    \BinaryInfC{$A\infers A$}\BinaryInfC{$A\infers A$}
    \DisplayProof}\\
    &\stackrel{(\ref{equation:system-T:left-identity-transformation})}{=}
    \frames{\AxiomC{}\UnaryInfC{$\infers \mt$}
    \AxiomC{}\UnaryInfC{$A \infers A$}\UnaryInfC{$A, \mt \infers A$}
    \BinaryInfC{$A\infers A$}\DisplayProof}\\
    &\stackrel{(\ref{equation:system-T:removing-1-cut})}{=}
    \frames{\AxiomC{}\UnaryInfC{$A \infers A$}\DisplayProof}.
\end{align*}
An identical argument works for $\lambda_A$.

To be a natural transformation we require the functorial properties
\begin{equation*}
\begin{tikzcd}
 \frames{A} \monoidal I \arrow[r, "\rho_A"] \arrow[d, "\frames{\Pi} \monoidal \id_{I}"'] & \frames{A} \arrow[d, "\frames{\Pi}"] \\
 \frames{B} \monoidal I \arrow[r, "\rho_B"'] & \frames{B}
\end{tikzcd}
\quad \text{ and } \quad
\begin{tikzcd}
 I \monoidal \frames{A} \arrow[r, "\lambda_A"] \arrow[d, "\id_{I} \monoidal \frames{\Pi}"'] & \frames{A} \arrow[d, "\frames{\Pi}"] \\
 I \monoidal \frames{B} \arrow[r, "\lambda_B"'] & \frames{B}
\end{tikzcd}
\end{equation*}
In other words, we need to show that $\frames{\Pi} \circ \rho_A \equiv \rho_b \circ \frames{\Pi} \monoidal I$. By definition $\frames{\Pi} \circ \rho_A$ is given by %proof
\begin{equation*}
    \AxiomC{}\rulename{Id}\UnaryInfC{$A \infers A$}
    \rulename{L-$\multrue$}\UnaryInfC{$A, \multrue \infers A$}
    \rulename{\LO}\UnaryInfC{$A\muland \multrue \infers A$}
    \AxiomC{$\Pi$}\noLine\UnaryInfC{$\vdots$}\UnaryInfC{$A \infers B$}
    \rulename{Cut}\BinaryInfC{$A \ma \mt \infers B$.}\DisplayProof
\end{equation*}
With a slight reshuffle of inferences to remove the additional $\mt$, explicitly using the definition of $\rho_A$ and eliminating cuts, the commutative diagram can be shown to hold. A similar proof also holds for $\lambda_A$ and $\lambda_B$.  
\end{proof}

Next we define $\alpha_{A,B,C} \in \hom((A \monoidal B) \monoidal C, A \monoidal (B \monoidal C))$ as in the proof of Proposition \ref{proposition:system-T:algebra}:
\begin{equation*}
    \alpha_{A,B,C} = \frames{%
    \AxiomC{}\rulename{Id}\UnaryInfC{$A \infers A$}
    \AxiomC{}\rulename{Id}\UnaryInfC{$B \infers B$}
    \AxiomC{}\rulename{Id}\UnaryInfC{$C \infers C$}
    \rulename{\RO}\BinaryInfC{$B,C \infers B\muland C$}
    \rulename{\RO}\BinaryInfC{$A, B, C \infers A \muland (B \muland C)$}
    \rulename{\LO}\UnaryInfC{$A \muland B, C \infers A \muland (B \muland C)$}
    \rulename{\LO}\UnaryInfC{$(A \muland B) \muland C \infers A \muland (B \muland C)$}
    \DisplayProof}.
\end{equation*}
That this is an isomorphism also follows from this proposition.

\begin{equation}
\begin{tikzcd}
 (A \muland B) \muland C \arrow[rr, "\alpha_{A,B,C}"] \arrow[dd, "(f \muland g) \muland h"'] & & A \muland (B \muland C) \arrow[dd, "f \muland (g \muland h)"] \\ 
 & & \\
 (D \muland E) \muland F \arrow[rr, "\alpha_{D,E,F}"'] & & D \muland (E \muland F)
\end{tikzcd}
\end{equation}

Finally to form a monoidal category, $\alpha, \lambda, \rho$ must satisfy the triangle and pentagon rules. We prove these in turn as follows.

\begin{lemma}[Triangle Rule]
Let logic system $\sysT'$ give a category $\catC$ as discussed above and suppose that proofs related by transformation rules (\ref{equation:system-T:decomposing-cut}, \ref{equation:system-T:l-and-cut1}, \ref{equation:system-T:left-identity-transformation}, \ref{equation:system-T:left-right-and-transformation}, \ref{equation:system-T:left-1-right-and-transformation}) or their inverses are equivalent. Then, the triangle coherence relation is satisfied:
\begin{equation*}
    \begin{tikzpicture}[baseline=(current bounding box.center)]
            \draw (0,1.5) node (n0) {$(A\monoidal I) \monoidal B$};
            \draw (4,1.5) node (n1) {$A\monoidal (I \monoidal B)$};
            \draw (2,0) node (n2) {$A \monoidal B$};
            \draw (2,1.7) node (a0) {\scriptsize $\alpha_{A,I,B}$};
            \draw (0.3,0.6) node (a1) {\scriptsize $\rho_A\monoidal\id_B$};
            \draw (3.8,0.6) node (a2) {\scriptsize $\id_A\monoidal\lambda_B$};
            \draw[->] (n0) -- (n1);
            \draw[->] (n1) -- (n2);
            \draw[->] (n0) -- (n2);
    \end{tikzpicture}
\end{equation*}
\end{lemma}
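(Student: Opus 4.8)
The plan is to expand both legs of the triangle into explicit proof trees and exhibit a chain of the five permitted transformations connecting them, so that the two legs name the same morphism. Unwinding the definitions of $\alpha$, $\lambda$, $\rho$, the bifunctor $\monoidal$ on morphisms, composition (the cut rule) and the identity, the left leg $(\id_A\monoidal\lambda_B)\circ\alpha_{A,I,B}$ is the equivalence class of a proof of $(A\ma\mt)\ma B\infers A\ma B$ whose last step is a single cut on the term $A\ma(\mt\ma B)$, while the right leg $\rho_A\monoidal\id_B$ is the class of a short cut-free proof of the same sequent.

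The first task is to eliminate the cut in the left leg. In the copy of $\alpha_{A,I,B}$ the cut term $A\ma(\mt\ma B)$ is produced by a right-$\ma$ introduction sitting just above the two concluding left-$\ma$ introductions that assemble $(A\ma\mt)\ma B$ in the antecedent; two applications of the left-$\ma$/cut commutation (\ref{equation:system-T:l-and-cut1}) therefore lift the cut above those two steps. Now the cut sits between a right-$\ma$ introduction producing $A\ma(\mt\ma B)$ and the left-$\ma$ introduction at the foot of $\id_A\monoidal\lambda_B$ that consumes it, so cut decomposition (\ref{equation:system-T:decomposing-cut}) replaces it by a cut on $A$ against the identity $A\infers A$, deleted via the left-identity transformation (\ref{equation:system-T:left-identity-transformation}), together with a residual cut on $\mt\ma B$. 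For this residual cut, commuting the left-$\ma$ introduction inside the copy of $\lambda_B$ past the right-$\ma$ that surrounds it, using (\ref{equation:system-T:left-right-and-transformation}), makes the corresponding premise end in a left-$\ma$ introduction of $\mt\ma B$; a second application of cut decomposition (\ref{equation:system-T:decomposing-cut}) then reduces it to trivial cuts on $B$ and on $\mt$ against $B\infers B$ and $\id_\mt = \frames{\mt\infers\mt}$, both removed by (\ref{equation:system-T:left-identity-transformation}).

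At this point the left leg has become a cut-free proof that applies the left-$\mt$ rule to $B\infers B$, combines the result with $A\infers A$ by a right-$\ma$ introduction, and then assembles the antecedent by two left-$\ma$ introductions. The right leg $\rho_A\monoidal\id_B$ differs only in that its left-$\mt$ rule acts on the $A$ side and its left-$\ma$ introduction of $A\ma\mt$ precedes the right-$\ma$ step; pulling that left-$\ma$ introduction below the right-$\ma$ by (\ref{equation:system-T:left-right-and-transformation}) and then pushing the left-$\mt$ introduction below the right-$\ma$ in both legs by (\ref{equation:system-T:left-1-right-and-transformation}) renders the two proof trees syntactically identical. Several of these left-$\ma$ and left-$\mt$ introductions act on the right-hand premise of a right-$\ma$ introduction, so one uses the mirror-image forms of (\ref{equation:system-T:left-right-and-transformation}) and (\ref{equation:system-T:left-1-right-and-transformation}); these are established exactly as the stated forms and, in $\sysT$, may also be obtained from them via exchange.

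I expect the only genuine difficulty to be bookkeeping: carrying the (rather large) proof trees faithfully through each rewrite and, in particular, tracking the unit $\mt$ --- which, unlike an atomic proposition, has no primitive identity rule and must be transported as $\frames{\mt\infers\mt}$ built from right-$\mt$ and left-$\mt$ introductions --- so that every step is licensed by one of the five permitted transformations (or its mirror image), and in particular no step ever needs to commute a cut past a right-$\ma$ introduction on the right premise, for which no transformation is available.
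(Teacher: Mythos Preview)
Your proposal is correct and follows essentially the same route as the paper: push the cut above the two concluding \LO\ rules via (\ref{equation:system-T:l-and-cut1}), apply cut decomposition (\ref{equation:system-T:decomposing-cut}) to split off a trivial cut on $A$, commute the inner \LO/\RO\ via (\ref{equation:system-T:left-right-and-transformation}) and decompose again to trivialise the residual cut on $\mt\ma B$, and finally align both legs using (\ref{equation:system-T:left-right-and-transformation}) and (\ref{equation:system-T:left-1-right-and-transformation}). Your explicit flagging of the mirror-image instances of (\ref{equation:system-T:left-right-and-transformation}) and (\ref{equation:system-T:left-1-right-and-transformation}) and of the derived nature of $\id_\mt$ is a point the paper passes over in silence, so in that respect your write-up is slightly more careful.
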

\begin{proof}
We must show:
\begin{equation}
\label{eq:triangle}
	\alpha_{A, \multrue, B} \circ (\id_A \muland \lambda_B) = \rho_A \muland \id_B.
\end{equation}
First we compute $\alpha_{A, \multrue, B} \circ (\id_A \muland \lambda_B)$ as follows:
\begin{eqnarray*}
\scpf
\begin{tikzcd}
(A \monoidal I) \monoidal B \arrow[dd, "\alpha_{A, \mt, B}"'] \\
\\
A \monoidal (I \monoidal B)
\end{tikzcd}
\;\; &=& \;\;
\frames{\AxiomC{}\rulename{Id}\UnaryInfC{$A \infers A$}
\AxiomC{}\rulename{Id}\UnaryInfC{$\mt \infers \mt$}
\AxiomC{}\rulename{Id}\UnaryInfC{$B \infers B$}
\rulename{\RO}\BinaryInfC{$\mt,B \infers \mt \muland B$}
\rulename{\RO}\BinaryInfC{$A, \mt, B \infers A \muland (\mt \muland B)$}
\rulename{\LO}\UnaryInfC{$A \muland \mt, B \infers A \muland (\mt \muland B)$}
\rulename{\LO}\UnaryInfC{$(A \muland \mt) \muland B \infers A \muland (\mt \muland B)$}\DisplayProof}\\
%\scpf
\begin{tikzcd}
A \monoidal (I \monoidal B) \arrow[dd, "\id_A \ma \lambda_B"'] \\
\\
A \monoidal B
\end{tikzcd}
\;\; &=& \;\;\frames{
\AxiomC{}\rulename{Id}\UnaryInfC{$A \infers A$}
\AxiomC{}\rulename{Id}\UnaryInfC{$B \infers B$}
\rulename{\LI}\UnaryInfC{$\mt, B \infers B$}
\rulename{\LO}\UnaryInfC{$\mt \ma B \infers B$}
\rulename{\RO}\BinaryInfC{$A, \mt \ma B \infers A \ma B$}
\rulename{\LO}\UnaryInfC{$A \ma (\mt \ma B) \infers A \ma B$}\DisplayProof}.
\end{eqnarray*}
%\rulename{\LO' \pfline}\UnaryInfC{$(A \muland B) \muland C \infers A \muland (B \muland C)$}
Composing these shows $\alpha_{A, \multrue, B} \circ (\id_A \muland \lambda_B)$ is
$$\scpf\frames{
\AxiomC{}\rulename{Id}\UnaryInfC{$A \infers A$}
\AxiomC{}\rulename{Id}\UnaryInfC{$\mt \infers \mt$}
\AxiomC{}\rulename{Id}\UnaryInfC{$B \infers B$}
\rulename{\RO}\BinaryInfC{$\mt,B \infers \mt \muland B$}
\rulename{\RO}\BinaryInfC{$A, \mt, B \infers A \muland (\mt \muland B)$}
\rulename{\LO \pfline\label{pf:1}}\UnaryInfC{$A \muland \mt, B \infers A \muland (\mt \muland B)$}
\rulename{\LO \pfline\label{pf:2}}\UnaryInfC{$(A \muland \mt) \muland B \infers A \muland (\mt \muland B)$}
\AxiomC{}\rulename{Id}\UnaryInfC{$A \infers A$}
\AxiomC{}\rulename{Id}\UnaryInfC{$B \infers B$}
\rulename{\LI}\UnaryInfC{$\mt, B \infers B$}
\rulename{\LO}\UnaryInfC{$\mt \ma B \infers B$}
\rulename{\RO}\BinaryInfC{$A, \mt \ma B \infers A \ma B$}
\rulename{\LO}\UnaryInfC{$A \ma (\mt \ma B) \infers A \ma B$}
\rulename{Cut \pfline\label{pf:3}}\BinaryInfC{$(A \ma \mt) \ma B \infers A \ma B$}\DisplayProof}.$$
Moving the cut~\pfref{pf:3} past rules~\pfref{pf:1} and~\pfref{pf:2} is straightforward using proof transformation (\ref{equation:system-T:l-and-cut1}), which reduces our expression to
$$\frames{%
\AxiomC{}\rulename{Id}\UnaryInfC{$A \infers A$}
\AxiomC{}\rulename{Id}\UnaryInfC{$\mt \infers \mt$}\AxiomC{}
\rulename{Id}\UnaryInfC{$B \infers B$}
\rulename{\RO}\BinaryInfC{$\mt,B \infers \mt \muland B$}
\rulename{\RO \pfline\label{pf:5}}\BinaryInfC{$A, \mt, B \infers A \muland (\mt \muland B)$}
\AxiomC{}\rulename{Id}\UnaryInfC{$A \infers A$}
\AxiomC{}\rulename{Id}\UnaryInfC{$B \infers B$}
\rulename{\LI}\UnaryInfC{$\mt, B \infers B$}
\rulename{\LO}\UnaryInfC{$\mt \ma B \infers B$}
\rulename{\RO}\BinaryInfC{$A, \mt \ma B \infers A \ma B$}
\rulename{\LO \pfline \label{pf:6}}\UnaryInfC{$A \ma (\mt \ma B) \infers A \ma B$}
\rulename{Cut \pfline\label{pf:4}}\BinaryInfC{$A, \mt, B \infers A \ma B$}
\rulename{\LO}\UnaryInfC{$A \muland \mt, B \infers A \ma B$}
\rulename{\LO}\UnaryInfC{$(A \muland \mt) \muland B \infers A \ma B$}\DisplayProof}.$$
Now the cut \pfref{pf:4} with rules \pfref{pf:5} and \pfref{pf:6} can be reduced by the cut-decomposition transformation (\ref{equation:system-T:decomposing-cut}). This reduces the prooftree to
$$\frames{%
\AxiomC{}\rulename{Id \pfline\label{pf:7}}\UnaryInfC{$A \infers A$}%
\AxiomC{}\rulename{Id}\UnaryInfC{$\mt \infers \mt$}
\AxiomC{}\rulename{Id}\UnaryInfC{$B \infers B$}%
\rulename{\RO}\BinaryInfC{$\mt,B \infers \mt \muland B$}%
\AxiomC{}\rulename{Id}\UnaryInfC{$A \infers A$}
\AxiomC{}\rulename{Id}\UnaryInfC{$B \infers B$}
\rulename{\LI}\UnaryInfC{$\mt, B \infers B$}
\rulename{\LO \pfline\label{pf:10}}\UnaryInfC{$\mt \ma B \infers B$}
\rulename{\RO \pfline\label{pf:11}}\BinaryInfC{$A, \mt \ma B \infers A \ma B$}
\rulename{Cut}\BinaryInfC{$A, \mt, B \infers A \ma B$}
\rulename{Cut \pfline\label{pf:8}}\BinaryInfC{$A, \mt, B \infers A \ma B$}
\rulename{\LO}\UnaryInfC{$A \muland \mt, B \infers A \ma B$}
\rulename{\LO}\UnaryInfC{$(A \muland \mt) \muland B \infers A \ma B$}\DisplayProof}.$$
Now we eliminate the redundant identity \pfref{pf:7} and cut rule \pfref{pf:8} with proof transformation (\ref{equation:system-T:left-identity-transformation}). We also commute rules \pfref{pf:10} and \pfref{pf:11} using transformation (\ref{equation:system-T:left-right-and-transformation}). This reduces the form for $\alpha_{A, \multrue, B} \circ (\id_A \muland \lambda_B)$ to
$$\frames{%
\AxiomC{}\rulename{Id}\UnaryInfC{$\mt \infers \mt$}
\AxiomC{}\rulename{Id}\UnaryInfC{$B \infers B$}%
\rulename{\RO \pfline\label{pf:13}}\BinaryInfC{$\mt,B \infers \mt \muland B$}%
\AxiomC{}\rulename{Id}\UnaryInfC{$A \infers A$}
\AxiomC{}\rulename{Id}\UnaryInfC{$B \infers B$}
\rulename{\LI}\UnaryInfC{$\mt, B \infers B$}
\rulename{\RO}\BinaryInfC{$A, \mt, B \infers A \ma B$}
\rulename{\LO \pfline\label{pf:14}}\UnaryInfC{$A, \mt \ma B \infers B$}
\rulename{Cut \pfline\label{pf:12}}\BinaryInfC{$A, \mt, B \infers A \ma B$}
\rulename{\LO}\UnaryInfC{$A \muland \mt, B \infers A \ma B$}
\rulename{\LO}\UnaryInfC{$(A \muland \mt) \muland B \infers A \ma B$}\DisplayProof}.$$
Again we use cut decomposition (\ref{equation:system-T:decomposing-cut}), now on cut \pfref{pf:12} with \pfref{pf:13} and \pfref{pf:14}. This produces
$$\frames{%
\AxiomC{}\rulename{Id \pfline\label{pf:15}}\UnaryInfC{$\mt \infers \mt$}
\AxiomC{}\rulename{Id \pfline\label{pf:16}}\UnaryInfC{$B \infers B$}
\AxiomC{}\rulename{Id}\UnaryInfC{$A \infers A$}
\AxiomC{}\rulename{Id}\UnaryInfC{$B \infers B$}
\rulename{\LI}\UnaryInfC{$\mt, B \infers B$}
\rulename{\RO}\BinaryInfC{$A, \mt, B \infers A \ma B$}
\rulename{Cut \pfline\label{pf:17}}\BinaryInfC{$A, \mt \ma B \infers B$}
\rulename{Cut \pfline\label{pf:18}}\BinaryInfC{$A, \mt, B \infers A \ma B$}
\rulename{\LO}\UnaryInfC{$A \muland \mt, B \infers A \ma B$}
\rulename{\LO}\UnaryInfC{$(A \muland \mt) \muland B \infers A \ma B$}\DisplayProof}.$$
Again we may eliminate the redundant identitities \pfref{pf:15}, \pfref{pf:16} and cut rules \pfref{pf:17}, \pfref{pf:18} to give
$$\frames{%
\AxiomC{}\rulename{Id}\UnaryInfC{$A \infers A$}
\AxiomC{}\rulename{Id}\UnaryInfC{$B \infers B$}
\rulename{\LI \pfline\label{pf:19}}\UnaryInfC{$\mt, B \infers B$}
\rulename{\RO \pfline\label{pf:20}}\BinaryInfC{$A, \mt, B \infers A \ma B$}
\rulename{\LO}\UnaryInfC{$A \muland \mt, B \infers A \ma B$}
\rulename{\LO}\UnaryInfC{$(A \muland \mt) \muland B \infers A \ma B$}\DisplayProof}.$$
Finally we commute rules \pfref{pf:19} and \pfref{pf:20} using (\ref{equation:system-T:left-1-right-and-transformation}) to obtain
$$\frames{%
\AxiomC{}\rulename{Id}\UnaryInfC{$A \infers A$}
\AxiomC{}\rulename{Id}\UnaryInfC{$B \infers B$}
\rulename{\RO}\BinaryInfC{$A, B \infers A \ma B$}
\rulename{\LI}\UnaryInfC{$A, \mt, B \infers A \ma B$}
\rulename{\LO}\UnaryInfC{$A \muland \mt, B \infers A \ma B$}
\rulename{\LO}\UnaryInfC{$(A \muland \mt) \muland B \infers A \ma B$}\DisplayProof}.$$

Yet by definition
$$\rho_A \muland \id_B = \frames{%
\AxiomC{}\rulename{Id}\UnaryInfC{$A \infers A$}
\rulename{\LI \pfline\label{pf:23}}\UnaryInfC{$A, \mt \infers A$}
\rulename{\LO \pfline\label{pf:22}}\UnaryInfC{$A \ma \mt \infers A$}
\AxiomC{}\rulename{Id}\UnaryInfC{$B \infers B$}
\rulename{\RO \pfline\label{pf:21}}\BinaryInfC{$A \ma \mt, B \infers A \ma B$}
\rulename{\LO}\UnaryInfC{$(A \ma \mt) \ma B \infers A \ma B$}\DisplayProof}.$$
Using transformations (\ref{equation:system-T:left-right-and-transformation}) and (\ref{equation:system-T:left-1-right-and-transformation}) to move rule \pfref{pf:21} past \pfref{pf:22} and then \pfref{pf:23} produces exactly the same expression as the reduced form for $\alpha_{A, \multrue, B} \circ (\id_A \muland \lambda_B)$. Therefore $\alpha_{A, \multrue, B} \circ (\id_A \muland \lambda_B) = \rho_A \muland \id_B$ as required.
\end{proof}

\begin{lemma}[Pentagon Rule]
Let logic system $\sysT'$ give a category $\catC$ as discussed above  and suppose that proofs related by transformation rules (\ref{equation:system-T:decomposing-cut}, \ref{equation:system-T:l-and-cut1}, \ref{equation:system-T:left-identity-transformation}) or their inverses are equivalent.. Then, the pentagon coherence relation is satisfied:
\begin{equation*}
        \begin{tikzpicture}[baseline=(current bounding box.center)]
            \draw (1.5,3) node (n0) {$(A \monoidal (B \monoidal C)) \monoidal D$};
            \draw (0,1.5) node (n1) {$((A \monoidal B) \monoidal C) \monoidal D$};
            \draw (4,0) node (n2) {$(A \monoidal B) \monoidal (C \monoidal D)$};
            \draw (6.5,3) node (n3) {$A \monoidal ((B \monoidal C) \monoidal D)$};
            \draw (8,1.5) node (n4) {$A \monoidal (B \monoidal (C \monoidal D))$};
            \draw (4,3.2) node (a0) {\scriptsize $\alpha_{A,B\monoidal C,D}$};
            \draw (-0.3,2.35) node (a1) {\scriptsize $\alpha_{A,B,C}\monoidal \id_D$};
            \draw (8.3,2.35) node (a2) {\scriptsize $\id_A\monoidal\alpha_{B,C,D}$};
            \draw (1,0.6) node (a3) {\scriptsize $\alpha_{A\monoidal B,C,D}$};
            \draw (7,0.6) node (a4) {\scriptsize $\alpha_{A,B,C\monoidal D}$};
            \draw[->] (n0) -- (n1);
            \draw[->] (n1) -- (n2);
            \draw[->] (n0) -- (n3);
            \draw[->] (n3) -- (n4);
            \draw[->] (n4) -- (n2);
        \end{tikzpicture}
    \end{equation*}
\end{lemma}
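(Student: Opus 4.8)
The plan is to follow the template of the Triangle Rule: realise each of the two directed paths around the pentagon as an explicit proof tree, and then reduce both, via the cut-elimination proof transformations, to one and the same cut-free proof of $((A\monoidal B)\monoidal C)\monoidal D \infers A\monoidal(B\monoidal(C\monoidal D))$. Concretely, the long path is the composite $(\id_A\monoidal\alpha_{B,C,D})\circ\alpha_{A,B\monoidal C,D}\circ(\alpha_{A,B,C}\monoidal\id_D)$ and the short path is $\alpha_{A,B,C\monoidal D}\circ\alpha_{A\monoidal B,C,D}$; proving the pentagon rule amounts to showing that $\frames{\cdot}$ sends these to equal morphisms in $\catC$.

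First I would expand the five associators using the definition of $\alpha$ coming from Proposition \ref{proposition:system-T:algebra} (four identity axioms $A\infers A$, $B\infers B$, $C\infers C$, $D\infers D$ — the last of which, together with $A\ma B$-identities, themselves unpack through Lemma \ref{systemT:order} — followed by two R-$\ma$ and two L-$\ma$, suitably relabelled), together with the definition of $\monoidal$ on morphisms (an R-$\ma$ followed by an L-$\ma$) and the cut-rule definition of composition. Each path then becomes a proof tree carrying three stacked cut rules at the bottom, each cutting a composite term $X\ma Y$ arising as the source or target of one of the associators. Second, starting from the long path, I would repeatedly (i) commute each cut upward past the L-$\ma$ introductions lying over it using transformation (\ref{equation:system-T:l-and-cut1}); (ii) decompose each cut over a tensor $X\ma Y$ into successive cuts over $X$ and over $Y$ using (\ref{equation:system-T:decomposing-cut}); and (iii) discard each cut that has come to rest immediately above an identity axiom using the inverse of (\ref{equation:system-T:left-identity-transformation}). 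Because no $\mt$ term appears anywhere in the pentagon, none of the unit-specific transformations are needed, which is why the lemma invokes only (\ref{equation:system-T:decomposing-cut}, \ref{equation:system-T:l-and-cut1}, \ref{equation:system-T:left-identity-transformation}). Carried out, the long path collapses to the canonical cut-free proof that posits the four identities, assembles the consequent $A\ma(B\ma(C\ma D))$ from the inside out by three R-$\ma$ steps, and then assembles the antecedent $((A\ma B)\ma C)\ma D$ by three L-$\ma$ steps. Running the identical reduction on the short path terminates at that same cut-free proof; since every transformation used respects the equivalence relation on proofs (by the standing hypothesis of the lemma), the two composites coincide and the pentagon commutes.

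The main obstacle is bookkeeping rather than any new idea: after expanding $\monoidal$ and the term-level identities, each composite is a tree with on the order of twenty rule applications and three cuts ranging over terms built from all of $A,B,C,D$, and while sliding a cut upward one must track precisely which position of the sequent it acts on so that the ensuing cut-decompositions peel off the correct sub-terms in the correct order. A second point deserving explicit care is that the two reduction sequences actually meet exactly, and not merely at two cut-free proofs that happen to be provably equal: this holds because once the cuts are removed the surviving L-$\ma$ and R-$\ma$ applications are forced — up to the commutations already quotiented out — by the bracketing of the source and target terms, exactly as in the associativity computation underlying the Triangle Rule and, ultimately, by Corollary \ref{corollary:normal-form}.
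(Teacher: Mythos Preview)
Your proposal is correct and matches the paper's approach: expand both composites as explicit proof trees, then iterate (\ref{equation:system-T:l-and-cut1}) to push cuts upward, (\ref{equation:system-T:decomposing-cut}) to split cuts over tensors, and the inverse of (\ref{equation:system-T:left-identity-transformation}) to remove identity--cut pairs, until both sides land on the same cut-free normal form. Two minor corrections: the long and short paths carry two cuts and one cut respectively (not ``three stacked cut rules'' each), and the paper's own reduction quietly invokes (\ref{equation:system-T:r-and-cut}) once to commute a cut past an R-$\ma$ introduction, so be prepared to need it as well despite its omission from the lemma's stated hypotheses.
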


\scpf
\begin{proof}
Let $A, B, C, D$ be objects; we need to show that:
\begin{equation}\label{eq:pentagon}
	 (\alpha_{A,B,C} \monoidal \id_D) \circ \alpha_{A, B \monoidal C, D} \circ (\id_A \monoidal \alpha_{B,C,D}) = \alpha_{A \monoidal B, C, D} \circ \alpha_{A, B, C \monoidal D}.
\end{equation}
Clearly the proofs that map to the morphisms above involve multiple cuts and it will be shown that both sides of the equation reduce to the same equivalent form after eliminating all the cuts in the proofs. Starting with the morphisms on the right side of this equation, the proof of interest is:

\footnotesize
\begin{prooftree}
\ruleId{$A$}
\ruleId{$B$}
\rulename{\RO}\BinaryInfC{$A, B \infers A \ma B$}
\ruleId{$C$}
\ruleId{$D$}
\rulename{\RO}\BinaryInfC{$C, D \infers C \ma D$}
\rulename{\RO\pfline\label{pf:4a}}\BinaryInfC{$A, B, C, D \infers (A \ma B) \ma (C \ma D)$}
\rulename{\LO $\times 3$ \pfline\label{pf:1a}}\UnaryInfC{$((A \ma B) \ma C) \ma D \infers (A \ma B) \ma (C \ma D)$}%
\ruleId{$A$}
\ruleId{$B$}
\ruleId{$C$}
\ruleId{$D$}
\rulename{\RO}\BinaryInfC{$C, D \infers C \ma D$}
\rulename{\RO}\BinaryInfC{$B, C, D \infers B \ma (C \ma D)$}
\rulename{\RO}\BinaryInfC{$A, B, C, D \infers A \ma (B \ma (C \ma D))$}
\rulename{\LO $\times 2$}\UnaryInfC{$(A \ma B), (C \ma D) \infers A \ma (B \ma (C \ma D))$}
\rulename{\LO \pfline\label{pf:3a}}\UnaryInfC{$(A \ma B) \ma (C \ma D) \infers A \ma (B \ma (C \ma D))$}%
\rulename{Cut \pfline\label{pf:2a}}\BinaryInfC{$((A \ma B) \ma C) \ma D \infers A \ma (B \ma (C \ma D))$.}
\end{prooftree}

\normalsize
Moving the cut \pfref{pf:2a} past the three left-$\ma$ introductions \pfref{pf:1a} is straightforward using (\ref{equation:system-T:l-and-cut1}). Applying cut decomposition (\ref{equation:system-T:decomposing-cut}) to \pfref{pf:4a} and \pfref{pf:3a} transforms the resulting proof to

\footnotesize\begin{prooftree}
\ruleId{$A$}
\ruleId{$B$}
\rulename{\RO \pfline\label{pf:9a}}\BinaryInfC{$A, B \infers A \ma B$}
\ruleId{$C$}
\ruleId{$D$}
\rulename{\RO \pfline\label{pf:8a}}\BinaryInfC{$C, D \infers C \ma D$}
\ruleId{$A$}
\ruleId{$B$}
\ruleId{$C$}
\ruleId{$D$}
\rulename{\RO}\BinaryInfC{$C, D \infers C \ma D$}
\rulename{\RO}\BinaryInfC{$B, C, D \infers B \ma (C \ma D)$}
\rulename{\RO}\BinaryInfC{$A, B, C, D \infers A \ma (B \ma (C \ma D))$}
\rulename{\LO $\times 2$ \pfline\label{pf:7a}}\UnaryInfC{$A \ma B, C \ma D \infers A \ma (B \ma (C \ma D))$}
\rulename{Cut \pfline\label{pf:6a}}\BinaryInfC{$A \ma B, C, D \infers A \ma (B \ma (C \ma D))$}
\rulename{Cut \pfline\label{pf:5a}}\BinaryInfC{$A, B, C, D \infers A \ma (B \ma (C \ma D))$}
\rulename{\LO $\times 3$}\UnaryInfC{$((A \ma B) \ma C) \ma D \infers A \ma (B \ma (C \ma D))$.}%
\end{prooftree}

\normalsize
Next one applies cut decomposition to \pfref{pf:6a}, \pfref{pf:8a} and \pfref{pf:7a}, followed by moving cut \pfref{pf:5a} across \pfref{pf:9a} and the rest of the proof. Finally four applications of identity transformation (\ref{equation:system-T:left-identity-transformation}) reduces this proof to a normal form

\begin{equation}\label{pft:nf}
\footnotesize
\ruleId{$A$}
\ruleId{$B$}
\ruleId{$C$}
\ruleId{$D$}
\rulename{\RO}\BinaryInfC{$C, D \infers C \ma D$}
\rulename{\RO}\BinaryInfC{$B, C, D \infers (B \ma (C \ma D))$}
\rulename{\RO}\BinaryInfC{$A, B, C, D \infers A \ma (B \ma (C \ma D))$}
\rulename{\LO}\UnaryInfC{$A \ma B, C, D \infers A \ma (B \ma (C \ma D))$}
\rulename{\LO}\UnaryInfC{$(A \ma B) \ma C, D \infers A \ma (B \ma (C \ma D))$}
\rulename{\LO}\UnaryInfC{$((A \ma B) \ma C) \ma D \infers A \ma (B \ma (C \ma D))$.}\DisplayProof
\end{equation}

Now we must reduce the left side of (\ref{eq:pentagon}) to this normal form We start by simplifying the first composition $\alpha_{A,B,C} \ma \id_D \circ \alpha_{A, B \ma C, D}$, which is defined by the proof
\scpf
$$\footnotesize
\AxiomC{}\rulename{Id}\UnaryInfC{$A \infers A$}
\AxiomC{}\rulename{Id}\UnaryInfC{$B \infers B$}
\AxiomC{}\rulename{Id}\UnaryInfC{$C \infers C$}
\rulename{\RO}\BinaryInfC{$B, C \infers B \ma C$}
\rulename{\RO}\BinaryInfC{$A, B, C \infers A \ma (B \ma C)$}
\rulename{\LO}\UnaryInfC{$A \ma B, C \infers A \ma (B \ma C)$}
\rulename{\LO}\UnaryInfC{$(A \ma B) \ma C \infers A \ma (B \ma C)$}
\AxiomC{}\rulename{Id}\UnaryInfC{$D \infers D$}
\rulename{\RO \pfline\label{pf:4b}}\BinaryInfC{$(A \ma B) \ma C, D \infers (A \ma (B \ma C)) \ma D$}
\rulename{\LO \pfline\label{pf:3b}}\UnaryInfC{$((A \ma B) \ma C) \ma D \infers (A \ma (B \ma C)) \ma D$}%
\ruleId{$A$}
\ruleId{$B$}
\ruleId{$C$}
\rulename{\RO}\BinaryInfC{$B, C \infers B \ma C$}
\ruleId{$D$}
\rulename{\RO}\BinaryInfC{$B, C, D \infers (B \ma C) \ma D$}
\rulename{\RO}\BinaryInfC{$A, B, C, D \infers A \ma ((B \ma C) \ma D)$}
\rulename{\LO}\UnaryInfC{$A, B \ma C, D \infers A \ma ((B \ma C) \ma D)$}
\rulename{\LO}\UnaryInfC{$A \ma (B \ma C), D \infers A \ma ((B \ma C) \ma D)$}
\rulename{\LO \pfline\label{pf:2b}}\UnaryInfC{$(A \ma (B \ma C)) \ma D \infers A \ma ((B \ma C) \ma D)$}
\rulename{Cut \pfline\label{pf:1b}}\BinaryInfC{$((A \ma B) \ma C) \ma D \infers A \ma ((B \ma C) \ma D)$.}\DisplayProof
$$
\normalsize
To simplify this, we first commute the cut \pfref{pf:1b} past \pfref{pf:3b} with (\ref{equation:system-T:l-and-cut1}) and then use cut decomposition (\ref{equation:system-T:decomposing-cut}) on \pfref{pf:4b} and \pfref{pf:2b}. This transforms this proof into
$$\footnotesize
\AxiomC{}\rulename{Id}\UnaryInfC{$A \infers A$}
\AxiomC{}\rulename{Id}\UnaryInfC{$B \infers B$}
\AxiomC{}\rulename{Id}\UnaryInfC{$C \infers C$}
\rulename{\RO}\BinaryInfC{$B, C \infers B \ma C$}
\rulename{\RO \pfline\label{pf:11b}}\BinaryInfC{$A, B, C \infers A \ma (B \ma C)$}
\rulename{\LO \pfline\label{pf:10b}}\UnaryInfC{$A \ma B, C \infers A \ma (B \ma C)$}
\rulename{\LO \pfline\label{pf:9b}}\UnaryInfC{$(A \ma B) \ma C \infers A \ma (B \ma C)$}
\rulename{Id \pfline\label{pf:8b}}\AxiomC{}\UnaryInfC{$D \infers D$}
%\rulename{\RO}\BinaryInfC{$(A \ma B) \ma C, D \infers (A \ma (B \ma C)) \ma D$}
\ruleId{$A$}
\ruleId{$B$}
\ruleId{$C$}
\rulename{\RO}\BinaryInfC{$B, C \infers B \ma C$}
\ruleId{$D$}
\rulename{\RO}\BinaryInfC{$B, C, D \infers (B \ma C) \ma D$}
\rulename{\RO}\BinaryInfC{$A, B, C, D \infers A \ma ((B \ma C) \ma D)$}
\rulename{\LO}\UnaryInfC{$A, B \ma C, D \infers A \ma ((B \ma C) \ma D)$}
\rulename{\LO  \pfline\label{pf:7b}}\UnaryInfC{$A \ma (B \ma C), D \infers A \ma ((B \ma C) \ma D)$}
\rulename{Cut \pfline\label{pf:6b}}\BinaryInfC{$(A \ma (B \ma C)), D \infers A \ma ((B \ma C) \ma D)$}
\rulename{Cut \pfline\label{pf:5b}}\BinaryInfC{$((A \ma B) \ma C), D \infers A \ma ((B \ma C) \ma D)$}
\rulename{\LO}\UnaryInfC{$((A \ma B) \ma C) \ma D \infers (A \ma (B \ma C)) \ma D$.}%
\DisplayProof
$$\normalsize
Now we eliminate the identity-cut pair \pfref{pf:8b} and \pfref{pf:6b} using (\ref{equation:system-T:left-identity-transformation}) and commute the cut \pfref{pf:5b} past \pfref{pf:9b} and \pfref{pf:10b} using (\ref{equation:system-T:l-and-cut1}). This allows us to apply cut decomposition (\ref{equation:system-T:decomposing-cut}) on \pfref{pf:11b} and \pfref{pf:7b}. This leads to the proof
$$\footnotesize
\AxiomC{}\rulename{Id \pfline\label{pf:17b}}\UnaryInfC{$A \infers A$}
\AxiomC{}\rulename{Id \pfline\label{pf:16b}}\UnaryInfC{$B \infers B$}
\AxiomC{}\rulename{Id \pfline\label{pf:15b}}\UnaryInfC{$C \infers C$}
\rulename{\RO \pfline\label{pf:14b}}\BinaryInfC{$B, C \infers B \ma C$}
\ruleId{$A$}
\ruleId{$B$}
\ruleId{$C$}
\rulename{\RO}\BinaryInfC{$B, C \infers B \ma C$}
\ruleId{$D$}
\rulename{\RO}\BinaryInfC{$B, C, D \infers (B \ma C) \ma D$}
\rulename{\RO}\BinaryInfC{$A, B, C, D \infers A \ma ((B \ma C) \ma D)$}
\rulename{\LO \pfline\label{pf:13b}}\UnaryInfC{$A, B \ma C, D \infers A \ma ((B \ma C) \ma D)$}
\rulename{Cut \pfline\label{pf:12b}}\BinaryInfC{$A, B, C, D \infers A \ma ((B \ma C) \ma D)$}
\rulename{Cut}\BinaryInfC{$A, B, C, D \infers A \ma ((B \ma C) \ma D)$}
\rulename{\LO}\UnaryInfC{$A \ma B, C, D \infers A \ma (B \ma C)$}
\rulename{\LO}\UnaryInfC{$(A \ma B) \ma C, D \infers A \ma (B \ma C)$}
\rulename{\LO}\UnaryInfC{$((A \ma B) \ma C) \ma D \infers (A \ma (B \ma C)) \ma D$.}%
\DisplayProof
$$\normalsize
Performing cut decomposition again at \pfref{pf:14b}, \pfref{pf:13b}, and \pfref{pf:12b}, and then eliminating the cuts with the identities \pfref{pf:17b}, \pfref{pf:16b}, and \pfref{pf:15b} transforms $\frames{\alpha_{A,B,C} \monoidal \id_D \circ \alpha_{A, B \monoidal C, D}}$ into the cut-free proof
\begin{equation}\label{equation:pentagon:normal-form}
\ruleId{$A$}
\ruleId{$B$}
\ruleId{$C$}
\rulename{\RO}\BinaryInfC{$B, C \infers B \ma C$}
\ruleId{$D$}
\rulename{\RO}\BinaryInfC{$B, C, D \infers (B \ma C) \ma D$}
\rulename{\RO}\BinaryInfC{$A, B, C, D \infers A \ma ((B \ma C) \ma D)$}
\rulename{\LO}\UnaryInfC{$(A \ma B), C, D \infers A \ma ((B \ma C) \ma D)$}
\rulename{\LO}\UnaryInfC{$(A \ma B) \ma C, D \infers A \ma ((B \ma C) \ma D)$}
\rulename{\LO}\UnaryInfC{$((A \ma B) \ma C) \ma D \infers A \ma ((B \ma C) \ma D)$.}\DisplayProof
\end{equation}

Finally, the left side of (\ref{eq:pentagon}) has the simplified proof
$$\scpf\footnotesize
\ruleId{$A$}
\ruleId{$B$}
\ruleId{$C$}
\rulename{\RO}\BinaryInfC{$B, C \infers B \ma C$}
\ruleId{$D$}
\rulename{\RO}\BinaryInfC{$B, C, D \infers (B \ma C) \ma D$}
\rulename{\RO \pfline\label{pf:1c}}\BinaryInfC{$A, B, C, D \infers A \ma ((B \ma C) \ma D)$}
\rulename{\LO \pfline\label{pf:2c}}\UnaryInfC{$(A \ma B), C, D \infers A \ma ((B \ma C) \ma D)$}
\rulename{\LO \pfline\label{pf:3c}}\UnaryInfC{$(A \ma B) \ma C, D \infers A \ma ((B \ma C) \ma D)$}
\rulename{\LO \pfline\label{pf:4c}}\UnaryInfC{$((A \ma B) \ma C) \ma D \infers A \ma ((B \ma C) \ma D)$}%
\ruleId{$A$}
\ruleId{$B$}
\ruleId{$C$}
\ruleId{$D$}
\rulename{\RO}\BinaryInfC{$C, D \infers C \ma D$}
\rulename{\RO}\BinaryInfC{$B, C, D \infers B \ma (C \ma D)$}
\rulename{\LO}\UnaryInfC{$B \ma C, D \infers B \ma (C \ma D)$}
\rulename{\LO}\UnaryInfC{$(B \ma C) \ma D \infers B \ma (C \ma D)$}
\rulename{\RO  \pfline\label{pf:7c}}\BinaryInfC{$A, (B \ma C) \ma D \infers A \ma (B \ma (C \ma D))$}
\rulename{\LO \pfline\label{pf:5c}}\UnaryInfC{$A \ma ((B \ma C) \ma D) \infers A \ma (B \ma (C \ma D))$}%
\rulename{Cut \pfline\label{pf:6c}}\BinaryInfC{$((A \ma B) \ma C) \ma D \infers A \ma (B \ma (C \ma D))$}
\DisplayProof
$$\normalsize
First moving the cut \pfref{pf:6c} past \pfref{pf:4c}, \pfref{pf:2c} and \pfref{pf:3c} allows us to use cut decomposition on \pfref{pf:1c} and \pfref{pf:5c}. Eliminating the redundant identity and cut, and commuting the remaining cut through a right-$\ma$ introduction \pfref{pf:7c} reduces the proof to
$$\footnotesize
\AxiomC{}\rulename{Id \pfline\label{pf:12c}}\UnaryInfC{$A \infers A$}
\AxiomC{}\rulename{Id \pfline\label{pf:13c}}\UnaryInfC{$B \infers B$}
\AxiomC{}\rulename{Id \pfline\label{pf:14c}}\UnaryInfC{$C \infers C$}
\rulename{\RO \pfline\label{pf:15c}}\BinaryInfC{$B, C \infers B \ma C$}
\ruleId{$D$}
\rulename{\RO \pfline\label{pf:8c}}\BinaryInfC{$B, C, D \infers (B \ma C) \ma D$}
\ruleId{$B$}
\ruleId{$C$}
\ruleId{$D$}
\rulename{\RO}\BinaryInfC{$C, D \infers C \ma D$}
\rulename{\RO}\BinaryInfC{$B, C, D \infers B \ma (C \ma D)$}
\rulename{\LO \pfline\label{pf:9c}}\UnaryInfC{$B \ma C, D \infers B \ma (C \ma D)$}
\rulename{\LO \pfline\label{pf:10c}}\UnaryInfC{$(B \ma C) \ma D \infers B \ma (C \ma D)$}%
\rulename{Cut \pfline\label{pf:11c}}\BinaryInfC{$B, C, D \infers (B \ma (C \ma D))$}
\rulename{\RO}\BinaryInfC{$A, B, C, D \infers A \ma (B \ma (C \ma D))$}
\rulename{\LO $\times 3$}\UnaryInfC{$((A \ma B) \ma C) \ma D \infers A \ma (B \ma (C \ma D)).$}\DisplayProof$$
Finally, we apply two cut decompositions. The first cut \pfref{pf:11c} decomposes against \pfref{pf:8c} and \pfref{pf:10c}. The higher of the resulting cuts decomposes with \pfref{pf:15c} and \pfref{pf:9c}. This produces three cuts which are then eliminated with the identities \pfref{pf:12c}, \pfref{pf:13c}, and \pfref{pf:14c}. The resulting cut-free proof is
$$\ruleId{$A$}
\ruleId{$B$}
\ruleId{$C$}
\ruleId{$D$}
\rulename{\RO}\BinaryInfC{$C, D \infers C \ma D$}
\rulename{\RO}\BinaryInfC{$B, C, D \infers B \ma (C \ma D)$}
\rulename{\RO}\BinaryInfC{$A, B, C, D \infers A \ma (B \ma (C \ma D))$}
\rulename{\LO $\times 3$ }\UnaryInfC{$((A \ma B) \ma C) \ma D \infers A \ma (B \ma (C \ma D)),$}\DisplayProof$$
which coincides precisely with the proof (\ref{equation:pentagon:normal-form}).
\end{proof}

This chain of lemmas proves the following result.

\begin{theorem}
    Let $\Phi$ be a countable set of atomic propositions and $\system{T}'$ be the associated logic. Define the category $\category{C}$ as above and assume that proofs related by transformations 
    (\ref{equation:system-T:vertical-cut-commutivity}, \ref{equation:system-T:decomposing-cut},
    \ref{equation:system-T:removing-1-cut},
    \ref{equation:system-T:l-and-cut1}, 
    \ref{equation:system-T:r-and-cut}, \ref{equation:system-T:right-identity-transformation}, \ref{equation:system-T:left-identity-transformation}, \ref{equation:system-T:left-right-and-transformation}, \ref{equation:system-T:left-1-right-and-transformation}) or their inverses are equivalent. Then $\category{C}$ is a monoidal category.
\end{theorem}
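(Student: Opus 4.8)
The plan is to assemble the preceding chain of lemmas, since the definition of a monoidal category is precisely a checklist of the data and coherence conditions those lemmas supply. First I would recall that the Proposition above already establishes that $\catC$ — objects are terms, morphisms are $\sim$-classes of proofs, composition is given by the cut rule, and identities are the identity-rule proofs — is a genuine category, provided the equivalence relation $\sim$ contains the transformations (\ref{equation:system-T:vertical-cut-commutivity}, \ref{equation:system-T:right-identity-transformation}, \ref{equation:system-T:left-identity-transformation}) and their inverses: these supply associativity of composition (via cut commutativity) and the left/right unit laws for $\id$ (via the identity transformations). Since the equivalence relation named in the theorem is generated by a superset of every transformation family used across the individual lemmas, each equality of morphisms proved in those lemmas remains valid in the present setting — enlarging $\sim$ can only identify more proofs, never separate ones already identified — so it suffices to invoke each lemma in turn under the single, largest equivalence relation.

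Next I would run through the monoidal data. The bifunctor lemma gives that $\_\monoidal\_$, defined by $A\monoidal B = A\ma B$ on objects and by the $\RO/\LO$ construction on morphisms, is a covariant bifunctor, using transformations (\ref{equation:system-T:decomposing-cut}, \ref{equation:system-T:l-and-cut1}, \ref{equation:system-T:r-and-cut}). Taking $I = \mt$, the morphisms $\lambda_A,\rho_A$ from (\ref{eq:rho-lambda}) and $\alpha_{A,B,C}$ built from the proof of Proposition \ref{proposition:system-T:algebra} provide the three required structure maps; the lemma on $\lambda,\rho$ shows they are natural isomorphisms (using (\ref{equation:system-T:decomposing-cut}, \ref{equation:system-T:removing-1-cut}, \ref{equation:system-T:left-identity-transformation})), Proposition \ref{proposition:system-T:algebra} together with the displayed naturality square shows $\alpha$ is a natural isomorphism, and the Triangle Rule and Pentagon Rule lemmas establish the two coherence diagrams (\ref{eqn:MoC:triangle}) and (\ref{eqn:MoC:pentagon}), consuming the remaining transformations (\ref{equation:system-T:left-right-and-transformation}, \ref{equation:system-T:left-1-right-and-transformation}). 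Collecting these facts, $\catC$ is a category equipped with a covariant bifunctor $\monoidal$, a unit object $I = \mt$, and natural isomorphisms $\alpha,\lambda,\rho$ satisfying the triangle and pentagon identities — that is, a monoidal category.

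The only genuine care needed is bookkeeping: one verifies that the union of the transformation families quoted in the Proposition and the five lemmas is exactly the list (\ref{equation:system-T:vertical-cut-commutivity}, \ref{equation:system-T:decomposing-cut}, \ref{equation:system-T:removing-1-cut}, \ref{equation:system-T:l-and-cut1}, \ref{equation:system-T:r-and-cut}, \ref{equation:system-T:right-identity-transformation}, \ref{equation:system-T:left-identity-transformation}, \ref{equation:system-T:left-right-and-transformation}, \ref{equation:system-T:left-1-right-and-transformation}) appearing in the theorem's hypothesis, so that one and the same equivalence relation simultaneously validates every lemma. I do not expect a real obstacle here; the substance of the argument resides in the individual lemmas — in particular the long cut-reduction computations behind the triangle and pentagon coherence relations — and this theorem is merely their consolidation.
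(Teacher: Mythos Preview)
Your proposal is correct and matches the paper's approach exactly: the paper simply states ``This chain of lemmas proves the following result'' and gives no further argument, so the theorem is indeed just the consolidation of the preceding Proposition (category structure), the bifunctor lemma, the $\lambda,\rho$ lemma, and the Triangle and Pentagon lemmas. Your additional bookkeeping remark---that the union of the transformation families invoked across those lemmas is precisely the list in the hypothesis, and that enlarging $\sim$ preserves all previously established identifications---is a welcome clarification that the paper leaves implicit.
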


%The symmetric monoidal category rules

For $\sysT$ we additionally have the Exchange rule, which allows us to reorder the terms in the antecedent of any inference. We consider use of the exchange rule as formalism, and hence to not introduce a set of proof transformation associated to using it, in our following proof we will be explicit about its use to better illustrate the role of exchange. Using this symmetry we define another natural isomorphism $\sigma$ just as in the proof of Proposition \ref{proposition:system-T:algebra}
$$\sigma_{A,B} = \frames{%
\rulename{Id}\AxiomC{}\UnaryInfC{$B \infers B$}
\rulename{Id}\AxiomC{}\UnaryInfC{$A \infers A$}
\rulename{R-$\muland$}\BinaryInfC{$B, A \infers B\ma A$}
\rulename{Exchange}\UnaryInfC{$A, B \infers B \ma A$}
\rulename{L-$\muland$}\UnaryInfC{$A\ma B \infers B \ma A$.}
\DisplayProof}
\in \hom(A \monoidal B, B \monoidal A).$$
This must satisfy the functorial property
\begin{equation*}
\begin{tikzcd}
 A \monoidal B \arrow[rr, "\sigma_{A,B}"] \arrow[dd, "\frames{\Pi_1} \monoidal \frames{\Pi_2}"'] & & B \monoidal A \arrow[dd, "\frames{\Pi_2} \monoidal \frames{\Pi_1}"] \\ 
 & & \\
 C \monoidal D \arrow[rr, "\sigma_{C,D}"'] & & D \monoidal C
\end{tikzcd}
\end{equation*}  
which is comparatively straightforward and so is left to the reader.

\begin{lemma}
    For $\sysT$, the following commutative diagram holds:
    $$\begin{tikzcd}
            A\monoidal B \arrow[rr, "\id_{A\monoidal B}"'] \arrow[rd, "\sigma_{A,B}"'] & & A \monoidal B. \\
            & B\monoidal A \arrow[ru, "\sigma_{B,A}"']
    \end{tikzcd}$$
    In particular, $\sigma$ is a natural isomorphism.
\end{lemma}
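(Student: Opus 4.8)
The diagram to be verified is exactly the identity $\sigma_{B,A}\circ\sigma_{A,B}=\id_{A\monoidal B}$ in $\catC$; together with the functorial square stated just above and the evident $A\leftrightarrow B$ symmetry (which yields $\sigma_{A,B}\circ\sigma_{B,A}=\id_{B\monoidal A}$), this shows $\sigma$ is a natural isomorphism with $\sigma_{A,B}^{-1}=\sigma_{B,A}$. So the plan is to reduce the composite proof representing $\sigma_{B,A}\circ\sigma_{A,B}$ to the proof of Proposition~\ref{systemT:identity} that defines $\id_{A\monoidal B}$, using the proof transformations exactly as in the triangle and pentagon lemmas.

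First I would write out the composite explicitly: it is the cut along the term $B\ma A$ of the proof of $A\ma B\infers B\ma A$ defining $\sigma_{A,B}$ (which ends in an L-$\ma$, preceded by an Exchange, preceded by an R-$\ma$ joining the identities $B\infers B$ and $A\infers A$) into the proof of $B\ma A\infers A\ma B$ defining $\sigma_{B,A}$ (similarly built from $A\infers A$ and $B\infers B$). Next I would push the final cut upward: commute it past the bottom L-$\ma$ of the left premise with transformation~(\ref{equation:system-T:l-and-cut1}), and past the Exchange — the latter being free, since we treat exchange as bookkeeping on multisets and it commutes with every rule. This leaves the left premise ending in the R-$\ma$ that introduced $B\ma A$ while the right premise still ends in the L-$\ma$ that eliminates $B\ma A$, so the cut is now in the exact shape required by cut decomposition~(\ref{equation:system-T:decomposing-cut}). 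Applying it splits the cut on $B\ma A$ into a cut on $B$ and a cut on $A$, each against one of the identity proofs supplied by $\sigma_{A,B}$; these are then removed by the identity-elimination transformations (inverses of~(\ref{equation:system-T:left-identity-transformation})) — extended to the derived identities on composite $A,B$ either by induction on term structure (commuting the cut up the L-$\ma$/R-$\ma$ skeleton of $\id_A$, $\id_B$ via~(\ref{equation:system-T:l-and-cut1},\ref{equation:system-T:decomposing-cut}) down to atomic Id rules) or by appealing to the unit laws for composition already established for $\catC$.

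Discarding the now-vacuous exchange steps, what remains is the body of $\sigma_{B,A}$ with its outer L-$\ma$ restored, namely
\begin{prooftree}
\AxiomC{$A \infers A$}
\AxiomC{$B \infers B$}
\rulename{\RO}\BinaryInfC{$A, B \infers A \ma B$}
\rulename{\LO}\UnaryInfC{$A \ma B \infers A \ma B$}
\end{prooftree}
which is exactly the proof of $A\ma B\infers A\ma B$ obtained from Lemma~\ref{systemT:order}/Proposition~\ref{systemT:identity} and hence represents $\id_{A\monoidal B}$. This gives the commuting triangle; the naturality claim then follows as noted above.

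I expect the main obstacle to be the careful handling of Exchange: because we have deliberately attached no proof-transformation to it, every step that "commutes the cut past an exchange" or "drops a vacuous exchange" has to be justified by the multiset reading of antecedents, and one must check this is consistent throughout the reduction. The secondary subtlety — that $A$ and $B$ may be composite, so $\id_A$ and $\id_B$ are the derived identity proofs rather than instances of the Id rule — is the reason the identity-elimination step above needs the inductive (or unit-law) refinement rather than a single application of~(\ref{equation:system-T:left-identity-transformation}).
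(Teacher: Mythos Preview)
Your proposal is correct and follows essentially the same route as the paper: write out the composite proof, commute the cut past the outer L-$\ma$ and the Exchange using (\ref{equation:system-T:l-and-cut1}), apply cut decomposition (\ref{equation:system-T:decomposing-cut}), eliminate the resulting identity--cut pairs via (\ref{equation:system-T:left-identity-transformation}), and observe that the two remaining Exchange steps cancel, leaving precisely the R-$\ma$/L-$\ma$ proof of $A\ma B\infers A\ma B$ that represents $\id_{A\monoidal B}$. Your two flagged subtleties---that Exchange is treated purely as multiset bookkeeping, and that the identity eliminations must work for the derived identities on composite $A,B$---are exactly the points the paper handles implicitly (calling the exchanges ``purely formalism'' and invoking the Id rule via Proposition~\ref{systemT:identity}), so your caution there is well placed but does not indicate any gap.
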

\begin{proof}
We need to show that
\begin{equation}
\label{eq:inverse}
\sigma_{A,B} \circ \sigma_{B,A} = \id_{A \monoidal B}.
\end{equation}
Note that as the corresponding term for $A \monoidal B$ is $A \ma B$ which is not a single atomic proposition, the identity rule for $A \monoidal B$ is equivalent to composing the tensor of the identity rules for the individual propositions making up the term i.e. for $A$ and $B$ in this case as shown below.
\begin{equation*}
\id_{A \monoidal B} = \id_{A} \monoidal \id_B = \frames{%
\ruleid{$A$}
\ruleid{$B$}
\rulename{\RO}\BinaryInfC{$A, B \infers A \ma B$}
\rulename{\LO}\UnaryInfC{$A \ma B \infers A \ma B$}\DisplayProof.}
\end{equation*}

Now consider $\sigma_{A,B} \circ \sigma_{B,A}$. By construction the associated proof is
\begin{prooftree}
\scpf
\ruleid{$B$}
\ruleid{$A$}
\rulename{\RO}
\BinaryInfC{$B, A \infers B \ma A$}
\rulename{Ex \pfline\label{pf:1e}}\UnaryInfC{$A, B \infers B \ma A$}
\rulename{\LO \pfline\label{pf:2e}}\UnaryInfC{$A \ma B \infers B \ma A$}
\ruleid{$A$}
\ruleid{$B$}
\rulename{\RO}\BinaryInfC{$A, B \infers A \ma B$}
\rulename{Ex}\UnaryInfC{$B, A \infers A \ma B$}\rulename{\LO \pfline\label{pf:3e}}\UnaryInfC{$B \ma A \infers A \ma B$}
\rulename{Cut \pfline\label{pf:4e}}\BinaryInfC{$A \ma B \infers A \ma B$}
\end{prooftree}
Moving the cut~\pfref{pf:4e} past~\pfref{pf:2e} and~\pfref{pf:1e} directly gives 
\begin{prooftree}
\ruleid{$B$}
\ruleid{$A$}
\rulename{\RO \pfline\label{pf:7e}}\BinaryInfC{$B, A \infers B \ma A$}
\ruleid{$A$}
\ruleid{$B$}
\rulename{\RO}\BinaryInfC{$A, B \infers A \ma B$}
\rulename{Ex}\UnaryInfC{$B, A \infers A \ma B$}
\rulename{\LO \pfline\label{pf:6e}}\UnaryInfC{$B \ma A \infers A \ma B$}
\rulename{Cut \pfline\label{pf:5e}}\BinaryInfC{$B, A \infers A \ma B$}
\rulename{Ex}\UnaryInfC{$A, B \infers A \ma B$}
\rulename{\LO}\UnaryInfC{$A \ma B \infers B \ma A$}
\end{prooftree}
Now we apply cut decomposition to \pfref{pf:5e} using \pfref{pf:7e} and \pfref{pf:6e}, and remove the redundant identities and cuts. This gives the proof
\begin{prooftree}
\ruleid{$A$}\ruleid{$B$}\rulename{\RO}\BinaryInfC{$A, B \infers A \ma B$}\rulename{Ex \pfline\label{pf:8e}}\UnaryInfC{$B, A \infers A \ma B$}\rulename{Ex \pfline\label{pf:9e}}\UnaryInfC{$A, B \infers A \ma B$}\rulename{\LO}\UnaryInfC{$A \ma B \infers B \ma A$}
\end{prooftree}
Notice that both uses of the exchange rules \pfref{pf:8e} and \pfref{pf:9e} are purely formalism and cancel producing the same proof as $\id_{A \monoidal B}$ above.
\end{proof}

\begin{lemma}
    For $\sysT$, the following commutative diagram holds:
    $$\begin{tikzcd}
            A\monoidal I \arrow[rr, "\sigma_{A,I}"'] \arrow[rd, "\rho_A"'] & & I \monoidal A \arrow[ld, "\lambda_A"]\\
            & A
    \end{tikzcd}$$
\end{lemma}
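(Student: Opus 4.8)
The plan is to establish $\lambda_A \circ \sigma_{A,\multrue} = \rho_A$ by writing out the composite proof and reducing it, via the proof transformations of Section~\ref{section:transformations}, to the cut-free proof that defines $\rho_A$, exactly in the style of the triangle and pentagon lemmas. Recall that $\sigma_{A,\multrue}$ is the proof of $A\muland\multrue \infers \multrue\muland A$ built from $\multrue\infers\multrue$ (obtained via Proposition~\ref{systemT:identity}, i.e. R-$\multrue$ followed by L-$\multrue$) and $A\infers A$ using R-$\muland$, Exchange, and L-$\muland$; and $\lambda_A$ is the proof of $\multrue\muland A \infers A$ obtained from $A\infers A$ by L-$\multrue$ and L-$\muland$. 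The morphism $\lambda_A\circ\sigma_{A,\multrue}$ is then the image of the cut of these two proofs, a proof of $A\muland\multrue\infers A$.

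First I would commute that cut upward past the L-$\muland$ introduction that forms $A\muland\multrue$ in $\sigma_{A,\multrue}$, using transformation~(\ref{equation:system-T:l-and-cut1}) and reinstating the L-$\muland$ below the cut. Next, since the Exchange rule is pure formalism, I slide it below the cut as well, so that the cut now stands between $\multrue, A \infers \multrue\muland A$ (whose last rule is the R-$\muland$ forming the cut term) and $\multrue\muland A \infers A$ (whose last rule is the L-$\muland$ forming the cut term). This is precisely the situation in which cut decomposition~(\ref{equation:system-T:decomposing-cut}) applies, and doing so splits the cut into a cut on the subterm $A$ and a cut on the subterm $\multrue$.

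Both residual cuts are against identities. The cut on $A$ meets the (generalized) identity $A\infers A$ and is removed by the left-identity transformation~(\ref{equation:system-T:left-identity-transformation}); for non-atomic $A$ one either invokes Theorem~\ref{thm:cut_elim} or observes that this transformation extends term-wise through the Proposition~\ref{systemT:identity} proof. The cut on $\multrue$ meets $\multrue\infers\multrue$; after sliding its trailing L-$\multrue$ below the cut what remains matches transformation~(\ref{equation:system-T:removing-1-cut}), so the cut disappears, leaving the single L-$\multrue$ application on $A\infers A$. Restoring the L-$\muland$ from the first step and absorbing the formal Exchange into the position at which L-$\multrue$ inserts $\multrue$, the reduced proof is exactly Id on $A$, then L-$\multrue$, then L-$\muland$, which is the defining proof of $\rho_A$; hence $\lambda_A\circ\sigma_{A,\multrue}=\rho_A$.

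I expect the main obstacle to be bookkeeping rather than anything conceptual: keeping track of the order of terms in the antecedents through the commutation past Exchange and through cut decomposition, so that after identity elimination one lands on exactly the $\rho_A$ proof rather than a cosmetically different one, and handling cleanly both the $\multrue$-cut (whose left proof carries an extra L-$\multrue$ not present in the bare transformation~(\ref{equation:system-T:removing-1-cut})) and the non-atomic case of $A\infers A$.
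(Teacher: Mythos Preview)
Your proposal is correct and follows essentially the same route as the paper: commute the cut past the \LO\ and Exchange in $\sigma_{A,\multrue}$, apply cut decomposition~(\ref{equation:system-T:decomposing-cut}), eliminate the two residual cuts, and read off $\rho_A$. The only cosmetic difference is that the paper treats $\multrue\infers\multrue$ as an ``Id'' and removes both residual cuts uniformly via~(\ref{equation:system-T:left-identity-transformation}), whereas you unpack the $\multrue$-identity and invoke~(\ref{equation:system-T:removing-1-cut}) for that branch; either works.
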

\begin{proof}
We begin with the expression
$$\scpf
\sigma_{A,I} \circ \lambda _A = \frames{
\rulename{Id}\AxiomC{}\UnaryInfC{$\mt \infers \mt$}
\rulename{Id}\AxiomC{}\UnaryInfC{$A \infers A$}
\rulename{\RO  \pfline\label{pf:5d}}\BinaryInfC{$\mt, A \infers \mt \ma A$}
\rulename{Ex \pfline\label{pf:4d}}\UnaryInfC{$A, \mt \infers \mt \ma A$}
\rulename{\LO \pfline\label{pf:1d}}\UnaryInfC{$A \ma \mt \infers \mt \ma A$}
\rulename{Id}\AxiomC{}\UnaryInfC{$A \infers A$}
\rulename{\LI}\UnaryInfC{$\mt, A \infers A$}
\rulename{\LO \pfline\label{pf:2d}}\UnaryInfC{$\mt \ma A \infers A$}
\rulename{Cut \pfline\label{pf:3d}}\BinaryInfC{$A \ma \mt \infers A$.}\DisplayProof}
$$
Commuting the cut \pfref{pf:3d} through the left-$\ma$ introduction \pfref{pf:1d} with (\ref{equation:system-T:l-and-cut1}) and trivially through the exchange \pfref{pf:4d}, we then perform cut decomposition (\ref{equation:system-T:decomposing-cut}) with \pfref{pf:5d} and \pfref{pf:2d}. This reduces our form to
$$\sigma_{A,I} \circ \lambda _A = \frames{
\rulename{Id}\AxiomC{}\UnaryInfC{$\mt \infers \mt$}
\rulename{Id}\AxiomC{}\UnaryInfC{$A \infers A$}
\rulename{Id}\AxiomC{}\UnaryInfC{$A \infers A$}
\rulename{\LI}\UnaryInfC{$\mt, A \infers A$}
\rulename{Cut}\BinaryInfC{$\mt, A \infers A$}
\rulename{Cut}\BinaryInfC{$\mt, A \infers A$}
\rulename{Ex}\UnaryInfC{$A, \mt \infers A$}
\rulename{\LO}\UnaryInfC{$A \ma \mt \infers A$}
\DisplayProof}
$$
Eliminating the identities and cuts using (\ref{equation:system-T:left-identity-transformation}) gives us
$$\sigma_{A,I} \circ \lambda _A = \frames{
\rulename{Id}\AxiomC{}\UnaryInfC{$A \infers A$}
\rulename{\LI}\UnaryInfC{$\mt, A \infers A$}
\rulename{Ex}\UnaryInfC{$A, \mt \infers A$}
\rulename{\LO}\UnaryInfC{$A \ma \mt \infers A$}
\DisplayProof} = \rho_A.
$$
Note that the slight difference in this expression for $\rho_A$ from equation~\eqref{eq:rho-lambda} comes as we use the system $\sysT$'s rules to define it while the latter follows from the rules of system $\sysT'$. 
\end{proof}

\begin{lemma}[Hexagon rule]
For $\sysT$, the following commutative diagram holds:
    $$\begin{tikzcd}
            & (A\monoidal B) \monoidal C \arrow[rr, "\sigma_{A,B} \monoidal \id_C"] \arrow[ld, "\alpha_{A,B,C}"'] & & (B \monoidal A) \monoidal C \arrow[rd, "\alpha_{B,A,C}"] & \\
            A \monoidal (B \monoidal C) \arrow[rd, "\sigma_{A, B \monoidal C}"'] & & & & B \monoidal (A\monoidal C) \arrow[ld, "\id_B \monoidal \sigma_{A,C}"]\\
            & (B \monoidal C) \monoidal A \arrow[rr, "\alpha_{B,C,A}"'] & & B \monoidal (C \monoidal A) &
    \end{tikzcd}$$
\end{lemma}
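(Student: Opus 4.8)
The plan is to mimic the arguments already used for the triangle and pentagon rules. We must establish the identity
\[ (\id_B \monoidal \sigma_{A,C}) \circ \alpha_{B,A,C} \circ (\sigma_{A,B} \monoidal \id_C) = \alpha_{B,C,A} \circ \sigma_{A, B \monoidal C} \circ \alpha_{A,B,C} \]
of morphisms in $\hom((A \monoidal B) \monoidal C, B \monoidal (C \monoidal A))$. Each side unfolds into an explicit proof tree built from the identity leaves on $A$, $B$, $C$ together with the right- and left-$\ma$ introductions and exchanges defining $\alpha$ and $\sigma$, glued together by two applications of the cut rule. The strategy is to reduce each such tree to a cut-free proof by repeatedly applying cut decomposition (\ref{equation:system-T:decomposing-cut}), the cut/left-$\ma$ commutations (\ref{equation:system-T:l-and-cut1}, \ref{systemT:l-and-cut2}), the cut/right-$\ma$ commutation (\ref{equation:system-T:r-and-cut}), the left--right introduction commutations (\ref{equation:system-T:left-right-and-transformation}), and identity--cut elimination (\ref{equation:system-T:left-identity-transformation}); exchange steps are carried along as pure formalism and, exactly as in the proof that $\sigma$ is self-inverse, commute trivially past a cut. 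Since any cut-free proof of $(A\ma B)\ma C \infers B\ma(C\ma A)$ here assembles the consequent from the atomic identities, permutes the antecedent by exchange into the order demanded by the left-$\ma$ rules, and then collapses $(A\ma B)\ma C$, the two reduced proofs are forced to coincide, and verifying that coincidence is the content of the lemma.

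Concretely, for the right-hand side I would expand $\alpha_{A,B,C}$ and $\sigma_{A,B\ma C}$, push the lower cut upward past the left-$\ma$ introductions that rebuild $(A\ma B)\ma C$ and past the exchange, then use cut decomposition to split the cut on $A\ma(B\ma C)$ into a cut on $A$ and a cut on $B\ma C$, and eliminate the resulting cuts against the atomic identities supplied by $\alpha_{A,B,C}$; repeating this with $\alpha_{B,C,A}$ leaves a cut-free proof in which a single exchange reorders the antecedent $B,C,A$ (produced by the two right-$\ma$ introductions that assemble $B\ma(C\ma A)$) into $A,B,C$ before the two left-$\ma$ introductions collapse $(A\ma B)\ma C$. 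The left-hand side is handled identically, the only difference being that it carries two exchange steps, one inherited from $\sigma_{A,B}$ and one from $\sigma_{A,C}$, which after the reductions land at the same place in the tree.

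The main obstacle is the bookkeeping: as in the pentagon proof, each composite expands to a proof with several identity leaves, a handful of $\ma$-introductions, and two cuts, and the reductions must be sequenced carefully so that every cut can be driven all the way down to the leaves. The genuinely new point is tracking the exchange rules: the right-hand side contributes one exchange and the left-hand side two, so after the non-exchange skeletons of the two cut-free proofs have been matched it remains to check that the pair of exchanges on the left realizes the same permutation of $\{A,B,C\}$ as the single exchange on the right (namely the reordering that sends the assembled antecedent order back to the order consumed by the left-$\ma$ rules). Because the paper treats exchange as formalism, once the non-exchange parts agree this observation closes the argument, and the hexagon diagram (\ref{eqn:SMC-hexagon}) commutes.
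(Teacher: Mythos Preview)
Your proposal is correct and follows exactly the approach the paper indicates: the paper does not give a detailed proof of the hexagon rule but simply remarks that it is ``similar in form to that of the Pentagon Rule, except with appropriate use of the Exchange rule as needed,'' and leaves it to the reader. Your plan---expand both composites, push cuts past the left-$\ma$ introductions and exchanges, apply cut decomposition and identity--cut elimination to reach cut-free normal forms, and then verify that the two exchange steps on the left realize the same permutation of $A,B,C$ as the single exchange on the right---is precisely that argument carried out, and your observation about matching the permutations is the one genuinely new ingredient beyond the pentagon case.
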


As the proof of the Hexagon Rule is similar in form to that of the Pentagon Rule, except with appropriate use of the Exchange rule as needed, this proof is left to the reader.

\begin{theorem}[Coherence theorem]
Let $\Phi$ be a countable set of atomic propositions and $\system{T}$ be the associated logic. Then the monoidal category $\category{C}$ constructed for $\system{T}$ above is a symmetric monoidal category. Moreover the unique functor from the free symmetric monoidal category on $\Phi$ is fully faithful.
\end{theorem}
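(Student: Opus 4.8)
The plan is to treat the two assertions in turn, leaning on the cut-elimination theorem (Theorem~\ref{thm:cut_elim}) and the normal form (Corollary~\ref{corollary:normal-form}).

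First, the symmetric monoidal structure is essentially already assembled. The monoidal-category theorem just proved equips $\catC$ with $\otimes$, unit $I=\mt$, and natural isomorphisms $\alpha,\lambda,\rho$ obeying triangle~(\ref{eqn:MoC:triangle}) and pentagon~(\ref{eqn:MoC:pentagon}); the three lemmas immediately before the present theorem show that the isomorphism $\sigma$ built from the exchange rule is natural and obeys the inverse law~(\ref{eqn:SMC-inverse}), the unit compatibility~(\ref{eqn:SMC-symmetry}), and the hexagon~(\ref{eqn:SMC-hexagon}). Hence $(\catC,\otimes,I,\alpha,\lambda,\rho,\sigma)$ is symmetric monoidal, and by the universal property of the free symmetric monoidal category $\mathcal F(\Phi)$ the object map $P\mapsto P$ extends to a symmetric monoidal functor $F\colon\mathcal F(\Phi)\to\catC$. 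By Mac Lane's coherence theorem~\cite{maclane1963natural} I may present $\mathcal F(\Phi)$ as the permutation category $\mathbf B_\Phi$: objects are finite words $\vec P=(P_1,\dots,P_n)$ over $\Phi$, and $\hom(\vec P,\vec Q)$ is the set of bijections $\pi$ with $Q_{\pi(i)}=P_i$; here $F\vec P=(\cdots(P_1\otimes P_2)\otimes\cdots)\otimes P_n$, and $F$ of the empty word is $\mt$.

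For faithfulness I would build a ``strand-tracking'' functor $U\colon\catC\to\mathbf B_\Phi$ sending a term to the left-to-right list of its atomic subterms and a proof to the bijection it induces on those atoms: Id gives an identity, L-$\mt$ and R-$\mt$ act trivially, L-$\otimes$ and R-$\otimes$ just reassociate the concatenated lists, Ex performs the corresponding block transposition, and Cut composes (partial) bijections. The key point is that $U$ respects all proof transformations imposed on $\catC$ — for each of (\ref{equation:system-T:vertical-cut-commutivity}), (\ref{equation:system-T:decomposing-cut}), (\ref{equation:system-T:removing-1-cut}), (\ref{equation:system-T:l-and-cut1}), (\ref{equation:system-T:r-and-cut}), (\ref{equation:system-T:right-identity-transformation}), (\ref{equation:system-T:left-identity-transformation}), (\ref{equation:system-T:left-right-and-transformation}), (\ref{equation:system-T:left-1-right-and-transformation}), together with those used for $\sigma$ and the hexagon, the two sides merely reorder or delete structural steps and therefore induce the same bijection — so $U$ descends to $\catC$. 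Since $U\circ F$ is the identity-on-objects functor $\mathbf B_\Phi\to\mathbf B_\Phi$ sending a permutation to itself, it is injective on hom-sets, whence so is $F$. Essential surjectivity of $F$ is then immediate from Corollary~\ref{corollary:normal-form}, which exhibits every term as isomorphic in $\catC$ (via unitors and associators) to some $F\vec P$.

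For fullness, take $[\Pi]\in\hom_\catC(F\vec P,F\vec Q)$; by Theorem~\ref{thm:cut_elim}, whose normalization uses only the transformations already imposed, $[\Pi]$ has a cut-free representative $\Pi'$. Because $F\vec P$ and $F\vec Q$ contain no $\mt$, a short induction shows $\Pi'$ uses neither L-$\mt$ nor R-$\mt$: a $\mt$ introduced on the antecedent (resp.\ consequent) side could only disappear by being absorbed into a tensor through L-$\otimes$ (resp.\ R-$\otimes$), which would leave a $\mt$-subterm in $F\vec P$ (resp.\ $F\vec Q$). Thus $\Pi'$ is built from Id, L-$\otimes$, R-$\otimes$ and Ex alone, so it assembles $\vec Q$ from $\vec P$ by a permutation $\pi=U(\Pi')$; unwinding the definitions of $\otimes$ on morphisms and of $\alpha,\lambda,\rho,\sigma$, such a proof is $\sim$-equivalent to $F(\pi)$, and $[\Pi]$ lies in the image of $F$. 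I expect the main obstacle to be precisely this last coherence point — that the relation generated by the listed transformations is not too coarse, i.e.\ that $U$ is well defined and that every cut-free proof between $\mt$-free words agrees modulo $\sim$ with $F$ of its induced permutation — which amounts to checking that each transformation of Section~\ref{section:transformations} collapses to an identity of composites of transpositions, with care needed in how Cut composes the partial atom-bijections and how L-$\otimes$/R-$\otimes$ reassociate the concatenated word.
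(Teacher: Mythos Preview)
Your approach is correct in outline but takes a genuinely different route from the paper. The paper presents the free symmetric monoidal category $\category{F}(\Phi)$ with \emph{multisets} of atoms as objects (so that morphisms are essentially identities between equal multisets), declares the comparison functor to be ``just the identity'' and hence faithful, and then argues fullness in one stroke by asserting that every proof of $\Gamma\infers A$ reduces to a unique cut-free normal form via Corollary~\ref{corollary:normal-form} and commutation of the $\otimes$- and $\mt$-rules. You instead use the standard \emph{permutation category} $\mathbf{B}_\Phi$ (words and label-preserving bijections), construct an explicit retract $U\colon\catC\to\mathbf{B}_\Phi$ by tracking how each rule permutes atomic occurrences, and obtain faithfulness from $U\circ F=\id$ before treating fullness separately through cut-elimination.

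Your route is more work but also more robust: the back-functor $U$ makes faithfulness genuinely independent of any normal-form claim, and it cleanly separates the question of whether the imposed equivalence is \emph{not too coarse} (well-definedness of $U$) from whether it is \emph{not too fine} (fullness). The paper's argument is terser but leans on an unproven uniqueness-of-normal-form for cut-free proofs, and its multiset model tacitly identifies $\sigma_{P,P}$ with $\id_{P\otimes P}$, which your permutation presentation keeps visible. One small point: you mention essential surjectivity of $F$, but the theorem only asserts full faithfulness, so that remark is extraneous. The obstacle you flag---that each proof transformation of Section~\ref{section:transformations} induces the same atom-bijection on both sides, and that any cut-free, $\mt$-free proof is $\sim$-equivalent to $F$ of its induced permutation---is exactly the substantive content, and the paper does not discharge it in any more detail than you do.
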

\begin{proof}
    Recall that the free symmetric monoidal category on $\Phi$, which we denote as $\category{F}(\Phi)$, has as its objects multisets of elements of $\Phi$ with monoidal product given by the union and identity by the empty multiset. Its morphisms are those induced by taking the multiset of identity morphisms on the constituent elements. This is somewhat ill defined when it comes to the empty multiset $\emptyset$ and so we pose a unique ``identity'' morphism $\emptyset^{\monoidal a} \to \emptyset^{\monoidal b}$ for any $a,b \geq 0$. Modulo this, a morphism only exists between objects whose multisets consist of the same atomic propositions.
    
    Clearly the unique functor $\category{F}(\Phi) \to \category{C}$ is just the identity, which is faithful. To show full faithfulness, we need to show that each proof $\Gamma \infers A$ in $\system{T}$ produces a morphism $\frames{\Gamma \infers A}$ that is in the image of this functor. That is, $\frames{\Gamma \infers A} = \id_A$.
    
    This follows from Corollary \ref{corollary:normal-form}. By that result, the inference $\Gamma \infers A$ is just a repackaging of $P_1 \ma \cdots P_m \ma \mt^{\ma a} \infers P_1 \ma \cdots P_m \ma \mt^{\ma b}$ for $P_j \in \Phi$. Any proof of this is unique, after passing to a cut-free proof and then commuting left-$\ma$ and right-$\ma$ and left-$\mt$ and right-$\mt$ into a normal form. Therefore this proof must be the image of the unique morphism in the free category $\category{F}(\Phi)$.
\end{proof}

\begin{example}\label{example:coherence-category}
Continuing examples \ref{example:coherence} and \ref{example:coherence-algebra}, we again defer a full treatment of categorical semantics of polynomial systems over $\sysT$ to later work. Yet, at this point, we note that every atomic proposition of the system is disposable: $Q(\lambda) \infers \mt$ for all $\lambda \in [0,1]$. We, therefore, require that in any categorical model of this system $I = \frames{\mt}$  be terminal i.e., for all $\lambda \in [0,1]$, there exists a single morphism in $\hom( Q(\lambda), I)$.

Similarly, $Q(1) \infers Q(\lambda)$ induces a structure on any categorical model similar to an initial object: there exists an object $Z$ such that for any object $A$ we have $\hom(Z^{\otimes m}, A)\not= \emptyset$ for some $m\geq 0$.
\end{example}

\appendix
\bibliography{all}

\end{document}